\documentclass{article}

\usepackage[english]{babel}
\usepackage[utf8]{inputenc}
\usepackage{authblk}
\usepackage{amsmath,amssymb}
\usepackage{parskip}
\usepackage{mathrsfs}
\usepackage{graphicx}
\usepackage{mathtools}
\usepackage{listings}
\usepackage{enumerate}
\usepackage[hidelinks]{hyperref} 
\usepackage{amsthm}
\usepackage{amsmath}
\usepackage{indentfirst} 
\setlength{\parindent}{2em}
\newtheorem{theorem}{Theorem}
\newtheorem{lemma}{Lemma}
\newtheorem{proposition}{Proposition}

\newtheorem{remark}{Remark}
\usepackage{enumitem}
\usepackage[font=small,labelfont=bf]{caption}
\setlist[enumerate]{leftmargin=*}
\usepackage{stmaryrd}
\newcommand*{\ldblbrace}{\{\mskip-5mu\{}
\newcommand*{\rdblbrace}{\}\mskip-5mu\}}
\usepackage{caption}
\usepackage{subcaption}
\DeclareMathOperator{\Div}{div}
\DeclareMathOperator*{\argmin}{argmin}
\usepackage{todonotes}
\usepackage{titling}

\setlength{\droptitle}{-5em}  
% Margins
\usepackage[top=2.5cm, left=2.5cm, right=2.5cm, bottom=4.0cm]{geometry}
% Colour table cells
\usepackage{xcolor}
\usepackage{float}
\usepackage{bm}
\allowdisplaybreaks[4]
\newcommand*{\dif}{\mathop{}\!\mathrm{d}}
% Get larger line spacing in table

         % = `top' strut
         % = `top' strut
   % = `bottom' strut
\newcommand{\RN}[1]{%
  \textup{\uppercase\expandafter{\romannumeral#1}}%
}

\title{Modeling and computation of the effective elastic behavior of parallelogram origami metamaterials}
\author{Hu Xu$^{1}$, Fr\'ed\'eric Marazzato$^{2}$, Paul Plucinsky$^{1,\ast}$}
\date{\small $^1$Aerospace and Mechacanical Engineering, University of Southern California, Los Angeles, CA 90014, USA \\ $^2$Department of Mathematics, The University of Arizona, Tucson, AZ 85721, USA \\  $^\ast$Corresponding author email: plucinsk@usc.edu \\ [2ex] \normalsize \today} % delete this line to display the current date

%%% BEGIN DOCUMENT
\begin{document}

\maketitle
%\tableofcontents

\begin{abstract}
\noindent Origami metamaterials made of repeating unit cells of parallelogram panels joined at folds dramatically change their shape through a collective motion of their cells. Here we develop an effective elastic model and numerical method to study the large deformation response of these metamaterials under a broad class of loads. The model builds on an effective plate theory derived in our prior work  \cite{xu2024derivation}. The theory captures the overall shape change of all slightly stressed  parallelogram origami deformations through nonlinear geometric compatibility constraints  that couple the origami's (cell averaged) effective deformation to an auxiliary angle field quantifying its cell-by-cell actuation. It also assigns to each such origami deformation a plate energy associated to these effective fields. Seeking a constitutive model that is faithful to the theory but also practical to simulate, we relax the geometric constraints via corresponding elastic energy penalties; we also simplify the plate energy density to   embrace its essential character as a regularization to the geometric  penalties. The resulting model for parallelogram origami  is  a generalized elastic continuum that is nonlinear in the effective  deformation gradient and angle field  and regularized by high-order gradients thereof.   We provide a finite element formulation of this model using the $C^0$ interior penalty method to handle second gradients of deformation,  and implement it using the open source computing platform \texttt{Firedrake}. We end by using the model and numerical method to study  two canonical  parallelogram origami patterns, in Miura and Eggbox origami, under a variety of loading conditions.   
\end{abstract}

\section{Introduction}

Flexible mechanical metamaterials are cell-based patterns of stiff panels, connected at flexible hinges or folds, designed to achieve large overall shape-change at little overall stress. They are a promising class of  materials for many applications, including for locomotion  and grasping in soft robotics \cite{kim2018printing,rafsanjani2019programming,yang2021grasping} and for the deployment of medical devices, space structures, and habitats \cite{kuribayashi2006self,melancon2021multistable,velvaluri2021origami,zirbel2013accommodating}. However, their mechanical behavior arises from a multiscale coupling spanning collective cell-wise interactions, large panel rotations, and localized distortion at the hinges or folds that is challenging to model and simulate. Here we study   \textit{parallelogram origami} ---   a large class  of flexible mechanical metamaterials that exemplifies these challenges. Our goal is a modeling and computational framework that predicts the  bulk macroscale response of these metamaterials under  a broad range of loads.

Parallelogram origami is a family of mechanical metamaterials made of repeating unit cells of four
parallelograms panels joined at folds. The most recognizable example is the Miura origami. Introduced as a means for packaging and deploying large space membranes  decades ago \cite{koryo1985method}, this pattern has since become an archetype for exploring functionality enabled by shape-morphing. It is the underlying motif in the inverse design of  targeted shapes on morphing \cite{dang2022inverse,dudte2016programming,sardas2024continuum}, can be used for self-folding at small and large scales \cite{na2015programming,tolley2014self}, and achieves    auxetic compression and saddle-like  bending modes  under loads \cite{schenk2013geometry, wei2013geometric}. Other well-studied examples of parallelogram origami include the Eggbox and Morph patterns \cite{pratapa2019geometric,schenk2011origami}. These patterns are distinct from the Miura in many ways.  Both are (generically) non-Euclidean in that their vertex sector angles do not sum to $ 2\pi$. In addition, Eggbox has a positive Poisson's ratio and is thus not an auxetic in contrast to the Miura. It also takes  a  cap-like shape when bent, rather than a saddle shape.  The Morph, meanwhile, is distinguished by its versatility. It is either Miura-like  in its mechanical behavior or Eggbox-like depending   on the choice of mountain-valley assignment. Importantly, these patterns share one  key mechanical property: Each can fold  as a periodic mechanism \cite{pellegrino1986matrix} or floppy mode \cite{lubensky2015phonons}, i.e.,  a shape-changing continuous motion involving  a  periodic and rigid rearrangement of the panels about the flexible folds.  It turns out that all parallelogram origami patterns are capable of folding as a periodic mechanism \cite{mcinerney2022discrete,nassar2022strain,xu2024derivation}.  This property enables these patterns to dramatically change their shape under a wide range of loads, leading to a rich variety of \textit{soft modes of deformation} (e.g., Fig.\;\ref{Fig:IntroFig}) whose stored elastic energy is  far less than bulk, and inviting basic questions on how to best model their elastic behavior.

\begin{figure}[t!]
\centering
\includegraphics[width=1\textwidth]{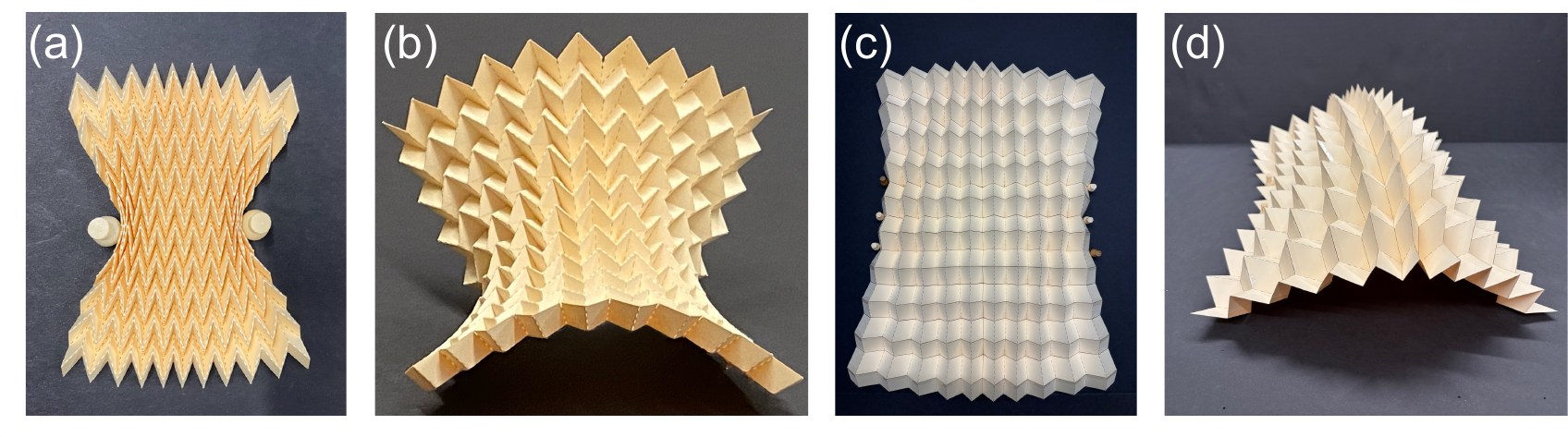} 
\caption{Examples of soft modes of deformation in paper models of Miura and Eggbox origami. (a-b) Pinching and bending in Miura Origami. (c-d) Analogous examples for Eggbox origami. }
\label{Fig:IntroFig}
\end{figure}

By far the most popular modeling approach for these and other flexible mechanical  metamaterials is  bar-and-hinge elasticity \cite{filipov2017bar,liu2017nonlinear,schenk2011origami} or related spring methods \cite{deng2020characterization,zhou2023low}.  These approaches replace the metamaterial with assemblies of elastic bars and hinges or introduce linear and torsional springs between the rigid elements, allowing the augmented material system to be analyzed using standard numerical methods in structural mechanics.  Bar-and-hinge elasticity is particularly versatile and convenient for systems composed of a small number of building blocks,  and the standard bearer in such settings \cite{zhu2022review}. However, it can become computationally expensive for  large  systems with many unit cells. It can also be challenging to fit the many spring stiffnesses in this method to yield accurate global behavior. Most importantly, the method does not provide a characterization of general soft modes of deformation, beyond  simulating them. In particular, it does not explain why the Miura prefers a saddle shape when bent and the Eggbox  prefers a cap.

An emerging alternative to bar-and-hinge elasticity is \textit{homogenization}.  The general idea is to replace the complex micro-motions of the metamaterial's building blocks by an effective field theory that  captures the material’s collective elastic interactions.  The earliest success in this direction is  perhaps the work of Alibert \textit{et al.}\;\cite{alibert2003truss} in their study of truss beams with pentographic substructure. Using the asymptotic variational method of $\Gamma$-convergence \cite{braides2002gamma}, they demonstrated that these structures coarse-grain to strain-gradient and higher-order elastic theories, assuming a linear elastic response.   They later generalized these results to  a myriad of other linear truss models  of metamaterials, including origami and kirigami \cite{abdoul2018strain,durand2022predictive}, placing higher-order  continuum theories   on a rigorous foundation in this setting.  A wide variety of researchers from engineering, mathematics and  physics have added to this literature \cite{ariza2024homogenization, eskandari2024unravelling, giomi2025stretching,nassar2020microtwist, saremi2020topological,sun2020continuum, vasudevan2024homogenization, ye2024asymptotic}, illustrating the richness and utility of coupling  the metamaterial's design to its effective elastic properties through homogenization. However, all of this research assumes up front that the metamaterial behaves linearly, and thus has only limited  applicability  to examples like parallelogram origami that dramatically change their shape.  

We believe geometry should play the leading role when coarse-graining flexible metamaterials that possess mechanism deformations.  For parallelogram origami, a basic physical heuristic is that their soft modes are \textit{locally mechanistic}: at a lengthscale comparable to the size of a unit cell, each such mode looks like a mechanism. However, these mechanistic features can vary on the scale of many cells to produce a heterogenous actuation. The task then is to quantify exactly how the locally mechanistic response informs the overall  shape. The first efforts in this direction are due to  Schenk and Guest \cite{schenk2013geometry}  and Wei \textit{et al.}\;\cite{wei2013geometric}. By studying the kinematics of fitting together slightly bent neighboring unit cells, both works uncovered a geometric link between the Miura's in-­plane Poisson’s ratio and its normal curvatures, and used this link to explain why the Miura prefers  a  saddle shape when bent. Nassar \textit{et al.}\;\cite{nassar2017curvature} built on these ideas by  deriving the same Poisson's ratio link for Eggbox origami. More perceptively, they viewed this link as encoding a global kinematic restriction on the effective surfaces possible in Eggbox origami, namely, by coupling it to the Gauss and Codazzi-Mainardi compatibility conditions from differential geometry \cite{do2016differential}.  Multiple lines of research have since demonstrated a fundamental coupling of Poisson's ratio and normal curvatures in all parallelogram origami patterns \cite{mcinerney2022discrete,nassar2022strain}, in fact all periodic shells \cite{nassar2024periodic,nassar2024effective}, and connected this result to the origami's effective shapes through differential geometry \cite{czajkowski2023orisometry,nassar2022strain}.   Notably though, all of these results are a purely geometric attempt at capturing the soft modes observed in origami. They do not  seek to systematically quantify the elastic energy of these modes through asymptotic analysis. 

Our key theoretical contributions are to marry the nonlinear geometric constraints typical of flexible metamaterials with the systematic asymptotic analysis typical of linear homogenization. Our first work in this direction focused on the simpler 2D setting of planar kirigami \cite{zheng2022continuum}. There, we produced a coarse-graining rule that coupled  the actuation of the kirigami’s panels and slits to its effective deformation through a metric
constraint.  We also justified this rule by showing that, for any solution to this metric constraint, there is a corresponding sequence of planar kirigami deformations converging to the given effective deformation, with elastic energy far less than bulk.  We then turned our attention to parallelogram origami \cite{xu2024derivation}, which is more challenging due to the out-of-plane deformations that arise from panel bending.  In brief,  the coupling  of effective deformation and cell-by-cell actuation is done through a metric constraint, similar to the kirigami setting, while the origami's rigidity in bending constrains the curvature of the pattern to its Poisson's ratio in line with prior results.  These conditions can be expressed in a fully nonlinear way as two algebraic constraints coupling the first and second fundamental forms of the origami's effective deformation to an angle field  quantifying  its cell-by-cell actuation (Eqs.\;(\ref{eq:firstFundConstraint}) and (\ref{eq:secFundConstraint}) below). As our main technical achievement,  we showed that every solution to these compatibility equations is the limit of a sequence of soft modes --- origami deformations that involve large actuation of the folds, slight panel bending, and 
 negligible stretching. Furthermore, we used these sequences to derive a plate theory for parallelogram origami patterns with an explicit coarse-grained quadratic energy depending on the second fundamental form of the effective deformation and the gradient of the actuation angle field. Section \ref{sec:Theory} provides an overview of this theory. 

The purpose of the current work is to introduce a constitutive model for parallelogram origami that is faithful to the theory but also practical to simulate. A basic challenge is that it is hard to numerically implement the exact geometrical constraints that emerge when coarse-graining a flexible metamaterial. This is especially true under prototypical Dirichlet boundary conditions on effective deformation, as well as the tractions, moments and transverse loads  typical of any engineering analysis of a plate. While mathematical efforts are under way by one of the co-authors to develop convergent numerical schemes that exactly solve such geometric constraints \cite{marazzato2022mixed,marazzato2023computation,marazzato2024h2}, the schemes are not yet capable of handling a wide variety of boundary conditions of practical interest. Here we embrace an alternative ``relaxation and regularization" view that builds on the success of our work simulating the effective behavior of soft modes in planar kirigami \cite{zheng2023modelling}, as well as other successful approaches to continuum modeling in the kirigami setting \cite{czajkowski2022conformal,deng2020characterization,mcmahan2022effective, roy2023curvature}. Specifically,  we relax the geometric constraints ((\ref{eq:firstFundConstraint}) and (\ref{eq:secFundConstraint})) via corresponding elastic energy penalties, and simplify the plate energy density from the  theory to embrace its essential character as a regularization of the geometric penalties. The resulting model for parallelogram origami  is  a generalized elastic continuum reminiscent of, but distinct from,  Eringen's classical theories \cite{eringen2012microcontinuum}. It is nonlinear in the effective  deformation gradient and angle field  and regularized by high-order gradients thereof. Importantly, it is also variationally well-posed (see Theorem \ref{ExistenceTheorem}), and thus ready-made for a  numerical treatment based on finite elements.  We provide a finite element formulation of the model using the $C^0$ interior penalty method, a well-established numerical technique for handling the higher-order  gradients present in the model \cite{engel2002continuous}.   Implementation is then done via the open source computing platform \texttt{Firedrake} \cite{FiredrakeUserManual}.

Our new continuum  model and numerical method prove to be a versatile platform for solving elastic boundary
value problems.  This rings especially true in  Section \ref{sec:Examples}, where we showcase a compelling variety of simulations and analysis of the large deformation response of Miura and Eggbox origami under loads.  One of the more interesting points exemplified by these examples   is the  strong coupling of the pattern's Poisson's ratio to its qualitative elastic behavior,  beyond curvature. Our model produces equilibrium solutions where, depending on the boundary value problem, certain components of effective deformation or actuation are driven to approximate geometric  partial differential equations (PDEs) that are either elliptic or hyperbolic,  based solely on whether the pattern is auxetic or not.  As elliptic and hyperbolic PDEs are dramatically different, so too is the behavior of  Miura  and Eggbox origami in their response to loads. Interestingly, the exact same connection between auxeticity and PDE type is also found in  planar kirigami \cite{zheng2022continuum}, suggesting perhaps that this characterization is a universal  feature of the effective behavior of flexible mechanical metamaterials. 

This paper is organized as follows. Section \ref{sec:Theory} describes the geometry of parallelogram origami and  summarizes the effective theory derived in \cite{xu2024derivation}. Section \ref{sec:modelForSimulations} introduces our generalized elastic continuum model, and its equilibrium equations and boundary conditions. Section \ref{sec:FEM} provides  the finite element formulation and discusses its implementation. Section \ref{sec:Examples} deals with examples and  Section \ref{sec:Conclusion} concludes the work. Throughout, we make liberal use of tensor notation for conciseness. Vectors are indicated by boldface lowercase letters ($\mathbf{a} \in \mathbb{R}^m$),   second-order tensors by boldface uppercase letters ($\mathbf{A} \in \mathbb{R}^{m \times n}$), third-order tensors by calligraphic uppercase letters ($\mathcal{A} \in \mathbb{R}^{m \times n \times p}$), and fourth-order tensors by ``blackboard bold" uppercase letters ($\mathbb{A} \in \mathbb{R}^{m \times n \times p \times q}$).

\section{Theoretical framework}\label{sec:Theory}

% We derive in \cite{xu2023derivation} a surface theory that captures the effective soft modes of parallelogram origami, and an accompanying coarse-grained plate energy from bar and hinge elasticity. Upon coarse graining, we derive a coupled set of partial differential and algebraic equations (Eq. (40) below) involving an angle field quantifying the cell-wise actuation of the origami’s creases, along with two skew tensor fields we parameterize as vectors. When combined with a metric constraint similar to the kirgiami setting, this yields an effective surface theory for parallelogram origami. A key feature is this pattern possesses a continuous one parameter family of mechanism deformations, made up of rigid panel motions that fold the creases.

We begin by recalling key ideas from  \cite{xu2024derivation}, where we used asymptotic analysis  to derive an effective continuum elastic plate theory for parallelogram origami. This theory forms the basis of the constitutive model and numerical method introduced later on to simulate the mechanical response of these patterns. 

\subsection{Designs and an effective description of origami mechanisms}\label{ssec:Design}

 Each parallelogram origami metamaterial is built from a repeating unit cell of four  parallelograms panels joined at folds. As illustrated in Fig.\;\ref{Fig:1stFig}(a) and (c), the design of such a pattern is fully parameterized by four design vectors $\mathbf{t}^r_i \in \mathbb{R}^3$, $i = 1,2,3,4$, labeling the inner creases of a single unit cell in a counterclockwise fashion.  The parallelogram panels are formed by pairs of opposite sides described by $\mathbf{t}^r_i$ and $\mathbf{t}^r_{i+1}$  (Fig.\;\ref{Fig:1stFig}(a)). We assume throughout that these vectors correspond to a partially folded cell by enforcing the conditions 
 \begin{equation}
 \begin{aligned}\label{eq:MVConstraints}
 \mathbf{t}^r_i \cdot (\mathbf{t}^r_j \times \mathbf{t}_k^r) \neq 0 \quad \text{ for all } ijk \in \{123,234,341,412\}.
 \end{aligned}
 \end{equation}  
 The cell is  then tessellated along the Bravais lattice vectors 
 \begin{equation}
 \begin{aligned}
 \mathbf{u}_0 := \mathbf{t}_1^r - \mathbf{t}_3^{r}, \quad \mathbf{v}_0 := \mathbf{t}_2^r - \mathbf{t}_4^r
 \end{aligned}
 \end{equation} 
 to produce an overall pattern (Fig.\;\ref{Fig:1stFig}(c)). For future reference, we assume without loss of generality that $\mathbf{u}_0$ and $\mathbf{v}_0$ satisfy
 \begin{equation}
 \begin{aligned}
 \mathbf{u}_0 \cdot \mathbf{e}_3 = \mathbf{v}_0 \cdot \mathbf{e}_3 = 0, \quad \mathbf{e}_3 \cdot (\mathbf{u}_0 \times \mathbf{v}_0) > 0,
 \end{aligned}
 \end{equation} 
 and define their projection onto $\mathbb{R}^2$ as $\tilde{\mathbf{u}}_0, \tilde{\mathbf{v}}_0$ such that $\mathbf{u}_0 := (\tilde{\mathbf{u}}_0, 0)$ and $\mathbf{v}_0: = (\tilde{\mathbf{v}}_0, 0)$.  The design space of parallelogram origami is large. It includes many well-known examples like Miura origami \cite{koryo1985method}, Eggbox origami \cite{schenk2011origami} and the Morph patterns \cite{pratapa2019geometric}. The theory we outline treats the design vectors as input, and thus encompasses all these  patterns as special cases.
 
\begin{figure}[t!]
\centering
\includegraphics[width=1\textwidth]{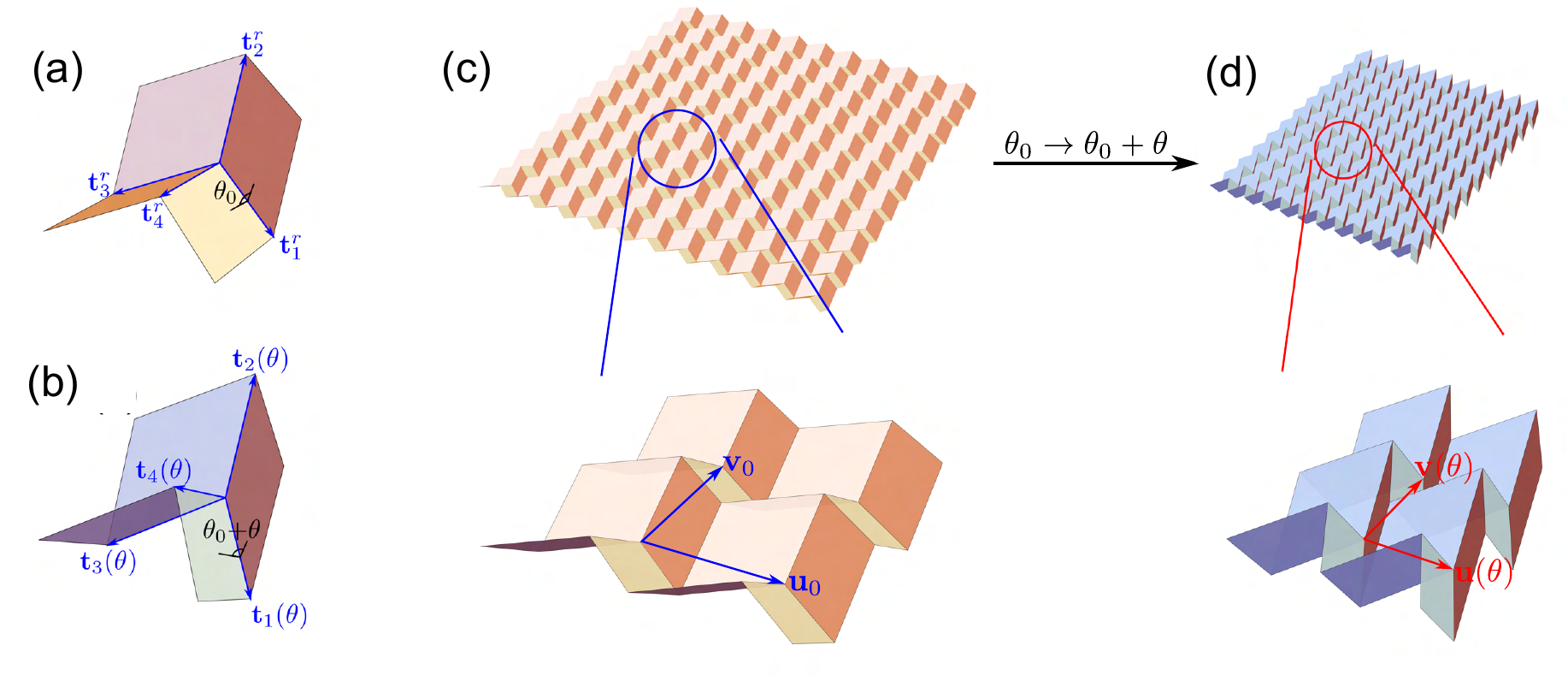} 
\caption{Parallelogram origami design and mechanism kinematics. (a) A  parallelogram origami unit cell and the labeling of its creases. (b) Parameterization of its mechanism deformations. (c) The overall pattern is obtained by tessellating a unit cell. (d) The mechanism of the pattern repeats that of the unit  cell.}
\label{Fig:1stFig}
\end{figure}

A signature property  of parallelogram origami is that each such pattern possesses a single degree-of-freedom (DOF) mechanism motion.  Let us focus on the single cell in Fig.\;\ref{Fig:1stFig}(a) with $\theta_0$ denoting the dihedral angle at the $\mathbf{t}^r_1$-crease. We fold this cell as origami by folding the crease from $\theta_0$ to $\theta_0 + \theta$ while keeping the panels rigid. This actuation  yields the deformed tangents in Fig.\;\ref{Fig:1stFig}(b)
\begin{equation}
\begin{aligned}
\mathbf{t}^r_i \to \mathbf{t}_i(\theta), \quad \theta \in (\theta^{-}, \theta^+) ,
\end{aligned}
\end{equation}
which satisfy $|\mathbf{t}_i(\theta)| = |\mathbf{t}_i^r|$ for each $i =1,\ldots,4$, and  vary smoothly in $\theta$, thus describing a  mechanism motion of the cell.\footnote{Explicit parameterizations of this actuation are obtained by solving kinematic compatibility conditions for the panels to rigidly rotate about the folds  of a four-fold vertex.  Such parameterizations are given in \cite{feng2020designs,tachi2009generalization} for rigidly and flat foldable origami and in \cite{foschi2022explicit} for the general Euclidean and non-Euclidean origami.}  The interval $(\theta^{-}, \theta^+)$ depends only on $\mathbf{t}_1^r, \ldots, \mathbf{t}_4^r$, and is characterized as the largest open interval  containing zero that satisfies  
 \begin{equation}
 \begin{aligned}\label{eq:noChangeMV}
\big[ \mathbf{t}^r_i \cdot (\mathbf{t}^r_j \times \mathbf{t}_k^r) \big]\big[ \mathbf{t}_i(\theta) \cdot (\mathbf{t}_j(\theta) \times \mathbf{t}_k(\theta)) \big] > 0  \quad \text{ for all } ijk \in \{123,234,341,412\}.
 \end{aligned}
 \end{equation}
 Such an interval exists for any parallelogram origami design satisfying (\ref{eq:MVConstraints}).\footnote{A proof of this fact is given in \cite{xu2024derivation}; it uses an argument based on  the implicit function theorem.} The overall pattern in Fig.\;\ref{Fig:1stFig}(c), in turn, has a mechanism motion that repeats the motion of the cell (Fig.\;\ref{Fig:1stFig}(d)). It is obtained by tessellating a deformed cell along the Bravais lattice vectors 
 \begin{equation}
 \begin{aligned}
 \mathbf{u}(\theta) := \mathbf{t}_1(\theta) - \mathbf{t}_3(\theta), \quad \mathbf{v}(\theta) := \mathbf{t}_2(\theta) - \mathbf{t}_4(\theta), \quad \theta \in (\theta^{-}, \theta^+).
 \end{aligned}
 \end{equation} 
As in the reference state, we assume without loss of generality that 
  \begin{equation}
 \begin{aligned}\label{eq:planarMotionDef}
 \mathbf{u}(\theta) \cdot \mathbf{e}_3 = \mathbf{v}(\theta) \cdot \mathbf{e}_3 = 0, \quad \mathbf{e}_3 \cdot (\mathbf{u}(\theta) \times \mathbf{v}(\theta)) > 0
 \end{aligned}
 \end{equation}
to express the basic fact that the mechanism motion is effectively planar.   Furthermore, we define the \textit{shape tensor} as the unique linear transformation $\mathbf{A}(\theta) \in \mathbb{R}^{3\times2}$ such that
\begin{equation}
\begin{aligned}\label{eq:shapeTensor}
\mathbf{A}(\theta) \tilde{\mathbf{u}}_0 = \mathbf{u}(\theta), \quad \mathbf{A}(\theta) \tilde{\mathbf{v}}_0 = \mathbf{v}(\theta).
\end{aligned}
\end{equation}
Note that $\mathbf{e}_3 \cdot \mathbf{A}(\theta) = \mathbf{0}$ per (\ref{eq:planarMotionDef}).

\subsection{Coarse-graining rules and an effective description of origami soft modes}

 \begin{figure}[t!]
\centering
\includegraphics[width=.93\textwidth]{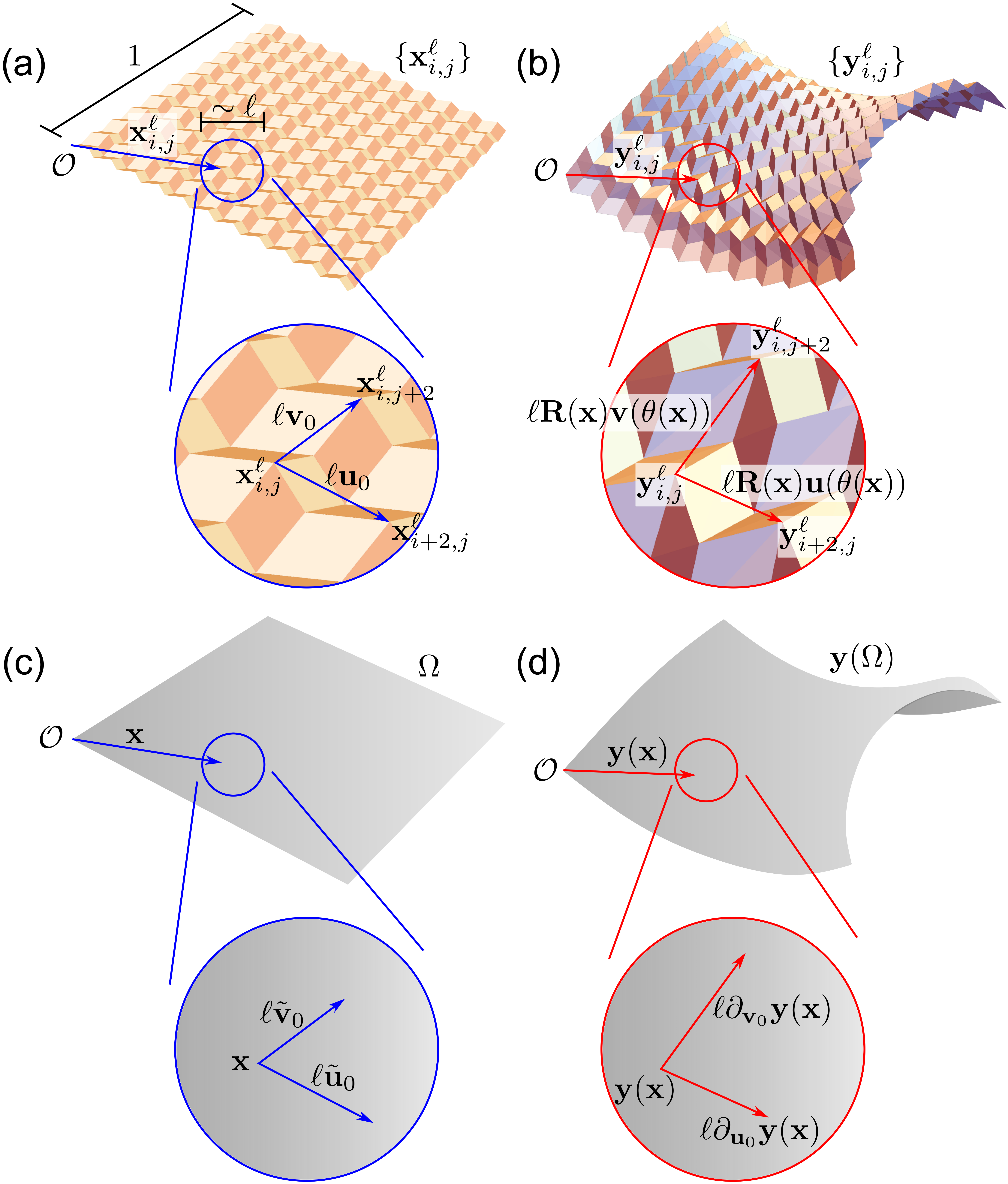} 
\caption{Illustration of the locally mechanistic coarse-graining rule. (a) A finely patterned parallelogram origami, labeled $\{ \mathbf{x}_{i,j}^{\ell}\}$, has  cells of size $\sim \ell$ on an overall domain of size $\sim 1$. (b) A soft mode of this pattern deforms the vertices $\{ \mathbf{x}_{i,j}^{\ell} \} \mapsto \{ \mathbf{y}_{i,j}^{\ell}\}$ consistent with a spatially varying local mechanism. (c) The effective midsurface of the origami, denoted $\Omega$, deforms smoothly to  $\mathbf{y}(\Omega)$ under a soft mode. (d) The tangent vectors induced by the effective deformation match the locally mechanistic response in (b).}
\label{Fig:RefDef}
\end{figure}

Beyond the pure mechanisms, parallelogram origami can bend, twist and actuate in complex  nonuniform ways at very little elastic energy by localizing most of the deformation at the folds. We call such deformations soft modes.  As anyone who has folded a Miura origami knows, soft modes are the generic response of these patterns to loads. Here, we describe  a  theory that captures the effective behavior of all possible soft modes in parallelogram origami.  %As explained below, it is built on two coarse-graining rules that link the design and kinematics of the origami cells explicitly to its effective cell-averaged deformation. 

Let's begin with some notation. A typical parallelogram origami metamaterial is a pattern of $M \times N$ repeating unit cells of length $|\mathbf{u}_0|$ and width $|\mathbf{v}_0|$ as in Fig.\;\ref{Fig:1stFig}(c). Our  starting point for coarse graining these metamaterials is shown in Fig.\;\ref{Fig:RefDef}(a-b).  We non-dimensionalize  by fixing the width of the reference configuration of the origami  pattern as $1$,  thus  introducing a characteristic cell size $\ell =1/N$.   The vertices of this reference configuration are labeled by $\{ \mathbf{x}_{i,j}^{\ell} \} \subset \mathbb{R}^3$, which indexes them by $i$  along the length and $j$ along the width. A corresponding soft mode of deformation takes each point  $\mathbf{x}_{i,j}^{\ell}$ to a point  $\mathbf{y}_{i,j}^{\ell}$ in 3D space, and is  denoted by $\{ \mathbf{y}_{i,j}^{\ell}\} \subset \mathbb{R}^3$.  In \cite{xu2024derivation}, we showed that soft modes  admit a  continuum description quantified by the cell size $\ell$. Fig.\;\ref{Fig:RefDef}(c-d) illustrates the main idea. At a basic level, we seek to replace the pattern  $\{ \mathbf{x}_{i,j}^{\ell}\}$ by a planer  continuum domain $\Omega \subset \mathbb{R}^2$, reflecting its ``midsurface", with the goal of approximating  the gross shape change of its soft modes $\{ \mathbf{y}_{i,j}^{\ell}\}$  by continuum deformations of this surface $\mathbf{y} \colon \Omega \rightarrow \mathbb{R}^3$. We call such deformations  \textit{effective deformations} from hereon.   As explained below,  this continuum approach is made meaningful through two coarse-graining rules that link the design and kinematics of the origami's cells explicitly to its  effective  deformations.

The first coarse-graining rule originates from our earlier work on planar kirigami \cite{zheng2022continuum, zheng2023modelling}.  Soft modes are \textit{locally mechanistic}. At the scale of each unit cell they look like a mechanism. However, their features vary slowly from cell to cell. This  rule reveals a fundamental relationship between the first fundamental form of the effective deformation  of a soft mode
\begin{equation}
\begin{aligned}\label{eq:1stFund}
\mathbf{I}(\mathbf{y}) := (\nabla \mathbf{y})^T \nabla \mathbf{y}
\end{aligned}
\end{equation}
and the cell-by-cell  actuation $\theta = \theta(\mathbf{x})$, which we now think of as a spatially varying field depending on $\mathbf{x} = (x_1, x_2) \in \Omega$. Here and throughout, the gradient operator $\nabla$ denotes differentiation with respect to the planar domain $\Omega$; thus $\nabla \mathbf{y} \in \mathbb{R}^{3\times2}$ and $\mathbf{I}(\mathbf{y}) \in \mathbb{R}^{2\times2}_{\text{sym}}$. 

To explain the coarse-graining rule, the vertex labeled $\mathbf{x}_{i,j}^{\ell}$ in Fig.\;\ref{Fig:RefDef}(a) corresponds to a material point $\mathbf{x} \in \Omega$ in Fig.\;\ref{Fig:RefDef}(c). A soft mode deforms this origami vertex to $\mathbf{y}_{i,j}^{\ell}$, as shown in Fig.\;\ref{Fig:RefDef}(b),  which  corresponds to a spatial point $\mathbf{y}(\mathbf{x})$ on the effective surface in Fig.\;\ref{Fig:RefDef}(d).   Notice that, while the origami surface oscillates due to the underlying microstructure of the unit cell, the deformation of every other vertex $\mathbf{x}_{i+2m,j+2n}^{\ell} \mapsto  \mathbf{y}_{i+2m,j+2n}^{\ell}$ is consistent with a smooth underlying surface.  As the corrugated features  are also consistent locally with that of a perfect mechanism, we can quantify the tangent planes of this surface by a rotation field $\mathbf{R} \colon \Omega \rightarrow SO(3)$ and actuation field $\theta \colon \Omega \rightarrow (\theta^{-}, \theta^+)$  that satisfy 
\begin{equation}
\begin{aligned}\label{eq:keyCoarseGrain10}
& \mathbf{y}_{i+2,j}^{\ell} - \mathbf{y}_{i,j}^{\ell}  \approx  \ell \mathbf{R}(\mathbf{x}) \mathbf{u}(\theta(\mathbf{x})), \qquad   \mathbf{y}_{i,j+2}^{\ell} - \mathbf{y}_{i,j}^{\ell}  \approx  \ell \mathbf{R}(\mathbf{x}) \mathbf{v}(\theta(\mathbf{x}))
\end{aligned}
\end{equation}
(compare  Fig.\;\ref{Fig:1stFig}(c-d)  to Fig.\;\ref{Fig:RefDef}(a-b)).  On the other hand, as a matter of standard continuum mechanics, we have that $\mathbf{y}$ in Fig.\;\ref{Fig:RefDef}(c-d) satisfies  
\begin{equation}
\begin{aligned}\label{eq:keyCoarseGrain11}
& \mathbf{y}(\mathbf{x} + \ell \tilde{\mathbf{u}}_0) - \mathbf{y}(\mathbf{x})   \approx  \ell \partial_{\mathbf{u}_0} \mathbf{y}(\mathbf{x}), \qquad   \mathbf{y}( \mathbf{x} + \ell \tilde{\mathbf{v}}_0) - \mathbf{y}(\mathbf{x}) \approx  \ell \partial_{\mathbf{v}_0} \mathbf{y}(\mathbf{x}).
\end{aligned}
\end{equation}
Our first coarse-graining rule is to couple the cell-by-cell kinematics to the effective deformation via the constraints
\begin{equation}
\begin{aligned}\label{eq:firstFundConstraint0}
\partial_{\mathbf{u}_0} \mathbf{y} = \mathbf{R} \mathbf{u}(\theta),  \qquad  \partial_{\mathbf{v}_0} \mathbf{y} = \mathbf{R} \mathbf{v}(\theta),
\end{aligned} 
\end{equation}
which match the righthand sides of (\ref{eq:keyCoarseGrain10}) and (\ref{eq:keyCoarseGrain11}). We can also eliminate the rotation field $\mathbf{R}$ and  instead couple $\theta$ to $\mathbf{I}(\mathbf{y})$ by writing  (\ref{eq:firstFundConstraint0}) equivalently  as  a metric constraint
\begin{equation}
\begin{aligned}\label{eq:firstFundConstraint}
\mathbf{I}(\mathbf{y})  = \mathbf{A}^T(\theta) \mathbf{A}(\theta) 
\end{aligned} 
\end{equation}
for the the shape tensor  in (\ref{eq:shapeTensor}).

We now turn to our second coarse-graining rule. While most of the deformation of a soft mode is localized at the folds, the panels must bend slightly to accommodate a non-uniform actuation from cell to cell. This interplay constrains the effective theory  by how neighboring unit cells with slightly bent panels fit together, leading to a fundamental relationship between the actuation $\theta$ and the second fundamental form of the effective deformation 
\begin{equation}
\begin{aligned}\label{eq:secFundDef}
\mathbf{II}(\mathbf{y}) := \begin{pmatrix} \partial_1 \partial_1 \mathbf{y} \cdot \mathbf{n}(\mathbf{y}) & \partial_1 \partial_2 \mathbf{y} \cdot \mathbf{n}(\mathbf{y})  \\   \partial_1 \partial_2 \mathbf{y} \cdot \mathbf{n}(\mathbf{y}) &  \partial_2 \partial_2 \mathbf{y} \cdot \mathbf{n}(\mathbf{y})\end{pmatrix}  \quad \text{ for } \quad  \mathbf{n}(\mathbf{y}) := \frac{ \partial_1 \mathbf{y} \times \partial_2 \mathbf{y}}{| \partial_1 \mathbf{y} \times \partial_2 \mathbf{y}|}.
\end{aligned}
\end{equation}   

Fig.\;\ref{Fig:BendIdea} illustrates the idea. Zoom into the local neighborhood of cells in a soft mode and identify the  rotation of the base cell by $\mathbf{R}(\mathbf{x})$, as indicated, to be consistent with (\ref{eq:keyCoarseGrain10}) and Fig.\;\ref{Fig:RefDef}.  Given the slightly bent nature of all the panels in this description, the cells rotate  relative to each other by a small amount. We quantify this kinematics at leading order by noticing that the rotation of the cell to the right and above the base cell satisfy
\begin{equation}
\begin{aligned}
&\mathbf{R}(\mathbf{x} + \ell \tilde{\mathbf{u}}_0)  \approx   \mathbf{R}(\mathbf{x}) \big(\mathbf{I}+ \ell (\boldsymbol{\omega}_{\mathbf{u}_0}(\mathbf{x}) \times ) \big),  \qquad \mathbf{R}(\mathbf{x} + \ell \tilde{\mathbf{v}}_0)  \approx   \mathbf{R}(\mathbf{x}) \big(\mathbf{I}+ \ell (\boldsymbol{\omega}_{\mathbf{v}_0}(\mathbf{x}) \times ) \big),  
\end{aligned}
\end{equation}
respectively,  for some vectors fields  $\boldsymbol{\omega}_{\mathbf{u}_0}, \boldsymbol{\omega}_{\mathbf{v}_0} \colon \Omega \rightarrow \mathbb{R}^3$ that measure bending in the pattern.\footnote{Note,  $(\mathbf{v} \times)$ denotes  a tensor on  $\mathbb{R}^{3\times3}$ that  satisfies $(\mathbf{v} \times) \mathbf{w} = \mathbf{v} \times \mathbf{w}$ for all $\mathbf{w} \in \mathbb{R}^3$ for any $\mathbf{v} \in \mathbb{R}^3$; in other words, $(\mathbf{v} \times)$ is a generic parameterization of a skew tensor.} Since the cells must fit together at their boundaries,  the fields $\boldsymbol{\omega}_{\mathbf{u}_0}$, $\boldsymbol{\omega}_{\mathbf{v}_0}$ and $\theta$ are not independent of each other. It turns out that they satisfy (see \cite{xu2024derivation})
\begin{equation}
\begin{aligned}
\boldsymbol{\omega}_{\mathbf{u}_0}\cdot \mathbf{v}'(\theta) = \boldsymbol{\omega}_{\mathbf{v}_0} \cdot \mathbf{u}'(\theta).
\end{aligned}
\end{equation}
As a final bit of manipulation, this constraint can be rewritten using the implied identity $\partial_{\mathbf{u}_0}\big[ \mathbf{R} \mathbf{v}(\theta)\big] = \partial_{\mathbf{v}_0}\big[ \mathbf{R}\mathbf{u}(\theta)\big]$ from (\ref{eq:firstFundConstraint0}). Doing so  eliminates the bending measures $\boldsymbol{\omega}_{\mathbf{u}_0}$, $\boldsymbol{\omega}_{\mathbf{v}_0}$ in favor of  the second fundamental form $\mathbf{II}(\mathbf{y})$, giving an equivalent constraint  in terms of the fields $\mathbf{y}$ and $\theta$:
\begin{equation}
\begin{aligned}\label{eq:secFundConstraint}
\big[ \mathbf{v}(\theta) \cdot \mathbf{v}'(\theta) \big] \big[ \tilde{\mathbf{u}}_0 \cdot \mathbf{II}(\mathbf{y}) \tilde{\mathbf{u}}_0 \big]  + \big[ \mathbf{u}(\theta) \cdot \mathbf{u}'(\theta) \big] \big[ \tilde{\mathbf{v}}_0 \cdot \mathbf{II}(\mathbf{y}) \tilde{\mathbf{v}}_0 \big]  = 0.
\end{aligned}
\end{equation}

 \begin{figure}[t!]
\centering
\includegraphics[width=.75\textwidth]{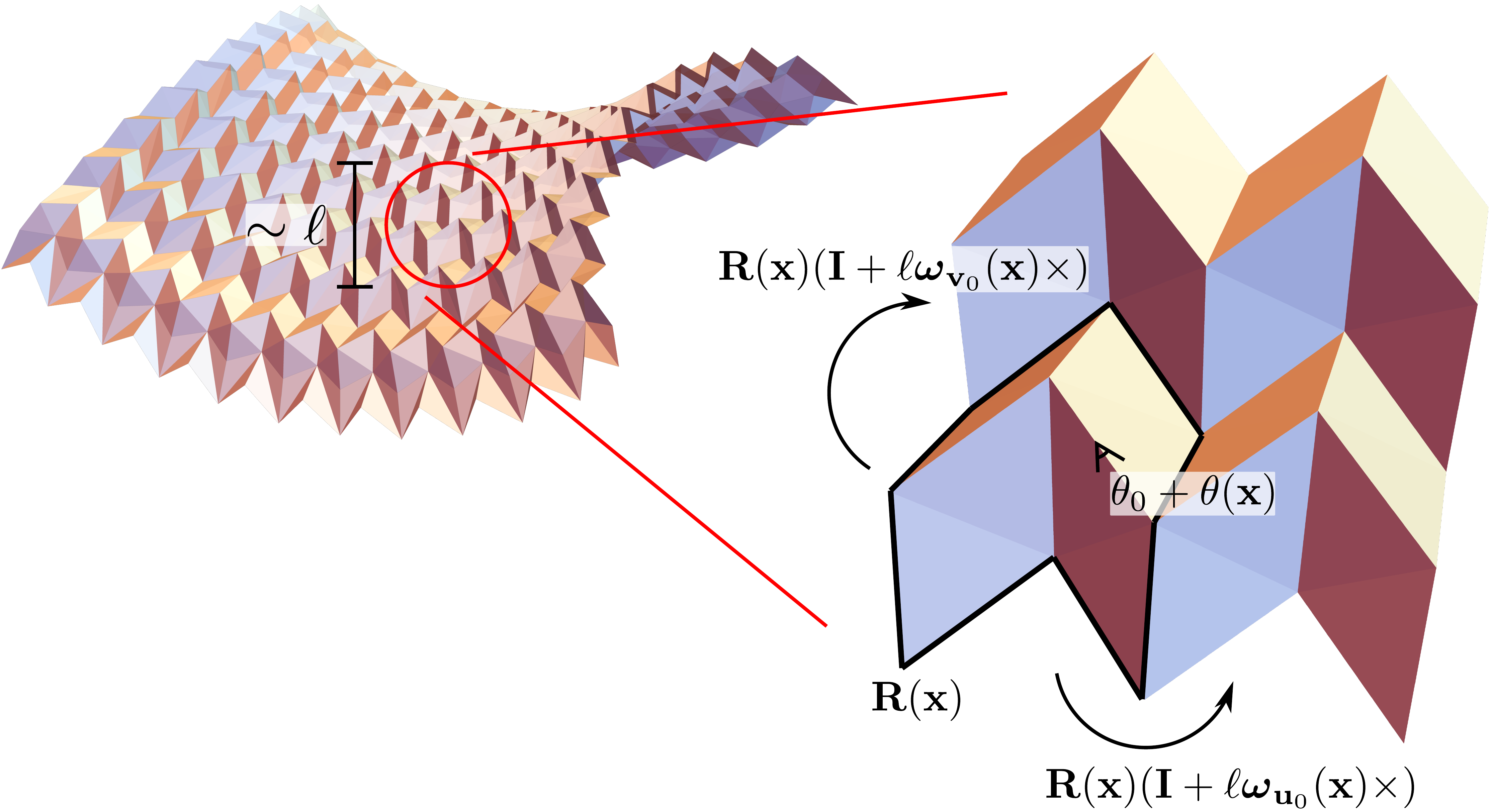} 
\caption{Local fitting problem for slightly bent unit cells. Neighboring cells rotate relative to each other due to panel bending. These rotations  are coupled through the compatibility conditions at the cell boundaries.}
\label{Fig:BendIdea}
\end{figure}

\subsection{Asymptotic analysis and an effective plate theory}
The bulk of our effort in \cite{xu2024derivation} goes into justifying the constraints in (\ref{eq:firstFundConstraint}) and (\ref{eq:secFundConstraint}). We showed that every sufficiently smooth solution
to these effective compatibility equations is the limit of a sequence of soft modes --- origami deformations with negligible panel strain that approximate the effective deformation.  The precise result is as follows: Let $\{ \mathbf{x}_{i,j}^{\ell}\} \subset \mathbb{R}^3$ and $\Omega \subset \mathbb{R}^2$ be the origami pattern and its effective midsurface as  above, so that the cell size $\ell$ is $\ll 1$ and the overall pattern has dimensions $|\Omega|\sim 1$.  Then, for any sufficiently smooth $\theta \colon \Omega \rightarrow (\theta^{-}, \theta^+)$ and $\mathbf{y} \colon \Omega \rightarrow \mathbb{R}^3$ that solve  (\ref{eq:firstFundConstraint}) and (\ref{eq:secFundConstraint}),  there is a corresponding deformation of the  origami vertices  $\{ \mathbf{x}_{i,j}^{\ell} \} \mapsto  \{ \mathbf{y}_{i,j}^{\ell} \}$  such that
\begin{equation}
\begin{aligned}\label{eq:mainResult}
&\frac{| \mathbf{y}_{i,j}^{\ell} -  \mathbf{y}_{i+1,j}^{\ell} | - | \mathbf{x}_{i,j}^{\ell} -  \mathbf{x}_{i+1,j}^{\ell} |}{| \mathbf{x}_{i,j}^{\ell} -  \mathbf{x}_{i+1,j}^{\ell} |} + \frac{| \mathbf{y}_{i,j}^{\ell} -  \mathbf{y}_{i,j+1}^{\ell} | - | \mathbf{x}_{i,j}^{\ell} -  \mathbf{x}_{i,j+1}^{\ell} |}{| \mathbf{x}_{i,j}^{\ell} -  \mathbf{x}_{i,j+1}^{\ell} |}      \\
& \qquad + \frac{| \mathbf{y}_{i,j}^{\ell} -  \mathbf{y}_{i+1,j+1}^{\ell} | - | \mathbf{x}_{i,j}^{\ell} -  \mathbf{x}_{i+1,j+1}^{\ell} |}{| \mathbf{x}_{i,j}^{\ell} -  \mathbf{x}_{i+1,j+1}^{\ell} |}  + \frac{| \mathbf{y}_{i+1,j}^{\ell} -  \mathbf{y}_{i,j+1}^{\ell} | - | \mathbf{x}_{i+1,j}^{\ell} -  \mathbf{x}_{i,j+1}^{\ell} |}{| \mathbf{x}_{i+1,j}^{\ell} -  \mathbf{x}_{i,j+1}^{\ell} |}  = O(\ell^2), \\ 
&|\mathbf{y} (\mathbf{x}) - \mathbf{y}_{i,j}^{\ell}|   = O(\ell)
\end{aligned}
\end{equation}
 for all  $(i,j)$, where $\mathbf{x} \in \Omega$ and $\mathbf{x}_{i,j}^{\ell}$ are corresponding points (see Fig.\;\ref{Fig:RefDef}(a) and (c)). The first statement in (\ref{eq:mainResult}) says that the strain of every panel is at most $\sim \ell^2$. The second says that the origami deformation approximates the effective deformation. 
 
In fact, we can say more.  As another key result of  \cite{xu2024derivation}, we derived an effective plate theory for parallelogram origami as the asymptotic limit of a large family of prototypical bar-and-hinge models of the origami  \cite{filipov2017bar,liu2017nonlinear}. Section 2.2 of \cite{xu2024derivation} explains the setup of our derivation in full detail. In brief, such models are composed of a  sum of stretching, bending and folding elastic energies, denoted  $E_{\text{tot}}^{\ell}(\{ \mathbf{y}_{i,j}^{\ell}\} ) = E^{\ell}_{\text{str}}(\{ \mathbf{y}_{i,j}^{\ell}\} )  + E^{\ell}_{\text{bend}}(\{ \mathbf{y}_{i,j}^{\ell}\} )  + E^{\ell}_{\text{fold}}(\{ \mathbf{y}_{i,j}^{\ell}\} )$.  The stretching elasticity is an  energy that scales with the magnitude of the four  panel strains in the first equation in (\ref{eq:mainResult}). The folding elasticity penalizes the deviations of each fold's dihedral angle from one set by the origami's reference configuration. Finally, the bending elasticity is modeled by adding ``stiff" folds along the panel diagonals and measuring the deviation of these folds from flat under a deformation. 

After scaling the moduli of the bar-and-hinge model by powers of $\ell$ such that  ``stretching stiffness" $\gg$ ``bending stiffness" $\gg$ ``folding stiffness", we showed that the sequence of origami constructions in (\ref{eq:mainResult}) admits the following limit energy:
 \begin{equation}
\begin{aligned}\label{eq:effectivePlateTheory}
\lim_{\ell \rightarrow 0} E^{\ell}_{\text{tot}}( \{ \mathbf{y}_{i,j}^{\ell}\}) =  \begin{cases}
\int_{\Omega} \begin{pmatrix} \mathbf{II}(\mathbf{y}) \\ (\nabla \theta)^T \end{pmatrix} \colon \mathbb{K}(\theta) \colon  \begin{pmatrix} \mathbf{II}(\mathbf{y}) \\ (\nabla \theta)^T \end{pmatrix} \dif x & \text{ if } ( \theta, \mathbf{y})   \in A_{\text{eff}} \\
+ \infty & \text{ otherwise,}
\end{cases}
\end{aligned}
\end{equation}
where $\mathbb{K}(\theta) \in \mathbb{R}^{3 \times 2 \times 3 \times 2}$ is an explicit fourth order tensor with major symmetry $[ \mathbb{K}(\theta)]_{i\alpha j \beta} = [\mathbb{K}(\theta)]_{j \beta i \alpha}$ that depends kinematically only on the actuation field $\theta$ (see \cite{xu2024derivation} for the full definition)\footnote{In bar-and-hinge elasticity, there is always a question of which panel diagonal to choose for the stiff folds in the bending energy. We considered all possible choices in  \cite{xu2024derivation} and showed that the  limiting plate theory is of the form in (\ref{eq:effectivePlateTheory}) for every case. However, the moduli $\mathbb{K}(\theta)$ depends on our choices for the panel diagonals. All else being equal, $\mathbb{K}(\theta)$ is stiffer if the longer panel diagonals are chosen and softer for the shorter ones. This result is consistent with the heuristic that shorter panel diagonals are the more physically sensible modeling choice.}
and $( \mathbf{II}(\mathbf{y}) , \nabla \theta)^T  \in \mathbb{R}^{3\times2}$ denote the ``bending" strains of the effective plate.\footnote{As a heuristic for why  $\nabla \theta$ is a bending strain, consider the bowtie example in Fig.\;\ref{Fig:IntroFig}(a). This origami deformation is effectively flat, which means the corresponding effective deformation in our theory satisfies $\mathbf{II}(\mathbf{y}) = \mathbf{0}$. However, the panels are bending to achieve this configuration and thus storing some elastic energy. This energy is quantified by $\nabla \theta$  in our plate theory in   (\ref{eq:effectivePlateTheory}). }
The  set $A_{\text{eff}}$ denotes the class of  admissible effective deformations and actuation fields describing a soft mode
\begin{equation}
\begin{aligned}
A_{\text{eff}} := \big\{ &(\mathbf{y} , \theta) \in H^2(\Omega, \mathbb{R}^3) \times H^1(\Omega, (\theta^{-}, \theta^+ )) \text{ subject to  (\ref{eq:firstFundConstraint}) and (\ref{eq:secFundConstraint}) a.e.\;in $\Omega$} \big\},
\end{aligned}
\end{equation}
where $H^k$ is the space of square integrable functions whose $1^{\text{st}}, \ldots,$ and $k^{\text{th}}$-order partial derivatives are also square integrable.

To derive this plate theory, we focused  on smooth effective fields  within $A_{\text{eff}}$. For any such fields, we showed  that the total energy of the corresponding origami construction $\{ \mathbf{y}_{i,j}^{\ell}\}$ is dominated by panel bending $E_{\text{tot}}^{\ell}(\{ \mathbf{y}_{i,j}^{\ell}\} ) \approx E_{\text{bend}}^{\ell}(\{ \mathbf{y}_{i,j}^{\ell}\} ) \sim 1$, yielding the finite limit in the first part of the statement in (\ref{eq:effectivePlateTheory}). However, when the effective fields depart  this set, the panel strain of the constructions is $O(\ell)$ or larger, as opposed to the  $O(\ell^2)$ (see (\ref{eq:mainResult})). The total energy is then dominated by panel stretching $E_{\text{tot}}^{\ell}(\{ \mathbf{y}_{i,j}^{\ell}\} ) \approx E_{\text{str}}^{\ell}(\{ \mathbf{y}_{i,j}^{\ell}\} ) \gg 1$, which is unbounded in the limit $\ell \rightarrow 0$ as indicated by the ``$+ \infty$" in (\ref{eq:effectivePlateTheory}). The main point is that soft modes in parallelogram origami are characterized by  effective fields that belong to $A_{\text{eff}}$ and possess a quadratic energy density in $\mathbf{II}(\mathbf{y})$ and $\nabla \theta$.

% \section{Continuous problem}
\section{Modeling framework}\label{sec:modelForSimulations}

Building on the theory, we now develop a constitutive model well-adapted to simulate the effective behavior of soft modes in parallelogram origami. A basic challenge is that it is hard to numerically implement exact geometrical constraints, like the ones in $A_{\text{eff}}$. This is especially true under prototypical displacement and traction boundary conditions.  We seek instead a practical continuum constitutive model that is faithful to the theory but also efficient to simulate and predictive.

\subsection{Constitutive model}\label{ssec:modelIntro}

We take as our effective constitutive model the sum of  bulk and regularizing terms 
\begin{equation}
\begin{aligned}\label{eq:Eint}
E_{\text{int}}(\mathbf{y}, \theta) := E_{\text{bulk}}(\mathbf{y}, \theta) +  E_{\text{reg}}(\mathbf{y}, \theta) ,
\end{aligned}
\end{equation} 
 measuring the stored elastic energy due to the effective deformation and actuation of a parallelogram origami pattern on an underlying reference domain $\Omega \subset \mathbb{R}^2$.
 
We choose the bulk energy in this formulation  to relax the geometric constraints in $A_{\text{eff}}$ via
\begin{equation}
\begin{aligned}\label{eq:bulkTerm}
E_{\text{bulk}}(\mathbf{y}, \theta) &:= \int_{\Omega} \Big\{ W_{1}(  \theta, \mathbf{I}(\mathbf{y}))   + W_{2}( \theta, \mathbf{II}(\mathbf{y})) \Big\} \dif x.
\end{aligned}
\end{equation}
In particular, the two energy densities $W_{1,2} \colon  (\theta^{-}, \theta^+)  \times  \mathbb{R}_{\text{sym}}^{2\times2}\rightarrow \mathbb{R}$  are penalizations of the constraints in (\ref{eq:firstFundConstraint}) and (\ref{eq:secFundConstraint}) defined by 
\begin{equation}
\begin{aligned}\label{eq:W12Def}
&W_{1}( \theta,  \mathbf{G}) := \frac{c_1}{\sqrt{\det \mathbf{G}}} \Big|\mathbf{G} - \mathbf{A}^T(\theta) \mathbf{A}(\theta)\Big|^2 , \\
&W_{2}(  \theta, \mathbf{K}) :=   \frac{c_2L_{\Omega}^2}{|\tilde{\mathbf{u}}_0|^4 |\tilde{\mathbf{v}}_0|^4} \Big(\big[ \mathbf{v}(\theta) \cdot \mathbf{v}'(\theta)\big]  \big[\tilde{\mathbf{u}}_0 \cdot  \mathbf{K} \tilde{\mathbf{u}}_0\big]|  + \big[\mathbf{u}(\theta) \cdot \mathbf{u}'(\theta)\big] \big[ \tilde{\mathbf{v}}_0 \cdot  \mathbf{K} \tilde{\mathbf{v}}_0 \big]  \Big)^2 .
\end{aligned}
\end{equation}
The division by $\sqrt{\det \mathbf{I}(\mathbf{y})}$ in the $W_{1}(\theta, \mathbf{I}(\mathbf{y}))$  seeks to prevent interpenetration of matter in the model, while the division by $ |\tilde{\mathbf{u}}_0|^4 |\tilde{\mathbf{v}}_0|^4$ in $W_2(\theta, \mathbf{II}(\mathbf{y}))$ scales this energy density so that it is invariant under conformal transformations of the unit cell given by  $(\tilde{\mathbf{u}}_0, \tilde{\mathbf{v}}_0) \rightarrow \lambda  ( \tilde{\mathbf{u}}_0, \tilde{\mathbf{v}}_0)$ and $(\mathbf{u}(\theta), \mathbf{v}(\theta) ) \rightarrow \lambda ( \mathbf{u}(\theta), \mathbf{v}(\theta))$ for all $\lambda > 0$.   Finally, $L_{\Omega}>0$ denotes a characteristic lengthscale of the domain $\Omega$. It is introduced in $W_{2}$ so that $c_2$ has  dimensions of energy per unit area, just like $c_1$. 

Having captured the key nonlinearities in the bulk term, we invoke the adage that ``simplicity is best" for the regularizing term by defining
 \begin{equation}
\begin{aligned}\label{eq:regTerm}
E_{\text{reg}}(\mathbf{y},\theta) := \int_{\Omega} \Big\{ d_1 L_{\Omega}^2 |\nabla \nabla \mathbf{y}|^2 + d_2 L_{\Omega}^2 |\nabla \theta|^2 + d_3  \theta^2 \Big\} \dif x.
\end{aligned}
\end{equation}
The first and second energy densities in (\ref{eq:regTerm}) are simple expressions that serve to both approximate the plate energy density in (\ref{eq:effectivePlateTheory}) and regularize the bulk energy in (\ref{eq:bulkTerm}).  They involve higher order gradients and thus, like $W_2$,   are scaled by   $L_{\Omega}^2$  to set the dimensions of $d_1,d_2$ as energy per unit area.  The last term is the simplest effective energy density that accounts for  the  higher order but ubiquitous  elastic energy to fold origami.

As this effective model is built off of the coarse-grained theory in Section \ref{sec:Theory}, the moduli $c_1, \ldots, d_3$ can be  linked to the elastic and geometric properties of the panels and folds, along with the cell size $\ell$, which we now briefly explain.
Set $W_i = c_i \hat{W}_i$ for $i = 1,2$ in (\ref{eq:W12Def}), so that each $\hat{W}_i$ is dimensionless. Remember these energy densities reflect the geometric constraints in (\ref{eq:firstFundConstraint}) and (\ref{eq:secFundConstraint}) for small panel strain in the  origami. In fact, our analysis in \cite{xu2024derivation} shows that  when $\hat{W}_1 \sim 1$, the panel strains of the origami  must be $\sim 1$.  The energy density in this setting is thus $c_1 \sim c_1 \hat{W}_1  \sim \text{``elastic energy density of $O(1)$ panel strains"} \sim \mu$, where $\mu$ is the shear modulus of the panels.  So $c_1 \sim \mu$.   Next, if $\hat{W}_1  = 0$ and $\hat{W}_2 \sim1$, then the panel strains of the corresponding origami are $\sim \ell$.  This setting gives an  energy density  $c_2 \sim c_2 \hat{W}_2  \sim  \text{``elastic energy density of $O(\ell)$ panel strains"} \sim \mu \times (\text{``panel strains"})^2 \sim \mu \ell^2$ provided $\ell \ll1$. We conclude that $c_2 \sim \mu \ell^2$. Next, suppose  the normalized energy densities satisfy  $\hat{W}_{1,2} =0$ and $L_{\Omega}^2 |\nabla \nabla \mathbf{y}|^2,  L_{\Omega}^2 |\nabla \theta|^2 \sim 1$. This setting is generally consistent with a soft mode of the origami for which both energy densities $d_1 L_{\Omega}^2 |\nabla \nabla \mathbf{y}|^2$ and $d_2 L_{\Omega}^2 |\nabla \theta|^2$ are comparable to the bending energy density stored in the panels $\sim \mu h^2 \kappa^2$, where $\kappa \sim 1/L_{\Omega}$ is the typical panel curvature and $h$ is its thickness.  We deduce from this case that $d_1 \sim d_2 \sim \mu \frac{h^2}{L_{\Omega}^2}$. Finally, the folding energy density of origami is quantified by $\sim \big(\frac{\text{``fold area per cell"}}{\text{``cell area"}}\big) \times (\text{``elastic energy density of the folded material"})$.   As the strain inside the folds is $\sim 1$ when $\theta \sim 1$, we conclude for large folding that $d_3 \sim d_3 |\theta|^2 \sim \frac{w L_{\Omega}\ell}{L_{\Omega}^2 \ell^2 } \mu_{\text{f}} \sim \frac{w}{L_{\Omega}} \ell^{-1} \mu_{\text{f}}$, where $\mu_{\text{f}}$ is the shear modulus of the folded material and $w$ is its width. All told, we obtain the following scaling laws for these moduli
\begin{equation}
\begin{aligned}
c_1 \sim \mu, \quad c_2 \sim \mu \ell^2, \quad d_1 \sim d_2 \sim \mu \Big(\frac{h}{L_{\Omega}}\Big)^2 , \quad d_3 \sim \mu_{\text{f}}  \frac{w}{L_{\Omega}} \ell^{-1} .
\end{aligned}
\end{equation} 

In practice, we anticipate fitting $c_1, c_2, d_1,d_2, d_3$  to experiments, rather than being too beholden to these scaling laws. Nevertheless, some general features are worth noting. Since an origami metamaterial satisfies $h/L_{\Omega} \ll \ell \ll 1$ in the typical case, these scaling laws suggest that these moduli  should obey the relations
\begin{equation}
\begin{aligned}\label{eq:modConsistency}
c_1 \gg c_2 \gg d_1 \sim d_2.
\end{aligned}
\end{equation} 
Notably absent in these relations is $d_3$. As a practical matter, we studied the regime of very soft folds in \cite{xu2024derivation} (corresponding to $d_3 \ll c_1, \ldots, d_2$)  to derive the plate theory in (\ref{eq:effectivePlateTheory}) by asymptotic analysis. However, the physically appropriate choice for this  modulus depends delicately on the properties of the panels and folds. In general, we expect $\mu_{\text{f}} \ll \mu$ because the folds are often made by perforating and/or plastically yielding the creases or are simply made of a more compliant material than the panels. In addition, $w$ typically satisfies  $w/L_{\Omega} \sim h/L_{\Omega} \ll \ell$.  So we can conclude that $d_{3} \ll c_1$; however, its size relative to $c_2, d_1, d_2$ is  not universal.  In Section \ref{sec:Examples}, for instance, we choose $d_3$ to be larger than $d_{1,2}$ and smaller than $c_2$ in most examples, as  this choice produces simulations that conform well to the paper models of the origami in Fig.\;\ref{Fig:IntroFig}. 

As a final point on the constitutive model, we take the ambient space  on which to define and study this energy to be 
\begin{equation}
\begin{aligned}\label{eq:VDef}
V^{\varepsilon} := \big\{ (\mathbf{y}, \theta) \in H^2(\Omega, \mathbb{R}^3) \times H^1(\Omega, \mathbb{R}) \colon |\partial_1 \mathbf{y} \times \partial_2\mathbf{y} | \geq \varepsilon \text{ and } \theta \in [ \theta^{-} + \varepsilon, \theta^+ - \varepsilon ] \text{ a.e.\;in } \Omega \big\}
\end{aligned}
\end{equation}
for a small positive parameter $\varepsilon > 0$. Since $\nabla \nabla \mathbf{y}$ and $\nabla \theta$ must be square-integrable for $E_{\text{int}}(\mathbf{y}, \theta)$ to be well-defined and bounded, restricting our attention to deformations and actuation fields in $H^2 \times H^1$ is well-justified. The $\varepsilon$-inequalities are not strictly needed to obtain a bounded energy, but are physically motivated and help to simplify some technical issues in proving the well-posedness of the model.  For instance, the function $\mathbf{u}(\theta)$ and $\mathbf{v}(\theta)$ are well-defined and smooth on the interval $(\theta^{-}, \theta^+)$ but can become poorly behaved as $\theta \rightarrow \theta^{\pm}$. In fact, there are additional physics at play in these limits, corresponding to a change in mountain-valley assignment, that we do not account for in our modeling. The restriction $ \theta \in [ \theta^{-} + \varepsilon, \theta^+ - \varepsilon]$ rules out this behavior. The other inequality $|\partial_1 \mathbf{y} \times \partial_2\mathbf{y} | \geq \varepsilon$ ensures that the  surface normal $\mathbf{n}(\mathbf{y})$  and $1/\sqrt{\det \mathbf{I}(\mathbf{y})}$ are  always well-defined and, in particular, that the admissible class of effective surfaces are bounded away from ones that contain singularities. While both restrictions are  crucial to our proof of the well-posedness of the model, they have no impact on how we numerically implement the model in practice.

\subsection{Boundary conditions and applied forces} 

We investigate the effective behavior of parallelogram origami under Dirichlet boundary conditions, corresponding to prescribed deformation and slope, as well as applied forces, generalized moments, and distributed loads.   Let $\Gamma_\text{d}, \Gamma_{\text{n}} \subset \partial \Omega$ denote relatively open subsets of the boundary of $\Omega$. We prescribe the deformation and slope, respectively, via
\begin{equation}
\begin{aligned}
\mathbf{y} = \overline{\mathbf{y}} \text{ on } \Gamma_{\text{d}} \quad \text{ and } \quad (\nabla \mathbf{y}) \mathbf{n} = \overline{\mathbf{s}} \text{ on } \Gamma_{\text{n}}
\end{aligned}
\end{equation} 
for some $\overline{\mathbf{y}} \colon \Gamma_{\text{d}} \rightarrow \mathbb{R}^3$ and $\overline{\mathbf{s}} \colon \Gamma_{\text{n}} \rightarrow \mathbb{R}^3$. 

To incorporate applied forces, we  invoke the principle of minimum potential energy.  The total  potential energy is the sum of the stored elastic energy in (\ref{eq:Eint}) and the contribution associated to applied forces,  written here as 
\begin{equation}
\begin{aligned}
E(\mathbf{y}, \theta) := E_{\text{int}}( \mathbf{y}, \theta) + E_{\text{ext}}(\mathbf{y}) .
\end{aligned}
\end{equation} 
 Let $\Gamma_{\text{t}} := \partial \Omega \setminus \Gamma_{\text{d}}$ and $\Gamma_{\text{m}} := \partial \Omega \setminus \Gamma_{\text{n}}$ denote the boundary sets on which we have not prescribed the deformation or slope.  We allow for applied forces on these sets via the external energy potential 
\begin{equation}
\begin{aligned}\label{eq:Eext}
E_{\text{ext}}(\mathbf{y})  := - \int_{\Omega} \overline{\mathbf{b}} \cdot \mathbf{y} \dif x - \int_{\Gamma_{\text{t}}} \overline{\mathbf{t}} \cdot \mathbf{y} \dif s  - \int_{\Gamma_{\text{m}}} \overline{\mathbf{m}}  \cdot ( \nabla \mathbf{y} ) \mathbf{n} \dif s,
\end{aligned}
\end{equation}
where $\overline{\mathbf{b}} \colon \Omega \rightarrow \mathbb{R}^3$ denotes the prescribed referential distributed loads, $\overline{\mathbf{t}} \colon \Gamma_{\text{t}} \rightarrow \mathbb{R}^3$ the prescribed referential traction, and $\overline{\mathbf{m}} \colon \Gamma_{\text{m}} \rightarrow \mathbb{R}^{3\times 2}$ the prescribed referential generalized moment. 

Note that  the boundary conditions  correspond only to $\mathbf{y}$, $\nabla \mathbf{y}$ and their pair-conjugate generalized forces.  In principle, one could also  include a Dirichlet  boundary conditions on $\theta$ and its pair-conjugate torque.  We avoid this inclusion for simplicity and because  such boundary conditions seem hard to realize experimentally in parallelogram origami.\footnote{These boundary conditions are, however, relevant to planar kirigami metamaterials. A nice example comes from  \cite{deng2020characterization}, where they used ``slit actuation" boundary conditions to enable a domain wall in experiments on these systems.}

\subsection{Energy minimization and an existence theorem}

Having defined the model, boundary conditions, and forces, we now establish the existence of minimizers of the energy. Let 
\begin{equation}
\begin{aligned}\label{eq:EstarDef}
E^{\star} :=  \inf \big\{ E(\mathbf{y} , \theta) \colon  (\mathbf{y} ,\theta )  \in V^{\varepsilon}_{\Gamma} \big\} 
\end{aligned}
\end{equation}
for  $V^{\varepsilon}_{\Gamma}$ the subset of effective deformations and actuation fields in $V^{\varepsilon}$ that enforces the desired Dirichlet boundary conditions
\begin{equation}
\begin{aligned}\label{eq:VepsGamma}
V^{\varepsilon}_{\Gamma} := \big\{ (\mathbf{y}, \theta) \in V^{\varepsilon} \colon \mathbf{y} = \overline{\mathbf{y}} \text{ a.e.\;on } \Gamma_{\text{d}}  \text{ and  } (\nabla \mathbf{y}) \mathbf{n} = \overline{\mathbf{s}} \text{ a.e.\;on } \Gamma_{\text{n}} \big\}. 
\end{aligned}
\end{equation}
The equalities at the boundaries  $\Gamma_{\text{d}}$ and $\Gamma_{\text{n}}$ are to be understood in the sense of traces.\footnote{In particular, we assume that  $\overline{\mathbf{y}} \in H^\frac12(\Gamma_\text{d}, \mathbb{R}^3)$ and $\bar{\mathbf{s}} \in H^\frac12(\Gamma_{\text{n}},\mathbb{R}^3)$, as these  are the appropriate trace Sobolev spaces for such boundary conditions (see Evans \cite{evans2022partial}, Chapter 5, and Brezis \cite{brezis}, Chapter 9, for more details).}  We have the following result.
\begin{theorem}\label{ExistenceTheorem}
Assume that $\Gamma_{\emph{n}} \subset \Gamma_{\emph{d}}$ are measurable,  $\Gamma_{\emph{d}}$ has nonzero measure, and that $V^{\varepsilon}_{\Gamma}$ is non-empty. Let $\overline{\mathbf{b}} \in L^2(\Omega, \mathbb{R}^3)$, $\overline{\mathbf{t}} \in L^2(\Gamma_{\emph{t}}, \mathbb{R}^3)$, $\overline{\mathbf{m}} \in L^2(\Gamma_{\emph{m}}, \mathbb{R}^{3\times3})$. Then, there is a $(\mathbf{y}, \theta)  \in V^{\varepsilon}_{\Gamma}$ such that $E^{\star} = E(\mathbf{y}, \theta) $. 
\end{theorem}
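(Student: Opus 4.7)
The plan is to apply the direct method of the calculus of variations in the reflexive Hilbert space $H^2(\Omega, \mathbb{R}^3) \times H^1(\Omega, \mathbb{R})$. The regularization $E_{\text{reg}}$ supplies coercivity and convexity in the top-order gradients, while the Rellich--Kondrachov embedding $H^2(\Omega) \hookrightarrow W^{1,p}(\Omega)$ (compact for every $p<\infty$ in dimension $2$) will deliver the strong convergence of $\nabla\mathbf{y}$ and $\theta$ needed to handle the nonlinear bulk terms. The $\varepsilon$-constraints built into $V^{\varepsilon}$ pass to a.e.\;limits and, crucially, keep $\mathbf{n}(\mathbf{y})$ and $1/\sqrt{\det \mathbf{I}(\mathbf{y})}$ uniformly controlled throughout the argument.

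The first step is coercivity. $E_{\text{reg}}$ controls $\|\nabla\nabla\mathbf{y}\|_{L^2}$, $\|\nabla\theta\|_{L^2}$, and (via $d_3$) $\|\theta\|_{L^2}$; combined with the Dirichlet data on $\Gamma_{\text{d}}$ of positive measure and a standard Poincar\'e-type inequality for $H^2$ with partial Dirichlet trace, this gives $\|\mathbf{y}\|_{H^2}+\|\theta\|_{H^1}\leq C$ along any sublevel set. The linear $E_{\text{ext}}$ is continuous on $H^2$ and bounded by $C\|\mathbf{y}\|_{H^2}$ (by Cauchy--Schwarz and the trace theorem, using the assumed $L^2$-regularity of $\bar{\mathbf{b}},\bar{\mathbf{t}},\bar{\mathbf{m}}$), so it is absorbed into the quadratic $E_{\text{reg}}$ by Young's inequality; hence $E^{\star}>-\infty$ and any minimizing sequence $(\mathbf{y}_n,\theta_n)\subset V^{\varepsilon}_{\Gamma}$ is bounded in $H^2\times H^1$. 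Passing to a subsequence, $\mathbf{y}_n\rightharpoonup\mathbf{y}$ in $H^2$, $\theta_n\rightharpoonup\theta$ in $H^1$, with $\nabla\mathbf{y}_n\to\nabla\mathbf{y}$ and $\theta_n\to\theta$ strongly in every $L^p$ ($p<\infty$) and a.e.\;along a further subsequence. The pointwise inequalities $|\partial_1\mathbf{y}\times\partial_2\mathbf{y}|\geq\varepsilon$ and $\theta\in[\theta^{-}+\varepsilon,\theta^{+}-\varepsilon]$ thus pass to the limit, and the Dirichlet and slope conditions survive by weak continuity of the trace $H^2(\Omega)\to H^{3/2}(\partial\Omega)$ and the normal-derivative trace into $H^{1/2}$, so $(\mathbf{y},\theta)\in V^{\varepsilon}_{\Gamma}$.

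It remains to show weak lower semicontinuity of $E$. The $E_{\text{reg}}$ integrand is convex in its weakly converging argument, hence weakly l.s.c.; $E_{\text{ext}}$ is weakly continuous. $W_1(\theta,\mathbf{I}(\mathbf{y}))$ is nonnegative and, using the bound $1/\sqrt{\det\mathbf{I}(\mathbf{y}_n)}\leq 1/\varepsilon$ together with the a.e.\;convergence of $\theta_n$, $\nabla\mathbf{y}_n$ and continuity of $\mathbf{A}(\theta)$ on $[\theta^{-}+\varepsilon,\theta^{+}-\varepsilon]$, converges a.e., so Fatou gives its l.s.c. The subtlest term is $W_2$, which has the form $c\,(a_{ij}(\theta)\,\partial_{ij}\mathbf{y}\cdot\mathbf{n}(\mathbf{y}))^2$ for bounded continuous coefficients $a_{ij}(\theta)$ read off from (\ref{eq:W12Def}). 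The tensor field $a_{ij}(\theta_n)\mathbf{n}(\mathbf{y}_n)$ is uniformly bounded (the $\varepsilon$-bound on $|\partial_1\mathbf{y}\times\partial_2\mathbf{y}|$ makes $\mathbf{n}$ a smooth bounded function of $\nabla\mathbf{y}$) and converges a.e., so by dominated convergence it converges strongly in every $L^p$. Combined with $\partial_{ij}\mathbf{y}_n\rightharpoonup\partial_{ij}\mathbf{y}$ weakly in $L^2$, the inner linear form $M_n := a_{ij}(\theta_n)\,\partial_{ij}\mathbf{y}_n\cdot\mathbf{n}(\mathbf{y}_n)$ satisfies $M_n\rightharpoonup M := a_{ij}(\theta)\,\partial_{ij}\mathbf{y}\cdot\mathbf{n}(\mathbf{y})$ weakly in $L^2$, and weak l.s.c.\;of the squared $L^2$-norm gives $\int_{\Omega}W_2(\theta,\mathbf{II}(\mathbf{y}))\dif x\leq\liminf\int_{\Omega}W_2(\theta_n,\mathbf{II}(\mathbf{y}_n))\dif x$. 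Summing, $E(\mathbf{y},\theta)\leq\liminf E(\mathbf{y}_n,\theta_n)=E^{\star}$, so the limit is a minimizer.

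The main obstacle is the lower semicontinuity of $W_2$: this term is quadratic in $\mathbf{II}(\mathbf{y})$, a nonlinear expression in $\nabla\mathbf{y}$ coupled to $\nabla\nabla\mathbf{y}$, and one has only weak convergence of the second derivatives. The argument works precisely because $\mathbf{II}$ is \emph{linear} in $\nabla\nabla\mathbf{y}$ with coefficients depending continuously on $\nabla\mathbf{y}$ and $\theta$, and those coefficients are uniformly bounded thanks to the $\varepsilon$-nondegeneracy built into $V^{\varepsilon}$; without those lower bounds the strong-weak product yielding $M_n\rightharpoonup M$ would fail, which explains why the $\varepsilon$-constraints are included directly in the definition of $V^{\varepsilon}$.
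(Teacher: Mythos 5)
Your overall strategy (direct method in $H^2\times H^1$, a.e.\;passage of the $\varepsilon$-constraints, term-by-term lower semicontinuity) is sound, but there is a genuine gap in the coercivity step. You claim that $E_{\text{reg}}$ together with ``a standard Poincar\'e-type inequality for $H^2$ with partial Dirichlet trace'' yields $\|\mathbf{y}\|_{H^2}\leq C$ on sublevel sets. That inequality is false under the stated hypotheses: $\|\nabla\nabla\mathbf{y}\|_{L^2}$ plus the trace of $\mathbf{y}$ on a positive-measure $\Gamma_{\text{d}}$ does not control $\|\nabla\mathbf{y}\|_{L^2}$ when $\Gamma_{\text{d}}$ is contained in a line (take $\Gamma_{\text{d}}\subset\{x_2=0\}$ and $\mathbf{y}_n=n\,x_2\,\mathbf{e}_1$: zero Hessian, zero trace, unbounded $H^2$-norm; and $\Gamma_{\text{n}}$ may be empty, so no slope condition saves you). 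The missing control on $\|\nabla\mathbf{y}\|_{L^2}$ has to come from the bulk term $W_1$, which you only use as a nonnegative quantity. The paper's Lemma \ref{firstLemmaBound} shows $W_1\geq c_{1,\varepsilon}|\nabla\mathbf{y}|^2-1/c_{1,\varepsilon}$, and this is itself not free: because of the $1/\sqrt{\det\mathbf{I}(\mathbf{y})}$ prefactor (which can be as small as $\sim|\nabla\mathbf{y}|^{-2}$ for large gradients), one must extract a quartic lower bound $\tfrac14|\nabla\mathbf{y}|^4-C$ from $|\mathbf{I}(\mathbf{y})-\mathbf{A}^T\mathbf{A}|^2$ before dividing. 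Only with that estimate, combined with the first-order Poincar\'e inequality on $\Gamma_{\text{d}}$ for $\|\mathbf{y}\|_{L^2}$, does the minimizing sequence become bounded in $H^2\times H^1$.

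Aside from this, your lower-semicontinuity argument is correct and takes a genuinely different route from the paper. The paper first builds continuous extensions $W^{\varepsilon}$ of all the $\theta$- and $\nabla\mathbf{y}$-dependent coefficients to the whole space (Step 1 of Appendix \ref{sec:proof}) precisely so that it can invoke Dacorogna's general semicontinuity theorem for integrands that are continuous in the lower-order variables and convex in $(\nabla\theta,\nabla\nabla\mathbf{y})$. You avoid the extension machinery entirely by staying inside $V^{\varepsilon}$: Fatou for $W_1$ via a.e.\;convergence of $(\theta_n,\nabla\mathbf{y}_n)$, and for $W_2$ the observation that $\mathbf{II}(\mathbf{y})$ is linear in $\nabla\nabla\mathbf{y}$ with uniformly bounded, a.e.-convergent coefficients, so the strong-times-weak product converges weakly in $L^2$ and the squared norm is weakly l.s.c. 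This is more elementary and makes the role of the $\varepsilon$-inequalities transparent; the paper's extension-plus-Dacorogna route is more systematic and would generalize more readily if the integrand were changed. Fix the coercivity step and the proof goes through.
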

\noindent A proof of this theorem is provided in Appendix \ref{sec:proof}. The key point  is that the energy $E(\mathbf{y}, \theta)$, $(\mathbf{y}, \theta) \in V_{\Gamma}^{\varepsilon}$, is  weakly lower semicontintinuous, a result that follows from a general classification of lower semicontinuous functionals in Dacorogna \cite{dacorogna2007direct}, Theorem 3.23. There is some subtlety to this point. The hypotheses needed to apply the Dacorogna result are not directly compatible with the $\varepsilon$-inequalities of $V_{\Gamma}^{\varepsilon}$. However, we are able to  modify  the energy in order to apply the theorem to our setting and then argue that the modification is, in fact, benign. The rest of the proof is more-or-less standard fare in the calculus of variation. We provide the details in the appendix for the benefit of the mechanics reader who is less familiar with the techniques. 

% As a brief sketch of the proof, we show that $E^{\star}$ in (\ref{eq:EstarDef}) is finite (not $\pm \infty$) by Proposition \ref{EnergyProp} and the regularity assumptions on the boundary conditions of the theorem. Thus, there is a minimizing sequence $\{ (\mathbf{y}_{j}, \theta_{j})\} \subset V_{\Gamma}^{\varepsilon}$ such that $E(\mathbf{y}_j, \theta_j) \leq E^{\star} + \frac{1}{j}$.  A coercivity result on the energy (also in Proposition \ref{EnergyProp}) furnishes a subsequence    $\{ (\mathbf{y}_{j_k}, \theta_{j_k})\}$ of this minimizing sequence that weakly converges
% \begin{equation}
% \begin{aligned}
% \mathbf{y}_{j_k} \rightharpoonup \mathbf{y} \text{ in } H^2(\Omega, \mathbb{R}^3) \quad  \text{ and } \quad \theta_{j_k} \rightharpoonup \theta \text{ in } H^1(\Omega, \mathbb{R}).
% \end{aligned}
% \end{equation}
% By standard implications of such weakly convergent sequences, we deduce that $(\mathbf{y}, \theta) \in V^{\varepsilon}_{\Gamma}$. Then, the lower semicontinuity result of Dacorogna \cite{dacorogna2007direct}, Theorem 3.23, modified for our setting allows us to conclude that 
% \begin{equation}
% \begin{aligned}
% E^{\star} = \liminf_{k \rightarrow \infty} E(\mathbf{y}_{j_k}, \theta_{j_k}) \geq E(\mathbf{y}, \theta).
% \end{aligned}
% \end{equation}
% But since $E^{\star}$ is the infimum of $E$ over fields in $V^{\varepsilon}_{\Gamma}$ and since $(\mathbf{y}, \theta) \in V_{\Gamma}^{\varepsilon}$, this inequality must be an equality. So $E^{\star} = E(\mathbf{y}, \theta)$ and $(\mathbf{y}, \theta)$ is a minimizer. 

\subsection{Weak formulation and equilibrium equations}\label{ssec:Weakform}
We now derive the equilibrium equations for our  model by taking the first variation of the energy. For this purpose, we dispense with the $\varepsilon$-dependence of the space of admissible fields in (\ref{eq:VepsGamma}), as it is not needed to develop this result, and instead consider the space 
\begin{equation}
\begin{aligned}\label{eq:VGamma}
V_{\Gamma} := \big\{ (\mathbf{y}, \theta) \in H^2(\Omega, \mathbb{R}^3) \times H^1(\Omega, (\theta^{-}, \theta^+)) \colon \mathbf{y} = \overline{\mathbf{y}} \text{ a.e.\;on } \Gamma_{\text{d}}  \text{ and  } (\nabla \mathbf{y}) \mathbf{n} = \overline{\mathbf{s}} \text{ a.e.\;on } \Gamma_{\text{n}} \big\} .
\end{aligned}
\end{equation}
We also denote by $V_0$ the space of functions  in (\ref{eq:VGamma}) such that $\overline{\mathbf{y}} = \mathbf{0}$ and $\overline{\mathbf{s}} = \mathbf{0}$. 

Let $(\mathbf{y}, \theta) \in V_{\Gamma}$ be a local minimizer to the energy $E$. The first variation of this energy satisfies
\begin{equation}
\begin{aligned}\label{eq:firstVariation}
0 &= \frac{\dif}{\dif t}  E(\mathbf{y} + t \mathbf{w},  \theta + t \eta) \big|_{t=0}  \\
&= \int_{\Omega} \Big\{  \mathbf{P}(\mathbf{y}, \theta) \colon  \nabla \mathbf{w}   +  \big \langle \mathcal{H}(\mathbf{y}, \theta) ,  \nabla \nabla \mathbf{w} \big \rangle  + q(\mathbf{y}, \theta)  \eta  + \mathbf{j}(\theta) \cdot \nabla \eta  \Big\} \dif x \\
&\qquad  -  \int_{\Omega} \overline{\mathbf{b}} \cdot \mathbf{w}  \dif x - \int_{\Gamma_{\text{t}}} \overline{\mathbf{t}} \cdot \mathbf{w}  \dif s -   \int_{\Gamma_{\text{m}}} \overline{\mathbf{m}} \cdot (\nabla \mathbf{w} ) \mathbf{n}   \dif s
\end{aligned}
\end{equation} 
for all $(\mathbf{w}, \eta) \in V_0$, where the inner product $\langle  \cdot , \cdot \rangle$ is  $\langle \mathcal{S}, \mathcal{T} \rangle = [ \mathcal{S}]_{i\alpha \beta} [\mathcal{T}]_{i\alpha \beta}$ and $\mathbf{P}$, $\mathcal{H}$, $q$, and $\mathbf{j}$ are generalized stresses that depend on the deformation and actuation field. The first of these stresses is a traditional Piola-Kirchhoff  stress of the form
\begin{equation}
\begin{aligned}\label{eq:PiolaStress}
\mathbf{P}(\mathbf{y}, \theta) := \mathbf{P}_1(\mathbf{y}, \theta) + \mathbf{P}_2(\mathbf{y}, \theta)
\end{aligned}
\end{equation} 
for second-order tensor fields $\mathbf{P}_1$ and $\mathbf{P}_2$ on $\mathbb{R}^{3\times2}$ obtained by differentiating the energy densities $W_{1}$ and $W_2$  with respect to the deformation gradient $\nabla \mathbf{y}$.  This differentiation furnishes 
\begin{equation}
\begin{aligned}\label{eq:P1P2}
&\mathbf{P}_1(\mathbf{y}, \theta) := \frac{4c_1}{\sqrt{\det \mathbf{I}(\mathbf{y})}}  (\nabla \mathbf{y})\Big[ \big( \mathbf{I}(\mathbf{y}) - \mathbf{A}^T(\theta) \mathbf{A}(\theta) \big) - \frac{1}{4} |\mathbf{I}(\mathbf{y}) - \mathbf{A}^T(\theta) \mathbf{A}(\theta)|^2 \big(\mathbf{I}(\mathbf{y})\big)^{-1} \Big]   \\
&\mathbf{P}_2(\mathbf{y}, \theta) :=  \frac{2c_2 L_{\Omega}^2 k(\mathbf{y},\theta)}{|\tilde{\mathbf{u}}_0|^4|\tilde{\mathbf{v}}_0|^4|\partial_1 \mathbf{y} \times \partial_2 \mathbf{y}|} \Big[ ( \partial_2 \mathbf{y} \times  \mathbf{N}(\mathbf{y})L(\theta) \mathbf{y} )    \otimes \mathbf{e}_1 - ( \partial_1 \mathbf{y} \times  \mathbf{N}(\mathbf{y}) L(\theta) \mathbf{y} )    \otimes \mathbf{e}_2\Big],
\end{aligned}
\end{equation}
after some algebraic manipulation.
The latter formula includes $k(\mathbf{y}, \theta) := \big[ \mathbf{v}(\theta) \cdot \mathbf{v}'(\theta) \big] \big[ \tilde{\mathbf{u}}_0 \cdot \mathbf{II}(\mathbf{y}) \tilde{\mathbf{u}}_0 \big]  + \big[ \mathbf{u}(\theta) \cdot \mathbf{u}'(\theta) \big] \big[ \tilde{\mathbf{v}}_0 \cdot \mathbf{II}(\mathbf{y}) \tilde{\mathbf{v}}_0 \big]$, which relaxes the left side of the kinematic constraint in (\ref{eq:secFundConstraint}). It also includes  the projection tensor $\mathbf{N}(\mathbf{y}) := \mathbf{I} - \mathbf{n}(\mathbf{y}) \otimes \mathbf{n}(\mathbf{y})$, and the linear operator $L(\theta) := [(\mathbf{v}(\theta) \cdot \mathbf{v}'(\theta)) \partial_{\mathbf{u}_0} \partial_{\mathbf{u}_0} + (\mathbf{u}(\theta) \cdot \mathbf{u}'(\theta)) \partial_{\mathbf{v}_0} \partial_{\mathbf{v}_0}]$  applied to $\mathbf{y}$. The second generalized stress $\mathcal{H}$ is a third-order tensor field on $\mathbb{R}^{3\times2 \times2}$  given by 
\begin{equation}
\begin{aligned}
\mathcal{H}(\mathbf{y}, \theta) := \mathcal{H}_2(\mathbf{y}, \theta) + 2d_1 L_{\Omega}^2 \nabla \nabla \mathbf{y} ,
\end{aligned}
\end{equation}
where $\mathcal{H}_2$  is obtained by differentiating $W_2$ with respect to the Hessian $\nabla \nabla \mathbf{y}$. This differentiation gives 
\begin{equation}
\begin{aligned}
\mathcal{H}_2(\mathbf{y}, \theta) := \frac{2 c_2 L_{\Omega}^2 k(\mathbf{y}, \theta)}{|\tilde{\mathbf{u}}_0|^4|\tilde{\mathbf{v}}_0|^4} \Big( [\mathbf{v}(\theta) \cdot \mathbf{v}'(\theta) ]\big\{ \mathbf{n}(\mathbf{y}) \otimes \tilde{\mathbf{u}}_0 \otimes \tilde{\mathbf{u}}_0 \big\} + [\mathbf{u}(\theta) \cdot \mathbf{u}'(\theta) ]\big\{ \mathbf{n}(\mathbf{y}) \otimes \tilde{\mathbf{v}}_0 \otimes \tilde{\mathbf{v}}_0 \big\} \Big).
\end{aligned}
\end{equation}
The third generalized stress $q$ is a scalar field of the form
\begin{equation}
\begin{aligned}\label{eq:q1q2}
q(\mathbf{y}, \theta) := q_1(\mathbf{y}, \theta) + q_2(\mathbf{y}, \theta) +  2 d_3 \theta
\end{aligned}
\end{equation}
with terms $q_{1,2}$ obtained as derivatives of $W_{1,2}$ with respect to $\theta$ via
\begin{equation}
\begin{aligned}
&q_1(\mathbf{y}, \theta) := - \frac{2c_1}{\sqrt{\det \mathbf{I}(\mathbf{y})}}   \Big( \mathbf{I}(\mathbf{y}) - \mathbf{A}^T(\theta) \mathbf{A}(\theta) \Big) \colon \big( \mathbf{A}^T(\theta) \mathbf{A}(\theta)\big)' ,  \\
&q_2(\mathbf{y}, \theta) := \frac{2 c_2 L_{\Omega}^2 k(\mathbf{y}, \theta)}{|\tilde{\mathbf{u}}_0|^4|\tilde{\mathbf{v}}_0|^4}\Big(  \big[ \mathbf{v}(\theta) \cdot \mathbf{v}'(\theta) \big]' \big[ \tilde{\mathbf{u}}_0 \cdot \mathbf{II}(\mathbf{y}) \tilde{\mathbf{u}}_0 \big]  + \big[ \mathbf{u}(\theta) \cdot \mathbf{u}'(\theta) \big]' \big[ \tilde{\mathbf{v}}_0 \cdot \mathbf{II}(\mathbf{y}) \tilde{\mathbf{v}}_0 \big]\Big).
\end{aligned}
\end{equation}
The last generalized stress $\mathbf{j}(\theta)$ in (\ref{eq:firstVariation}) is
\begin{equation}
\begin{aligned}\label{eq:finalStress}
\mathbf{j}(\theta) := 2 d_2 L_{\Omega}^2 \nabla \theta.
\end{aligned}
\end{equation}
This completes our description of the weak formulation of the  equilibrium equations. In summary, if $(\mathbf{y}, \theta)  \in V_{\Gamma}$ is a local minimizer to the energy $E$, then   (\ref{eq:firstVariation}) holds for all $(\mathbf{v}, \eta) \in V_0$ for the generalized stresses $\mathbf{P}, \mathcal{H}, q, \mathbf{j}$ defined by (\ref{eq:PiolaStress}-\ref{eq:finalStress}).

We now derive the strong form of the equilibrium equations and the corresponding natural boundary conditions under the assumption that the minimizers $(\mathbf{y}, \theta)$ are sufficiently smooth.  A key step in this result is an identity (\ref{eq:importForStrongForm}) on the first integral in (\ref{eq:firstVariation}) derived in Appendix \ref{sec:DeriveStrongForm}. We manipulate the weak form  using this identity to read
\begin{equation}
\begin{aligned}\label{eq:manipWeakForm}
0 &= \int_{\Omega}   \big(\Div \big[ \Div \mathcal{H}(\mathbf{y}, \theta)  - \mathbf{P} (\mathbf{y}, \theta)   \big] - \overline{\mathbf{b}} \big)\cdot \mathbf{w}  \dif x  + \int_{\Gamma_{\text{t}}}  \big( \big[\mathbf{P} (\mathbf{y}, \theta)  - \nabla  \mathcal{H}(\mathbf{y}, \theta) \colon (\mathbf{I} + \mathbf{n}^{\perp} \otimes \mathbf{n}^{\perp}) \big]  \mathbf{n} - \overline{\mathbf{t}} \big) \cdot \mathbf{w} \dif s \\
&  \qquad  + \int_{\Gamma_{\text{m}}} \big[\mathcal{H}(\mathbf{y}, \theta) \colon (\mathbf{n} \otimes  \mathbf{n})   - \overline{\mathbf{m}} \big] \cdot (\nabla \mathbf{w}) \mathbf{n}    \dif s   + \int_{\Omega} \big(  q(\mathbf{y}, \theta)  - \nabla \cdot \mathbf{j}(\theta) \big) \eta  \dif x +  \int_{\partial \Omega} \big( \mathbf{j} (\theta)\cdot \mathbf{n} \big)  \eta \dif s
\end{aligned}
\end{equation}
for all $(\mathbf{w}, \eta) \in V_0$,  after employing that $\mathbf{w} = \mathbf{0}$ a.e.\;on $\Gamma_\text{d}$ and $(\nabla \mathbf{w})\mathbf{n} = \mathbf{0}$  a.e.\;on $\Gamma_{\text{n}}$ to simplify the boundary terms.  For clarity on this formula,  $\mathbf{n}$ denotes the outward normal to $\partial \Omega$, with $\mathbf{n}^{\perp} = \mathbf{R}(\pi/2) \mathbf{n}$ the corresponding unit tangent vector.  In addition, $[\Div \mathcal{H}]_{i\alpha} = [\mathcal{H}]_{i\alpha \beta,\beta}$, $[\mathcal{H}  \colon  ( \mathbf{n} \otimes \mathbf{n})]_i = [\mathcal{H}]_{i \alpha \beta} n_{\alpha} n_{\beta}$,  and $[\nabla \mathcal{H} \colon (\mathbf{I} - \mathbf{n}^{\perp} \otimes \mathbf{n}^{\perp})]_{i \alpha} = [\mathcal{H}]_{i\alpha \beta,\gamma} ( \delta_{\beta \gamma} - n^{\perp}_{\beta} n^{\perp}_{\gamma})$.  To complete the derivation, we test (\ref{eq:manipWeakForm}) against  various subsets of $(\mathbf{w}, \eta) \in V_0$ to isolate different terms in the integral and apply standard localization arguments  case-by-case to deduce the full set of  equilibrium equations  and natural boundary conditions. Skipping the details,  the governing equations in our  constitutive model for parallelogram origami  are
\begin{equation}
\begin{aligned}
\begin{cases}\label{eq:governingEquations}
\Div \big[ \Div \mathcal{H} (\mathbf{y}, \theta) - \mathbf{P} (\mathbf{y}, \theta)  \big]= \overline{\mathbf{b}}  & \text{ in $\Omega$} \\
\nabla \cdot \mathbf{j}(\theta) - q(\mathbf{y}, \theta) = 0  & \text{ in } \Omega \\ 
\mathbf{y}  = \overline{\mathbf{y}} & \text{ on } \Gamma_{\text{d}}  \\
 \big[\mathbf{P} (\mathbf{y}, \theta) - \nabla \mathcal{H}(\mathbf{y}, \theta) \colon ( \mathbf{I} + \mathbf{n}^{\perp} \otimes \mathbf{n}^{\perp})  \big]  \mathbf{n}    = \overline{\mathbf{t}}  &  \text{ on } \Gamma_{\text{t}} \\ 
(\nabla \mathbf{y}) \mathbf{n} = \overline{\mathbf{s}} & \text{ on } \Gamma_{\text{n}} \\
\mathcal{H}(\mathbf{y}, \theta) \colon( \mathbf{n} \otimes \mathbf{n} ) = \overline{\mathbf{m}} & \text{ on } \Gamma_{\text{m}} \\ 
\nabla \theta \cdot \mathbf{n} = 0 & \text{ on } \partial \Omega.
\end{cases} 
\end{aligned}
\end{equation}

The complexity of these equations underlies the fact that a continuum description of metamaterials  requires not only a characterization of the cell-averaged effective deformation $\mathbf{y}$ but also auxiliary field(s) (in this case an angle field $\theta$) that track the mechanical behavior of the cell-by-cell microstructure. Such generalized elastic continuum are known to be quite rich; see for instance Eringen's work on microcontinuum theories \cite{eringen2012microcontinuum}. The first equilibrium equation is the standard one in non-linear elasticity where $\mathbf{P} (\mathbf{y}, \theta) - \Div \mathcal{H} (\mathbf{y}, \theta)$ plays the role of a generalized stress. Indeed, by freezing $\theta$ (so it is no longer an elastic field variable) and dropping the terms in the model depending on $\nabla \nabla \mathbf{y}$, the governing equations in (\ref{eq:governingEquations}) reduce to the familiar ones: $\Div \mathbf{P}(\mathbf{y}) + \overline{\mathbf{b}}= \mathbf{0}$  in $\Omega$ subject to mixed traction and Dirichlet boundary conditions $\mathbf{P}(\mathbf{y}) \mathbf{n} = \overline{\mathbf{t}}$ on $\Gamma_{\text{t}}$ and $\mathbf{y} = \overline{\mathbf{y}}$ on $\Gamma_{\text{d}}$. Of course, $\theta$ is  not frozen but instead an elastic field variable reflecting the actuation of the folds, which helps to reduce the stored elastic energy and stress of the origami under loads. This necessitates another equilibrium equation $\nabla \cdot \mathbf{j}(\theta) - q(\mathbf{y}, \theta) = 0$ in $\Omega$ and boundary condition $\nabla \theta \cdot \mathbf{n}= 0$ on $\partial \Omega$. The higher order gradients of the deformation in the model  account for origami bending. They introduce interesting couplings in the equilibrium equations through the terms $\mathcal{H}(\mathbf{y}, \theta)$, $\mathbf{P}(\mathbf{y}, \theta)$ and $q(\mathbf{y}, \theta)$ and lead to  somewhat peculiar looking traction and generalized moment boundary conditions.  Such boundary conditions, while not an everyday occurrence in modern continuum mechanics research,  are completely consistent with those studied by Toupin \cite{toupin1962elastic}, back when the development of generalized continuum and strain gradient models  was a  topic of the times.

\section{Finite element formulation}\label{sec:FEM}

In this section, we develop a finite element formulation of the governing equations in (\ref{eq:governingEquations}). The main technical challenge is to address  the second gradients  of  deformation, present in the stored energy of the model, in a computationally reasonable way. Traditional finite element discretizations of the deformation, involving  Lagrange polynomials, possess jumps in their derivatives  at internal edges and thus are non-conforming in $H^2(\Omega, \mathbb{R}^3)$. This  means that such discretizations do not accurately capture the $\nabla \nabla \mathbf{y}$ terms present in our model in a straightforward manner.  While $H^2$-conforming finite elements have been known for a long time \cite{argyris,bell}, ensuring the desired smoothness across the edges leads, in practice, to elements of order at least $\geq 4^{\text{th}}$, which become computationally expensive.
Writing an interpolator for these elements is also difficult, meaning that one often ends up imposing the Dirichlet boundary conditions weakly. Another approach is to use subdivision elements \cite{cirak2000subdivision}, which achieve $H^2$-conforming  deformations through meshing strategies that originate from computational geometry \cite{loop1987smooth}. This approach has proven successful at simulating the large deformation response of plates and shells, but is difficult to implement in standard finite element packages. 
We follow instead the interior penalty method in \cite{engel2002continuous}. Specifically, we  use traditional Lagrange polynomials for the discretization, allowing the normal derivatives to jump across edges, but then modify the weak formulation of the governing equations to handle these jumps.

\subsection{Preliminaries}\label{sec:NotationFEM}

Here we briefly collect  some notation  helpful for concisely describing the finite element formulation of (\ref{eq:governingEquations}) using the interior penalty method, which we develop in the next section.

We always consider a  discretization of the domain $\Omega$ via  a shape-regular conforming triangulation $\mathcal{T}^h$  with $h_T :=  \text{diam} (T)$ and $h := \max_{T \in \mathcal{T}_h} h_T$ (see \cite{ciarlet2002finite} for details). We denote by $\mathcal{E}^h$ the set of all edges of $\mathcal{T}^h$, $\mathcal{E}_{\Gamma}^h$ the set of all boundary edges $e \in \mathcal{E}^h$ such that $e \subset \partial \Omega$, and $\mathcal{E}_{\text{int}}^h := \mathcal{E}^h \setminus \mathcal{E}_{\Gamma}^h$ the set of all internal edges. We also consider the  subsets $\mathcal{E}_{\Gamma}^h$ denoted $\mathcal{E}_{\Gamma_{\text{d}}}^h, \mathcal{E}_{\Gamma_{\text{t}}}^h, \mathcal{E}_{\Gamma_{\text{n}}}^h, \mathcal{E}_{\Gamma_{\text{m}}}^h$, which collect the edges on $\Gamma_{\text{d}}$, $\Gamma_{\text{t}},$ $\Gamma_{\text{n}}$ and $\Gamma_{\text{m}}$, respectively.  
For any interior edge $e \in \mathcal{E}_{\text{int}}^h$ shared by two triangles, denoted  $T^{+}$ and $T^{-}$, we define $\mathbf{n}_e$ as the unit normal to $e$ that points from $T^{+}$ to $T^{-}$. As field variables defined on the triangulation $\mathcal{T}^h$ can jump across edges, we define the jump and average of a field variable $(\cdot)$ (any scalar, vector, or $n^{\text{th}}$-order tensor field) across the edge $e$ by  $\llbracket (\cdot) \rrbracket := (\cdot)^{+} - (\cdot)^{-}$  and $\ldblbrace (\cdot) \rdblbrace := \frac{1}{2} ( (\cdot)^{+} + (\cdot)^{-})$, respectively, where $(\cdot)^{\pm}$ denotes the limit of the restriction of the field variable $(\cdot)\vert_{T^{\pm}}$  to the edge $e$.
For an exterior edge $e \in \mathcal{E}_{\Gamma}^{h}$, we take $\mathbf{n}_e \equiv \mathbf{n}$ to be the outward unit normal of $T$ on $e$. Finally, the length of any edge $e \in \mathcal{E}^h$ is given by $|e|$. 

We define the approximation space for the deformation and angle field on our triangulation $\mathcal{T}^h$ as 
\begin{equation}
\begin{aligned}
\label{eq:fem space}
V^h :=  \big\{ (\mathbf{y}_h, \theta_h)  \in C^0(\Omega, \mathbb{R}^3) \times C^0(\Omega,\mathbb{R}) \colon   \mathbf{y}_h\vert_{T} \in P^2(T)\;\;\text{and} \;\;  \theta_h\vert_{T} \in P^1(T)\;\;  \forall \;\; T \in \mathcal{T}^h \big\} 
\end{aligned}
\end{equation} 
where $P^k(T)$ denotes the space of Lagrange polynomials of degree $\leq k$ on the triangle $T$, and $C^0$ denotes the space of continuous functions.
Note that $\mathbf{y}_h$ need not be in $H^2$, which makes the proposed method nonconforming.
We write the classical Scott--Zhang interpolator \cite{ern_guermond} for $V^h$ as $\mathcal{I}_h$ and take into account Dirichlet boundary conditions on $\mathbf{y}_h$ by defining
\begin{equation}
\begin{aligned}\label{eq:discreteSpace}
V^h_{\Gamma_{\text{d}}}:= \{ (\mathbf{y}_h, \theta_h)  \in V^h \colon   \mathbf{y}_h = \mathcal{I}_h \overline{\mathbf{y}} \text{ on } \Gamma_{\text{d}} \}.
\end{aligned}
\end{equation}
The Dirichlet boundary condition on the slope, $\nabla \mathbf{y}_h \mathbf{n} = \overline{\mathbf{s}}$ on $\Gamma_{\text{n}}$, are imposed weakly.
The discrete solution space is then $V^h_{\Gamma_{\text{d}}}$. The associated homogeneous space is defined as $V_{0}^h$, and is given by (\ref{eq:discreteSpace}) but with $\overline{\mathbf{y}} = \mathbf{0}$. A generic deformation $\mathbf{y}_h$ in $V^h$ can have a discontinuous gradient across $e \in \mathcal{E}^h_{\text{int}}$, so we cannot directly define a Hessian (though we can define a gradient in the usual sense as $\mathbf{y}_h$ is continuous across $e$).
We define instead a broken Hessian $\nabla_h \nabla \mathbf{y}_h := \sum_{T \in \mathcal{T}_h} \nabla (\nabla \mathbf{y}_h)_T \chi_T$, where $\chi_T$ is the indicator function of $T \in \mathcal{T}^h$, which consists of computing the Hessian elementwise (see \cite{di2011mathematical} for more details). Finally, for any $(\mathbf{y}_h, \theta_h) \in V^h$, we write the generalized stresses via a slight abuse of notation as $\mathbf{P}(\mathbf{y}_h, \theta_h)$, $\mathcal{H}(\mathbf{y}_h, \theta_h)$, $q(\mathbf{y}_h, \theta_h)$, where terms that depend on the Hessian of $\mathbf{y}_h$ (which is not well-defined) are replaced by the broken Hessian $\nabla_h \nabla \mathbf{y}_h$.

\subsection{Weak formulation using the interior penalty method}\label{sec:WeakFormFEM}

The problem of solving for an  equilibrium solution of the governing equations in (\ref{eq:governingEquations}) can be written succinctly  in weak form as follows: Find a $(\mathbf{y}, \theta) \in V_{\Gamma}$  such that 
\begin{equation}
\begin{aligned}\label{eq:weakFormConcise}
   \mathscr{A}_0((\mathbf{y}, \theta); (\mathbf{w}, \eta)) = \mathscr{B}_0(\mathbf{w}) \quad \text{ for all }  (\mathbf{w}, \theta) \in  V_0 
\end{aligned}
\end{equation}
where, following (\ref{eq:firstVariation}), $\mathscr{A}_0((\mathbf{y}, \theta); (\mathbf{w}, \eta)) := \int_{\Omega} \{  \mathbf{P}(\mathbf{y}, \theta) \colon  \nabla \mathbf{w}   +  \big \langle \mathcal{H}(\mathbf{y}, \theta) ,  \nabla \nabla \mathbf{w} \big \rangle  + q(\mathbf{y}, \theta)  \eta  + \mathbf{j}(\theta) \cdot \nabla \eta  \} \dif x$  and $\mathscr{B}_0(\mathbf{w}) :=     \int_{\Omega} \overline{\mathbf{b}} \cdot \mathbf{w}  \dif x + \int_{\Gamma_{\text{t}}} \overline{\mathbf{t}} \cdot \mathbf{w}  \dif s +  \int_{\Gamma_{\text{m}}} \overline{\mathbf{m}} \cdot (\nabla \mathbf{w} ) \mathbf{n}   \dif s$ denote the parts of the weak form associated to the stored elastic energy and external loads, respectively. Of course, solving for the equilibrium solution numerically necessitates some approximations. 

We now formulate conditions for an approximate equilibrium solution using $C^0$ finite elements  and the interior penalty method. In short, after discretizing the domain, deformation and angle field via the description in  the previous section, we modify the weak formulation in (\ref{eq:weakFormConcise}) to read: Find a $(\mathbf{y}_h, \theta_h) \in V^h_{\Gamma_\text{d}}$ such that 
\begin{equation}
\label{eq:discrete problem}
    \begin{aligned}
        \mathscr{A}^h((\mathbf{y}_h, \theta_h); (\mathbf{w}_h, \eta_h)) = \mathscr{B}_0(\mathbf{w}_h) \quad \text{ for all }  (\mathbf{w}_h, \theta_h) \in  V^h_0
    \end{aligned}
\end{equation}
The left-hand side of this equation is the sum of four terms  
\begin{equation}
\begin{aligned}\label{eq:A0h}
    \mathscr{A}^h := \mathscr{A}^h_0 + \mathscr{A}^h_{\text{con}} + \mathscr{A}^h_{\text{sta}} + \mathscr{A}^h_{\text{bnd}},
\end{aligned}
\end{equation}
 each evaluated on $(\mathbf{y}_h, \theta_h) \in V^h_{\Gamma_{\text{d}}}$ and $(\mathbf{w}_h, \eta_h) \in V^h_{0}$. The first term is a discrete analog of the internal energy term $\mathscr{A}_0((\mathbf{y}, \theta); (\mathbf{w}, \eta))$  in  (\ref{eq:weakFormConcise}). It  is given by
 \begin{equation}
     \begin{aligned}
\mathscr{A}^h_0 ( (\mathbf{y}_h, \theta_h ); (\mathbf{w}_h, \eta_h))  := \sum_{T \in \mathcal{T}^h} \int_{T} \Big\{  \mathbf{P}(\mathbf{y}_h, \theta_h) \colon  \nabla \mathbf{w}_h   +  \big \langle \mathcal{H}(\mathbf{y}_h, \theta_h) ,  \nabla_h \nabla \mathbf{w}_h \big \rangle  + q(\mathbf{y}_h, \theta_h)  \eta_h  + \mathbf{j}(\theta_h) \cdot \nabla \eta_h  \Big\} \dif x.
     \end{aligned}
 \end{equation}
The second term  $\mathscr{A}_{\text{con}}^h$ is introduced for consistency, i.e., to ensure that any equilibrium solution to the original problem in  (\ref{eq:governingEquations})  is also a solution to the weak form in our numerical   method. It is of the form 
 \begin{equation}
 \begin{aligned}
 \label{eq:consistency}
\mathscr{A}^h_{\text{con}}( (\mathbf{y}_h, \theta_h ); (\mathbf{w}_h, \eta_h)) := - \sum_{e \in \mathcal{E}_{\text{int}}^h} \int_{e}    \big(\ldblbrace \mathcal{H}(\mathbf{y}_h, \theta_h)\rdblbrace \cdot \mathbf{n}_e \big) \colon \llbracket \nabla \mathbf{w}_h \rrbracket  \dif s -  \int_{\Gamma_{\text{n}}}  \big( \mathcal{H}(\mathbf{y}_h, \theta_h) \cdot \mathbf{n} \big) \colon \nabla \mathbf{w}_h  \dif s,
 \end{aligned}
 \end{equation}
where the integrals over each $e \in \mathcal{E}_{\text{int}}^h$ arise due to the jumps in both $\nabla \mathbf{y}_h$ and $\nabla \mathbf{w}_h$ along edges in the mesh, and the one over $\Gamma_{\text{n}}$ results from imposing the slope boundary conditions weakly in this formulation. Appendix \ref{sec:appendix}  provides a derivation of (\ref{eq:consistency}) (see also  \cite{di2011mathematical,engel2002continuous} for further examples of such terms and their justification). 
The third term  $\mathscr{A}_{\text{sta}}^h$ seeks to stabilize/minimize the aforementioned jumps in the deformation gradient via
\begin{equation}
\begin{aligned}
     \mathscr{A}^h_{\text{sta}} ( (\mathbf{y}_h, \theta_h ); (\mathbf{w}_h, \eta_h)):=  \sum_{e \in \mathcal{E}^h_\text{int}} \frac{\alpha}{|e|} \int_{e} \llbracket \nabla \mathbf{y}_h \rrbracket : \llbracket \nabla \mathbf{w}_h \rrbracket  \dif s.
     \end{aligned}
 \end{equation}
It corresponds to a jump energy of the form $\tfrac{1}{2}\sum_{e \in \mathcal{E}^h_\text{int}} \frac{\alpha}{ |e|}  \int_{e}  \llbracket \nabla \mathbf{y}_h \rrbracket^2 \dif s$ and  provides  coercivity of the overall discrete energy for $\alpha > 0$ large enough,  ensuring the existence of a solution to (\ref{eq:discrete problem}).
The final term $\mathscr{A}_{\text{bnd}}^h$ seeks to impose the slope boundary condition weakly via 
\begin{equation}
\label{eq:bilinear bnd}
    \mathscr{A}^h_\text{bnd}( (\mathbf{y}_h, \theta_h) ; (\mathbf{w}_h, \eta_h)) :=  \sum_{e \in \mathcal{E}_{\Gamma_{\text{n}}}^h} \frac{\alpha}{|e|} \int_{e}  \big((\nabla \mathbf{y}_h)\mathbf{n} - \bar{\mathbf{s}} \big) \cdot (\nabla \mathbf{w}_h)\mathbf{n} \dif s.
\end{equation}
 It  arises from an energy  of the form $\tfrac{1}{2}  \sum_{e  \in \mathcal{E}_{\Gamma_{\text{n}}}^h } \tfrac{\alpha}{|e|}\int_{e} |(\nabla \mathbf{y}_h) \mathbf{n} - \bar{\mathbf{s}} |^2 \dif s$, and thus furnishes $(\nabla \mathbf{y}_h) \mathbf{n} \approx \overline{\mathbf{s}}$ by way of a least square fit.

 \subsection{Implementation}

The weak form in \eqref{eq:discrete problem} is a nonlinear algebraic system of equations. 
We implement it in \texttt{Firedrake} \cite{FiredrakeUserManual}, which is a finite element library with a \texttt{Python} interface that  allows one to choose a finite element space \eqref{eq:fem space} and then write the discrete weak formulation \eqref{eq:discrete problem}.
We use Firedrake's integrated Newton solver to find a solution to  \eqref{eq:discrete problem}.
As the convergence of Newton solvers are sensitive to the initial guess, we provide  a method to compute an initial guess for $(\mathbf{y}_h,\theta_h)$ in Appendix \ref{sec:InitialGuess}.  Also, we typically employ a loading protocol that monotonically increases the load in small increments until finally solving the desired large deformation boundary value problem. We use the method in Appendix \ref{sec:InitialGuess} to initialize the simulation in the first step, while  all  subsequent steps  are initialized using the converged solution of the prior increment.  Finally, it is well-known that penalty methods, like the one proposed here, are sensitive to the choice of $\alpha$. By trial and error, we have found that $\alpha = 0.1$ works well when  $c_1 \in [1,5]$ and the other moduli $c_2, d_1,\ldots, d_3$ are consistent with the scalings discussed in Section \ref{ssec:modelIntro}. We use this value for $\alpha$ in all forthcoming simulations in Section \ref{sec:Examples}.  

\section{Examples}\label{sec:Examples}

\begin{figure}[t!]
\hspace*{-1.4cm}\centering
\includegraphics[width=.75\textwidth]{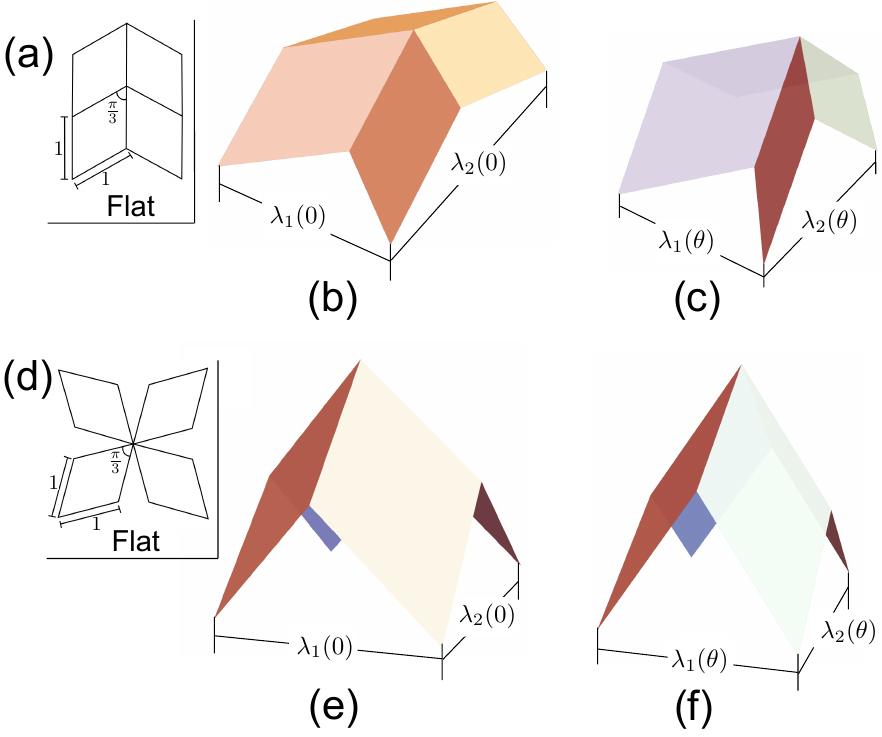} 
\caption{The Miura and Eggbox origami unit cells used in our catalog of examples.  (a,d) Flat states. Each unit cell is built using four rhombi panels with side lengths 1 and acute angle $\pi/3$. (b,e) Partially folded reference states.   $\lambda_{1}(0)$ and $\lambda_2(0)$ denote the side lengths from the formulas in (\ref{eq:lambdaMO}) and (\ref{eq:lambdaEO}), evaluated at $\theta =0$. (c,f) Actuated states. The side lengths change from $\lambda_i(0)$ to $\lambda_i(\theta)$, $i = 1,2$, under the actuation, which is  parameterized by $\theta$.}
\label{Fig:CellsRefSec5}
\end{figure}

This section showcases a variety of examples of parallelogram origami under loads, modeled using the constitutive framework in Section \ref{sec:modelForSimulations} along with the numerical method in Section \ref{sec:FEM}.   We  explore the effective mechanical behavior of the two canonical examples of parallelogram origami: Miura and Eggbox origami. Miura is an auxetic metamaterial with a negative Poisson's ratio; Eggbox is not auxetic and thus has
 a positive Poisson's ratio.  Both patterns posses unit cells built from the single parallelogram panel and have orthogonal Bravais lattice vectors $\mathbf{u}(\theta) = \lambda_1(\theta) \mathbf{e}_1$ and $\mathbf{v}(\theta) = \lambda_2(\theta) \mathbf{e}_2$ with simple trigonometric expressions, making them convenient for analytical and numerical investigation. We focus on examples where the unit cell is composed of four identical rhombi  panels of side length 1 and acute angle $\pi/3$ (see Fig.\;\ref{Fig:CellsRefSec5}(a) and (d)). We also take the reference configuration of all the examples in this section to consist of the same partially folded unit cells (Fig.\;\ref{Fig:CellsRefSec5}(b) for Miura origami, Fig.\;\ref{Fig:CellsRefSec5}(e) for Eggbox). After some tedious algebra (which we do not detail), the actuation of the Bravais lattice vectors for Miura origami in this setting is given by stretches $\lambda_1(\theta) \equiv  \lambda_1^{(\text{MO})}(\theta)$ and $\lambda_2(\theta) \equiv \lambda_2^{(\text{MO})}(\theta)$ that satisfy  
\begin{equation}
\begin{aligned}\label{eq:lambdaMO}
    \lambda_{1}^{(\text{MO})}(\theta) = \sqrt{3} \cos\left( \frac{\theta + \pi/6}{2} \right) , \quad \lambda_{2}^{(\text{MO})}(\theta) = 2 \sqrt{2} \Big(5- 3\cos(\theta+ \pi/6)   \Big)^{-1/2}, \quad \theta \in \left(-\frac{\pi}{6}, \frac{5\pi}{6} \right) .  
\end{aligned}
\end{equation}
For Eggbox origami, the analogous stretches  $\lambda_1(\theta) \equiv \lambda_1^{(\text{EO})}(\theta)$ and $\lambda_2(\theta) \equiv \lambda_2^{\text{(EO)}}(\theta)$  are 
\begin{equation}\label{eq:lambdaEO}
    \begin{aligned}
      \lambda_{1}^{(\text{EO})}(\theta) = 2 \sin\Big[ \frac{1}{2} \Big(  \arccos\big(1 - \cos \theta\big) - \theta   \Big) \Big], \quad \lambda_{2}^{(\text{EO})}(\theta) =\lambda_1^{(\text{EO})}(-\theta) , \quad \theta \in \big(-\frac{\pi}{3} , \frac{\pi}{3} \big) . 
    \end{aligned}
\end{equation}
Fig.\;\ref{Fig:CellsRefSec5}(c) and (f) illustrate the actuation. In all examples, whether Miura or Eggbox, we take the effective reference domain $\Omega$ to be 
\begin{equation}
\begin{aligned}\label{eq:DomainForExamples}
\Omega = (0, \lambda_1(0)) \times (0, \lambda_2(0)) \approx  \begin{cases}
(0, 1.67) \times ( 0, 1.82)  & \text{ for Miura} \\
(0, 1.41) \times (0, 1.41) & \text{ for Eggbox}.
\end{cases}
\end{aligned}
\end{equation}
This choice, while seemingly more cumbersome than fixing a unit square domain, ensures that the aspect ratio of $\Omega$ is the same as that of a unit cell of the origami. As such, we can compare analytical and simulation results of the effective model directly to origami deformations with a \textit{discrete} $M\times M$ number of cells. We do such a comparison for all examples in sequel. The origami deformations are constructed to match that of the effective model by  the method outlined in Appendix \ref{sec:oriDefs}.

Note  that the design space of all parallelogram origami is large (in fact, 7 dimensional\footnote{A parallelogram origami unit cell has 4 lengths  and 4 sector angles corresponding to the sides of the parallelograms and  their shape.  However, one of the lengths can be fixed (set to 1, for instance) without loss of generality since the elastic model  does not change under a dilation of the pattern.}), yet we are focusing on just two designs here. This is purposeful. Our primary interest in this work is to introduce and validate the model  and numerical method. Due to their familiarity, the Miura and Eggbox patterns make excellent templates for this validation. Also, as we will show, these patterns turn out to be quite rich in their own right, with many distinguishing features worthy of a thorough investigation and comparison. %Even so, it remains a challenge to come up with simple explicit parameterization for $\mathbf{u}(\theta)$ and $\mathbf{v}(\theta)$ for broad classes of parallelogram origami designs. We are actively working on eliminating this roadblock, 
A more detailed examination of the behavior of broad classes  of parallelogram origami under loads  will be the topic of future work.

\subsection{Mechanism deformations}\label{ssec:PureMech}

We first study the effective mechanical response of parallelogram origami patterns under a pure mechanism actuation. This setting corresponds to Dirichlet boundary conditions of the form $\overline{\mathbf{y}}(\mathbf{x}) = \mathbf{A}(\overline{\theta}) \mathbf{x}$ on $\partial \Omega$, where $\mathbf{A}(\theta)$ is the shape tensor in (\ref{eq:shapeTensor}) and $\overline{\theta} \in (\theta^{-}, \theta^+)$ sets the level of actuation. We also assume that $\overline{\mathbf{b}} =  \overline{\mathbf{m}} = \mathbf{0}$, so there are no applied forces. Thus,  in the absence of a mechanical resistance to folding ($d_3 = 0$), the energy minimizing  (and zero energy) deformation and angle field in the model are that of the purely mechanistic response
\begin{equation}
\begin{aligned}
\mathbf{y}(\mathbf{x}) = \mathbf{A}(\overline{\theta}) \mathbf{x}, \quad \theta(\mathbf{x}) = \overline{\theta} \quad \text{ in } \Omega.
\end{aligned}
\end{equation}
The situation is more complicated when $d_3 >0$. The folds prefer not to be actuated $(\theta  = 0)$,  leading to stresses that depend non-linearly on $\overline{\theta}$.

\begin{figure}
\centering
\hspace*{-.6cm}
\includegraphics[width=0.80\textwidth]{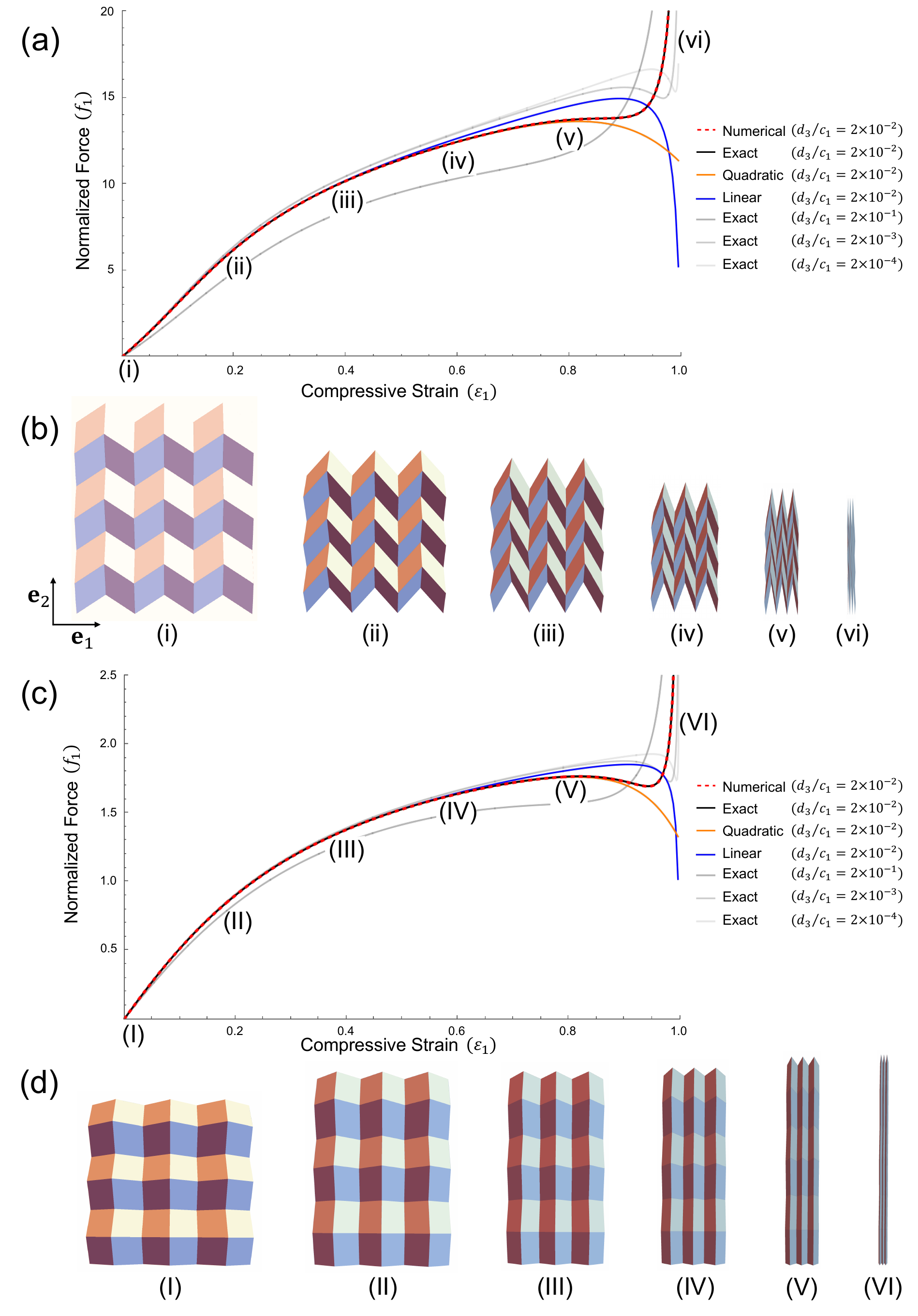} 
\caption{Normalized force versus compressive strain for the pure mechanism in Miura and Eggbox origami. (a) The force plot for Miura origami, and (b) its representative actuation for six points (i)-(vi) along the curve. (c) The force plot for Eggbox origami, and (d) its representative actuation for six points (I)-(VI) along the curve. The normalized force and compressive strain are defined in (\ref{eq:normForce}).}
\label{Fig:ForcePlotVert}
\end{figure}

We obtain an analytical expression for the stresses as follows. First, observe that the ansatz $(\mathbf{y}(\mathbf{x}), \theta(\mathbf{x})) = (\mathbf{A}(\overline{\theta}) \mathbf{x}, \overline{\theta} + \delta \theta)$  resolves the boundary conditions and the first equilibrium equation in (\ref{eq:governingEquations}) automatically. The second equilibrium equation $\nabla \cdot \mathbf{j}(\theta) - q(\mathbf{y}, \theta) = 0$ reduces to a purely algebraic equation of the form  
\begin{equation}
    \begin{aligned}\label{eq:algebraic}
      q_1( \mathbf{A}(\overline{\theta}) \mathbf{x}, \overline{\theta} + \delta \theta) + 2 d_3( \overline{\theta} + \delta \theta) = 0 
       \end{aligned}
\end{equation}
for $q_1(\mathbf{y}, \theta)$ in (\ref{eq:q1q2}). Since $q_1(\mathbf{A}(\overline{\theta}) \mathbf{x}, \bar{\theta}) = 0$ and $\partial_{\theta}q_{1} (\mathbf{A}(\overline{\theta}) \mathbf{x}, \bar{\theta})  \neq 0$, this equation has a solution $\delta \theta$ provided $d_3/c_1> 0$ is sufficiently small, which is the physically relevant regime. To approximate the solution, note that (\ref{eq:algebraic}) satisfies 
\begin{equation}
    \begin{aligned}
       \frac{3}{2} g'(\overline{\theta}) \delta \theta^2 + 2  \big( g(\overline{\theta})   + 
 j(\overline{\theta}) \tfrac{d_3}{c_1} \big) \delta \theta + 2 \frac{d_3}{c_1} j(\overline{\theta}) \overline{\theta} =  O( \delta \theta^3) 
    \end{aligned}
\end{equation}
for $g(\overline{\theta}) := |\big(\mathbf{A}^T(\overline{\theta}) \mathbf{A}(\overline{\theta})\big)'|^2$ and $j(\overline{\theta}) := \sqrt{\det \big(\mathbf{A}^T(\overline{\theta}) \mathbf{A}(\overline{\theta})\big)} $. Hence, neglecting the terms of order $O(\delta \theta^3)$  leads to
\begin{equation}
\begin{aligned}\label{eq:approxMech1}
    \delta \theta \approx - 2\frac{\tfrac{d_3}{c_1} j(\overline{\theta}) \overline{\theta}}{\big( g(\overline{\theta}) + j(\overline{\theta}) \tfrac{d_3}{c_1}\big)}   \left(1 + \sqrt{1- 3\frac{\tfrac{d_3}{c_1}g'(\overline{\theta})j(\overline{\theta}) \overline{\theta}}{\big( g(\overline{\theta}) + j(\overline{\theta}) \tfrac{d_3}{c_1}\big)^2}  }\right)^{-1}
\end{aligned}
\end{equation}
using the quadratic formula. The Piola-Kirchhoff stress at quadratic order is then   
\begin{equation}\label{eq:approxMech2}
    \begin{aligned}
        \mathbf{P}\big( \mathbf{A}(\overline{\theta})\mathbf{x}, \overline{\theta} + \delta \theta\big) \approx   -4 \frac{c_1}{j(\overline{\theta})} \mathbf{A}(\overline{\theta})\Big[ \delta \theta  \big(\mathbf{A}^T(\overline{\theta}) \mathbf{A}(\overline{\theta})\big)' + \tfrac{\delta \theta^2}{2} \big(\mathbf{A}^T(\overline{\theta}) \mathbf{A}(\overline{\theta})\big)'' + \tfrac{\delta \theta^2}{4} g(\overline{\theta}) \big( \mathbf{A}^T(\overline{\theta}) \mathbf{A}(\overline{\theta})\big)^{-1}\Big].
    \end{aligned}
\end{equation}

As these formulas highlight, even the simplest example of a pure mechanism has a rich nonlinear relationship between mechanical deformation and load.   Fig\;\ref{Fig:ForcePlotVert}(a-b) illustrate the mechanical response concretely in the context of Miura origami; Fig\;\ref{Fig:ForcePlotVert}(c-d) do likewise for Eggbox origami.   Since  these examples possess orthogonal Bravais lattices vectors (in (\ref{eq:lambdaMO}) and (\ref{eq:lambdaEO})),  we plot the normalized compressive force in the $\mathbf{e}_1$ direction versus the  compressive strain in that direction using the formulas 
\begin{equation}
\begin{aligned}\label{eq:normForce}
&f_1(\overline{\theta})  := - \lambda_2(0) \frac{[\mathbf{P}\big( \mathbf{A}(\overline{\theta})\mathbf{x}, \overline{\theta} + \delta \theta\big)]_{11}}{d_3} \quad \text{ and } \quad  
&\varepsilon_1(\overline{\theta}) := - \frac{\lambda_1(\overline{\theta}) - \lambda_1(0)}{\lambda_1(0)}. 
\end{aligned}
\end{equation}
Notice from the expressions in (\ref{eq:approxMech1}) and  (\ref{eq:approxMech2}) that the folding modulus $d_3$, not the bulk modulus $c_1$, sets the scale for the force. Furthermore, the expression for the normalized force depends on the ratio $d_3/c_1$. We therefore consider four decades of $d_3/c_1$ in the plots in Fig.\;\ref{Fig:ForcePlotVert}. We also compare the exact solution for the value $d_3/c_1 = 2 \times 10^{-2}$ to the one obtained using our numerical method, and to linear and quadratic approximations of the curves.\footnote{The linear curve is obtained by replacing the exact $\delta \theta$ solving (\ref{eq:algebraic}) with $\delta \theta_{\text{lin}} = - 2 \tfrac{d_3}{c_1} j(\overline{\theta}) \overline{\theta}/\big(g(\overline{\theta}) + j(\overline{\theta}) \tfrac{d_3}{c_1}\big)$ and $\mathbf{P}(\mathbf{A}(\overline{\theta}) \mathbf{x}, \overline{\theta} + \delta \theta)$ in the formula for $f_1(\overline{\theta})$ with $-4 \big(c_1 \delta \theta_{\text{lin}}/ j(\overline{\theta}) \big) \mathbf{A}(\overline{\theta}) \big(\mathbf{A}^T(\overline{\theta}) \mathbf{A}(\overline{\theta})\big)'$. The quadratic curve is obtained by using the approximations in (\ref{eq:approxMech1}) and (\ref{eq:approxMech2}).} The agreement between the exact and numerical solutions are essentially perfect in both examples throughout the entire actuation process. However, the two approximating curves deviate markedly at a large values of $\overline{\theta}$, presumably  because   both   $g(\overline{\theta})$ and $j(\overline{\theta})$ approach  $0$ as $\overline{\theta}$ goes to the fully folded states in Miura and Eggbox origami, making the Taylor expansions unreliable at large actuation.

 Perhaps the most interesting aspect of these plots is the stress-softening behavior. This type of behavior was first theoretically reported in \cite{wei2013geometric} and subsequently reproduced in a bar and hinge model in \cite{liu2017nonlinear}.  We see from our curves that this loss of monotonicity occurs when $d_3/c_1$ is sufficiently small. The curves also quickly spring back up, a behavior that arises because  the bulk energy $W_{1}( \theta, \mathbf{I}(\mathbf{y}))$ in (\ref{eq:bulkTerm}-\ref{eq:W12Def}) is scaled by $1/\sqrt{\det \mathbf{I}(\mathbf{y})}$, which elastically penalizes deformations where the panels are close to overlapping one another. In particular, when the actuation approaches the  fully folded states (see Fig.\;\ref{Fig:ForcePlotVert}(vi) and (VI)), the second term in the stress $\mathbf{P}_1(\mathbf{y}, \theta)$ in (\ref{eq:P1P2}) dominates the first, leading the force curves to veer upwards.  

It is also interesting to compare and contrast the curves for Miura and Eggbox. We see that both sets of curves are qualitatively  similar --- the normalized force increases sub-linearly, before dipping a bit (depending on the ratio $d_3/c_1)$ and then springing back up. However, most notably,  the Miura pattern experiences significantly  larger normalized force compared to the Eggbox one, even though the comparison is for the same underlying moduli parameters $c_1$ and $d_3$. This feature highlights the dominant role of geometry in this calculation. To get more granular, observe that the normalized stiffness of each pattern in its undeformed configuration is given by 
\begin{equation}
\begin{aligned}
\frac{f_1'(0)}{\varepsilon_1'(0)} = \frac{8}{g(0) + \tfrac{d_3}{c_1}}, 
\end{aligned}
\end{equation}
where $g(0)$ depends only on the initial geometry of the pattern. 
It happens that   $g^{(\text{MO})}(0) \approx 0.461$  for Miura and $g^{(\text{EO})}(0) = 2$ for Eggbox. Thus,  the initial stiffness for the Miura is more than four times larger than that of the Eggbox pattern solely due to geometry.  This feature more-or-less persists throughout the entirety of actuation.

\subsection{Pure bending and twisting deformations}

Having examined the pure mechanism deformations both analytically and numerically, we now explore pure bending and twisting modes. These deformation modes arise as  simple but inhomogeneous solutions to the purely geometric constraints on the first and second fundamental forms in (\ref{eq:firstFundConstraint}) and (\ref{eq:secFundConstraint}), i.e., they come from pairs  $(\mathbf{y}, \theta) \colon \Omega \rightarrow \mathbb{R}^3 \times (\theta^{-}, \theta^+)$ that satisfy  
\begin{equation}
    \begin{aligned}\label{eq:purelyGeom}
    \mathbf{I}(\mathbf{y})  = \mathbf{A}^T(\theta) \mathbf{A}(\theta), \quad \big[ \mathbf{v}(\theta) \cdot \mathbf{v}'(\theta) \big] \big[ \tilde{\mathbf{u}}_0 \cdot \mathbf{II}(\mathbf{y}) \tilde{\mathbf{u}}_0 \big]  + \big[ \mathbf{u}(\theta) \cdot \mathbf{u}'(\theta) \big] \big[ \tilde{\mathbf{v}}_0 \cdot \mathbf{II}(\mathbf{y}) \tilde{\mathbf{v}}_0 \big]  = 0.
    \end{aligned}
\end{equation}
  In our prior work \cite{xu2024derivation}, we developed  a large class of such solutions. We briefly reproduce some of that analysis to help  explain the origins and nature  of these deformation modes (see  Section 5.1 and Appendix C in \cite{xu2024derivation} for more details). 
  
  \begin{figure}[t!]
\centering
\includegraphics[width=0.87\textwidth]{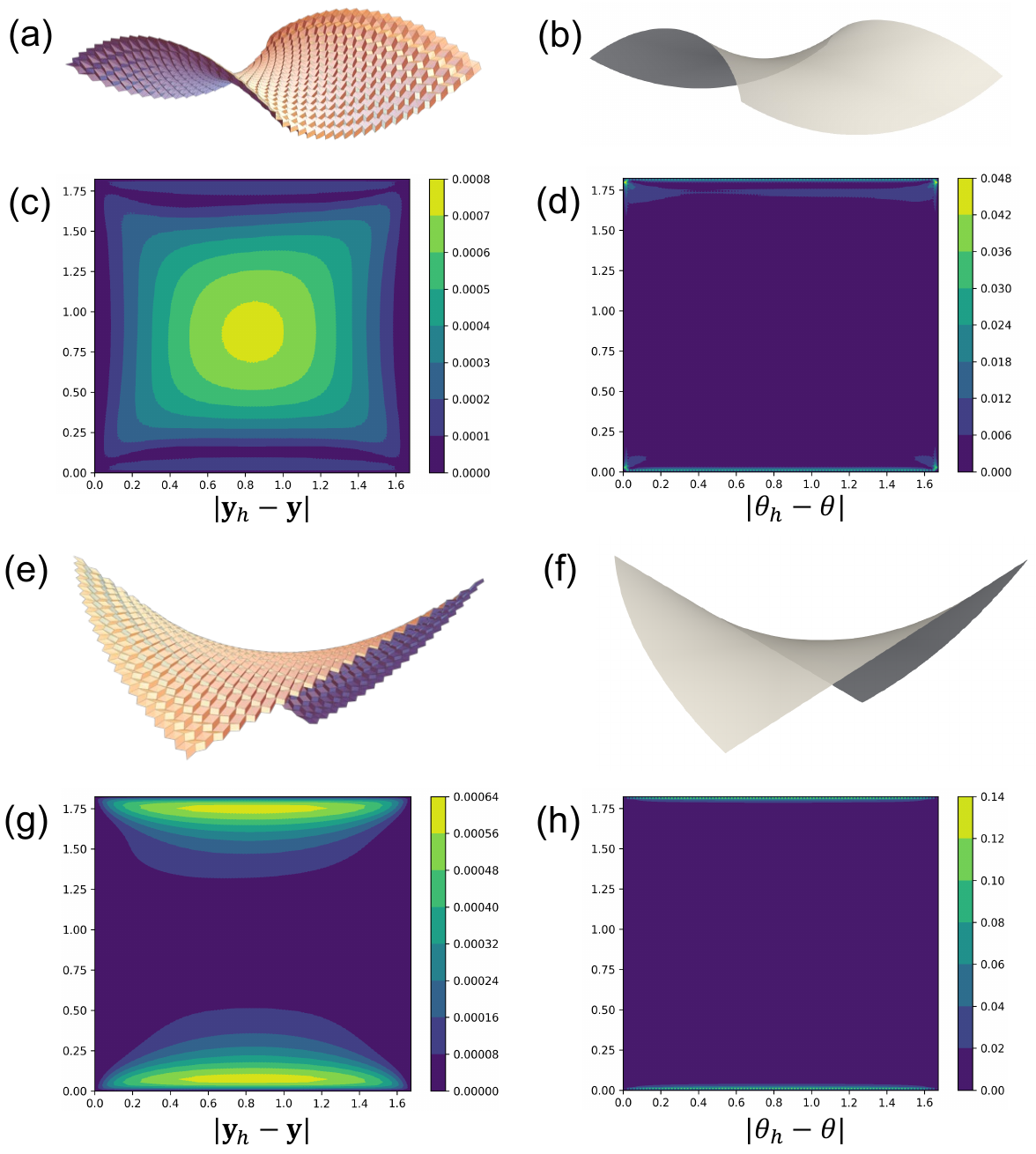} 
\caption{Pure bending and twisting modes of Miura origami. (a) Origami deformation for an analytical solution $(\mathbf{y}, \theta)$ of a pure bending mode. (b) Corresponding finite element simulation $(\mathbf{y}_h, \theta_h)$ under Dirichlet boundary conditions specified by the analytical solution. The error in (c) the effective displacement and (d) the angle field  between the analytical and simulated solutions, plotted on the reference domain. For reference, the characteristic out-of-plane displacement in bending is $\sim 0.2$ and $\max |\theta_h| =0.6254$.  (e-h) Analogous plots for pure twisting.  The characteristic out-of-plane displacement is $\sim 0.4$ and $\max |\theta_h| =0.4964$.}
\label{Fig:MiuraPureBendTwist}
\end{figure}
  
  \begin{figure}[t!]
\centering
\includegraphics[width=0.87\textwidth]{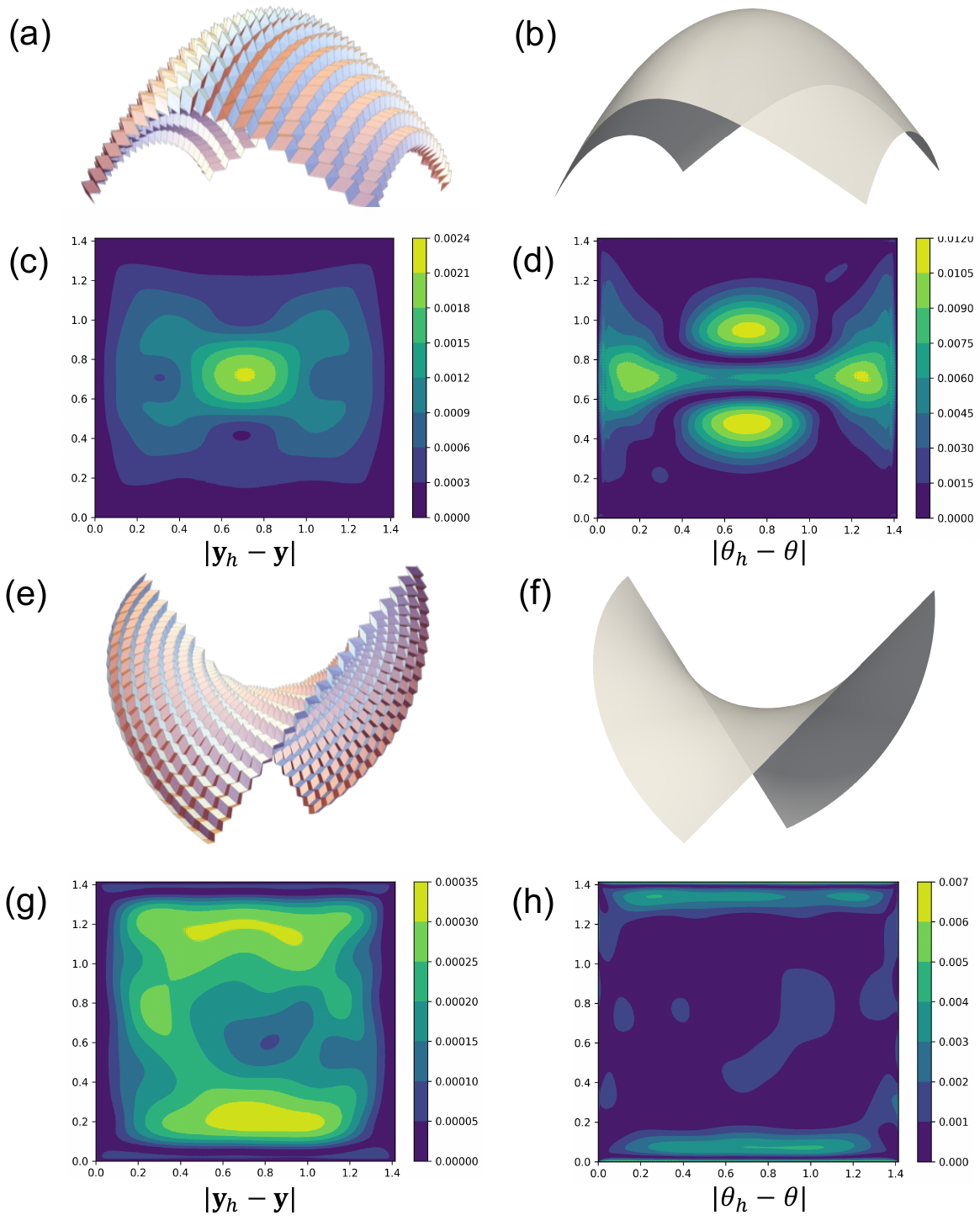} 
\caption{Pure bending and twisting modes of Eggbox origami. (a) Origami deformation for an analytical solution $(\mathbf{y}, \theta)$ of a pure bending mode. (b) Corresponding finite element simulation $(\mathbf{y}_h, \theta_h)$ under Dirichlet boundary conditions specified by the analytical solution. The error in (c) the effective displacement and (d) the angle field  between the analytical and simulated solutions, plotted on the reference domain. For reference, the characteristic out-of-plane displacement in bending is $\sim 0.3$ and $\max |\theta_h| = 0.4217$.  (e-h) Analogous plots for pure twist.  The characteristic out-of-plane displacement is $\sim 0.5$ and $\max |\theta_h|=0.5035$.}
\label{Fig:EggboxPureBendTwist}
\end{figure}

Let's focus on the setting of orthogonal Bravais lattice vectors $\mathbf{u}(\theta) = \lambda_1(\theta) \mathbf{e}_1$ and $\mathbf{v}(\theta) = \lambda_2(\theta) \mathbf{e}_2$, consistent with the Miura and Eggbox examples. In such cases, the constraints in  (\ref{eq:purelyGeom}) are equivalent to parameterizing the first and second fundamental forms as 
\begin{equation}
\begin{aligned}\label{eq:getParam}
\mathbf{I}(\mathbf{y}) = \begin{pmatrix}\tfrac{\lambda^2_1(\theta)}{\lambda_1^2(0)} & 0 \\ 0 & \tfrac{\lambda^2_2(\theta)}{\lambda_2^2(0)} \end{pmatrix}, \quad \mathbf{II}(\mathbf{y}) =  \begin{pmatrix} -\tfrac{\lambda_1'(\theta)}{\lambda_1(\theta)}  \kappa  & \tau \\ \tau & \tfrac{\lambda_2'(\theta)}{\lambda_2(\theta)}  \kappa \end{pmatrix}
\end{aligned}
\end{equation}
for compatible curvature fields $\kappa, \tau \colon \Omega \rightarrow \mathbb{R}$. Here $\kappa$ describes a bending curvature, while  $\tau$ represents  twist. As always, $\theta \colon \Omega \rightarrow (\theta^{-}, \theta^+)$ is the cell-by-cell actuation.  Of course, these fields  cannot be arbitrary. Much like the statement that a deformation gradient must be curl-free to admit a deformation, the metric and curvature tensors in (\ref{eq:getParam}) must solve the Gauss and Codazzi-Mainardi equations from differential geometry \cite{do2016differential}.   This produces a system of three nonlinear PDEs in the three unknowns $\kappa, \tau$, and $\theta$, fully characterizing the existence of a deformation $\mathbf{y}$ solving (\ref{eq:getParam}).

We do not write out the full PDE system here because it is cumbersome and we do not know how to solve it in general. Instead we produce some illustrative examples based on a 1D ansatz. By setting 
\begin{equation}
\begin{aligned}\label{eq:1DAnsatz}
    \theta(\mathbf{x}) = \theta_{\text{1D}}\big( \mathbf{x} \cdot \lambda_2^{-1}(0) \tilde{\mathbf{e}}_2  \big) , \quad \kappa(\mathbf{x}) = \kappa_{\text{1D}} \big( \mathbf{x} \cdot \lambda_2^{-1}(0) \tilde{\mathbf{e}}_2\big), \quad \tau(\mathbf{x}) = \tau_{\text{1D}} \big( \mathbf{x} \cdot \lambda_2^{-1}(0) \tilde{\mathbf{e}}_2\big),
\end{aligned}
\end{equation}
the PDEs becomes a  coupled set of ODEs in the scalar variable $s := \mathbf{x} \cdot \lambda_2^{-1}(0) \tilde{\mathbf{e}}_2$ for the fields $(\theta_{\text{1D}}(s), \kappa_{\text{1D}}(s), \tau_{\text{1D}}(s))$, which we solved in Appendix C \cite{xu2024derivation}. The general solutions are
\begin{equation}
\begin{aligned}\label{eq:kappaTauParam}
\kappa_{\text{1D}} = \frac{c_{\kappa}}{\lambda_1'(\theta_{\text{1D}} ) \lambda_2(\theta_{\text{1D}}) }, \quad \tau_{\text{1D}} = \frac{c_\tau}{\lambda^2_1(\theta_{\text{1D}})}
\end{aligned}
\end{equation}
for arbitrary constants $c_{\kappa}$ and $c_{\tau}$, and for $\theta_{\text{1D}}$ solving the nonlinear ODE 
\begin{equation}
\begin{aligned}\label{eq:nLinearODE}
    \frac{d}{d s}\Big[\frac{\lambda_1'(\theta_{\text{1D}})}{\lambda_2(\theta_{\text{1D}})} \frac{d}{ds} \theta_{\text{1D}} \Big] = c_{\tau}^2 \frac{\lambda_2(\theta_{\text{1D}})}{\lambda_1^3(\theta_{\text{1D}})} + c_{\kappa}^2 \frac{\lambda_2'(\theta_{\text{1D}})}{\lambda_2^2(\theta_{\text{1D}}) \lambda'_1(\theta_{\text{1D}})} 
\end{aligned}
\end{equation}
subject to the initial conditions $\theta_{\text{1D}}(0) = \overline{\theta}$ and $\theta_{\text{1D}}'(0) =  \overline{\zeta}$. Note that pure bending modes in the parameterization are obtained by solving (\ref{eq:kappaTauParam}) and (\ref{eq:nLinearODE}) with $c_\tau = 0$; pure twisting modes are obtained likewise by solving these equations with $c_{\kappa} = 0$.

 Fig.\;\ref{Fig:MiuraPureBendTwist} and \ref{Fig:EggboxPureBendTwist}   compare analytical solutions  of pure bending and twisting modes for both Miura and Eggbox to analogous  simulations using our model and numerical method. In all cases,  the $\theta$-dependence is  specified by  (\ref{eq:lambdaMO}) for Miura and (\ref{eq:lambdaEO}) for Eggbox and $\Omega$ is as in (\ref{eq:DomainForExamples}).  Concerning the analytics on Miura, the inputs to the ODE in (\ref{eq:nLinearODE}) are $(c_{\kappa}, c_{\tau}, \theta_{\text{1D}}(0), \theta_{\text{1D}}'(0)) = (1.518,0,-\pi/6 + 0.2, 16)$ for pure bending and $=(0, 1.5205,-\pi/6 + 0.2, 11)$ for pure twisting; the analogous inputs for Eggbox are $= (1.43, 0, 0.4, -1.5)$ and $= (0, 1.9, -0.5, 4)$ for bend and twist, respectively.  These inputs are chosen so that the solution $\theta_{\text{1D}}(s)$ is approximately symmetric about $s = 1/2$, corresponding to the midline of the domain $\Omega$ in the $\mathbf{v}_0$ direction. After obtaining  the solutions, we substitute $(\theta_{\text{1D}}, \kappa_{\text{1D}}, \tau_{\text{1D}}) \equiv (\theta, \kappa, \tau)$ into (\ref{eq:getParam}). The deformation $\mathbf{y}$ is then constructed from these compatible fields using a procedure outlined at the end of  Appendix C of \cite{xu2024derivation}. Finally, Figs.\;\ref{Fig:MiuraPureBendTwist}, \ref{Fig:EggboxPureBendTwist}(a) and (e)     show plots of the corresponding origami deformations built from these solutions using the ansatz  in Appendix \ref{sec:oriDefs}. Notice that the Miura takes the shape of a saddle for pure bending $(c_{\tau} = 0)$, while Eggbox takes that of the opposite bending profile in  a cap. In contrast, the pure twist deformations of these pattern $(c_{\kappa}=0)$ are qualitatively similar.
 
 Our goal with the numerics is to approximate the above analytical solutions $(\mathbf{y}, \theta)$ via  corresponding finite element simulations $(\mathbf{y}_h, \theta_h)$ as a means to validate the nonlinear parts of our model and numerical implementation. To do this, we impose full Dirichlet  boundary conditions $\mathbf{y}_h = \mathbf{y}$ and $(\nabla \mathbf{y}_h) \mathbf{n} = (\nabla \mathbf{y}) \mathbf{n}$ on $\partial \Omega$. We also set the moduli parameters of the model for all numerical simulations as $c_1 = 1, c_2L_{\Omega}^2 = 0.5, d_1L_{\Omega}^2 = 10^{-3}, d_2 = d_3 = 0$. By these choices, the stored elastic energy for $(\mathbf{y}_h, \theta_h)$ is dominated mostly by deviations from the purely geometric constraints in (\ref{eq:purelyGeom}).  In particular,   the regularizing terms penalizing  the actuation field and its gradient are dropped, while only a small penalty on the second gradient of the deformation is added to stabilize the simulations. Thus, all of the numerical simulations should be driven towards the analytical solutions.  That is exactly what we observe.

 Figs.\;\ref{Fig:MiuraPureBendTwist}, \ref{Fig:EggboxPureBendTwist}(b) and (f) plot the effective deformations obtained from the simulations. Notice that the responses  are  genuinely large deformation in each case, with the  characteristic out-of-plane displacements on the order of $\sim 0.2$-$0.5$ for domains of characteristic length $\sim 1.4$-$1.8$. Yet the errors in the deformation are negligible by comparison. Figs.\;\ref{Fig:MiuraPureBendTwist}, \ref{Fig:EggboxPureBendTwist}(c) and (g) plot normed differences between the analytical and simulated effective deformations. The errors  are on the order of $0.1$-$1\%$, depending on the simulation. The twist case performs especially well in this comparison. Figs.\;\ref{Fig:MiuraPureBendTwist}, \ref{Fig:EggboxPureBendTwist}(d) and (h) also compare the   angle fields. For reference, this field  varies  in magnitude from $0$ to $0.5$ in the typical simulation.  From the figures, we see that its  errors are on the order of $5$-$20$\% in places, especially towards the boundaries of the pattern in the Miura cases. However, in most of the domain, the errors are far below $1\%$. The deformations are perhaps better approximated  because of our choice of boundary conditions for the simulations. In any case,  the simulations and analytical solutions are  mostly in excellent agreement.

\subsection{Pinching deformations}\label{ssec:PinchingSimulations}
Having validated the numerical method in the previous two sections, we now highlight its versatility by exploring inhomogeneous boundary conditions for which analytical solutions are difficult to come by.  

We begin with 2D pinching simulations. Fig.\;\ref{Fig:Miura_Bowtie} shows the case of Miura, while Fig.\;\ref{Fig:Eggbox_Bowtie} shows that of Eggbox origami. In the simulations, we subject the effective origami  domains $\Omega$ in (\ref{eq:DomainForExamples}) to Dirichlet boundary conditions $\mathbf{y}_h = \mathbf{A}(\overline{\theta}) \mathbf{x}$ on $\Gamma_{L}$ and $\Gamma_R$, where $\Gamma_{L,R}$ are as illustrated in Fig.\;\ref{Fig:Miura_Bowtie} and \ref{Fig:Eggbox_Bowtie}(c); they correspond to exactly  $20\%$ of the left and right boundaries of the sample. We also monotonically increase $\overline{\theta}$, so that both samples compress along the horizontal direction, and enforce the planarity condition $\mathbf{y}_h \cdot \mathbf{e}_3 = 0$.  Finally, the moduli parameters  are chosen consistently with (\ref{eq:modConsistency}) as\footnote{$c_2$ is not needed for these simulations due to the planarity condition. However,  we report it since we will use the same parameters to study the origami under transverse bending loads in the next section.} $c_1 = 5, c_2 =1, d_1 = d_2 = 10^{-2}$ and $d_3 = 0.1$; also $L_{\Omega} = 1.7 $ for Miura and $= \sqrt{2}$ for Eggbox.

Plots of  the normalized force versus the pinching strain are shown in Fig.\;\ref{Fig:Miura_Bowtie}(a) for the Miura and Fig.\;\ref{Fig:Eggbox_Bowtie}(a) for the Eggbox. The force is calculated numerically using a variational method detailed in Appendix \ref{sec:forces}; it is scaled by the folding modulus $d_3$ to give the normalized force in the plots. We also reproduce from Section \ref{ssec:PureMech} the pure mechanism results on these plots, as well as the case where the boundary condition $\mathbf{y}_h = \mathbf{A}(\overline{\theta}) \mathbf{x}$ is prescribed on the full left and right parts of the boundary of $\partial \Omega$. Figs.\;\ref{Fig:Miura_Bowtie} and \ref{Fig:Eggbox_Bowtie}(b) display the variation of the angle field versus the pinching strain  in each simulation and for each boundary condition, while Figs.\;\ref{Fig:Miura_Bowtie} and \ref{Fig:Eggbox_Bowtie}(c-e)  highlight representative examples of the pinching simulations at large strains.  Finally, Table \ref{tab:Table1} compares the different energy contributions for the simulations at  $20\%$ strain intervals.

\setlength{\tabcolsep}{12pt}
\renewcommand{\arraystretch}{1.4}

\begin{table}[h!]
\begin{center}
\begin{tabular}{||c | c |  c | c |  c||} 
 \hline 
Miura Pinching Simulations   &  20\% &   40 \%    &  60\% & 80\%    \\ [.6ex] 
 \hline\hline 
$\int_{\Omega} W_1(\theta, \mathbf{I}(\mathbf{y}))dx$  &   $1.81\times10^{-2}$  & $3.34\times10^{-2}$ & $5.31 \times 10^{-2}$  & $6.77\times10^{-2}$ \\  
 \hline
$\int_{\Omega} d_1L_{\Omega}^2  |\nabla \nabla \mathbf{y} |^2 dx $   &  $6.10 \times 10^{-3}$  & $1.46 \times 10^{-2}$ & $3.19 \times 10^{-2}$ & $6.80 \times 10^{-2}$ \\
 \hline
  $\int_{\Omega} d_2L_{\Omega}^2  |\nabla \theta |^2 dx $  &  $7.34 \times 10^{-3}$ & $1.51 \times 10^{-2}$  & $2.83 \times 10^{-2}$  & $5.04 \times 10^{-2}$ \\  
 \hline
  $\int_{\Omega} d_3 | \theta |^2 dx $  &  $1.58 \times 10^{-1}$ & $4.49 \times 10^{-1}$ & $7.86 \times 10^{-1}$ & $1.09$  \\  
 \hline \hline 
Eggbox Pinching Simulations   & 20\% &   40 \%    &  60\% & 80\%    \\ [.6ex] 
 \hline\hline 
 $\int_{\Omega} W_1(\theta, \mathbf{I}(\mathbf{y}))dx$  &   $4.16 \times 10^{-4}$ & $1.05 \times 10^{-3}$ & $1.89 \times 10^{-3}$ & $3.29 \times 10^{-3}$ \\  
 \hline
$\int_{\Omega} d_1L_{\Omega}^2  |\nabla \nabla \mathbf{y} |^2 dx $   &  $1.29 \times 10^{-3}$ & $3.42 \times 10^{-3}$ & $6.25 \times 10^{-3}$  & $1.15 \times 10^{-2}$ \\
 \hline
  $\int_{\Omega} d_2L_{\Omega}^2  |\nabla \theta |^2 dx $  &  $1.32 \times 10^{-4}$ & $2.46 \times 10^{-4}$  & $3.16 \times 10^{-4}$  & $3.75 \times 10^{-4}$ \\  
 \hline
  $\int_{\Omega} d_3 | \theta |^2 dx $  &  $3.75 \times 10^{-3}$ & $1.13 \times 10^{-2}$ & $2.02 \times 10^{-2}$ & $2.99 \times 10^{-2}$   \\  
 \hline 
\end{tabular}
\caption{Comparison of the different energy contributions during the pinching simulations for $c_1 = 5, d_1 = d_2 = 10^{-2}, d_3 = 0.1$ and $L_{\Omega} = 1.7 $ for Miura and $= \sqrt{2}$ for Eggbox. Snapshots of each energy are taken at $20\%$ strain intervals. }
\label{tab:Table1}
\end{center}
\end{table}

\begin{figure}[t!]
\centering
\includegraphics[width=1\textwidth]{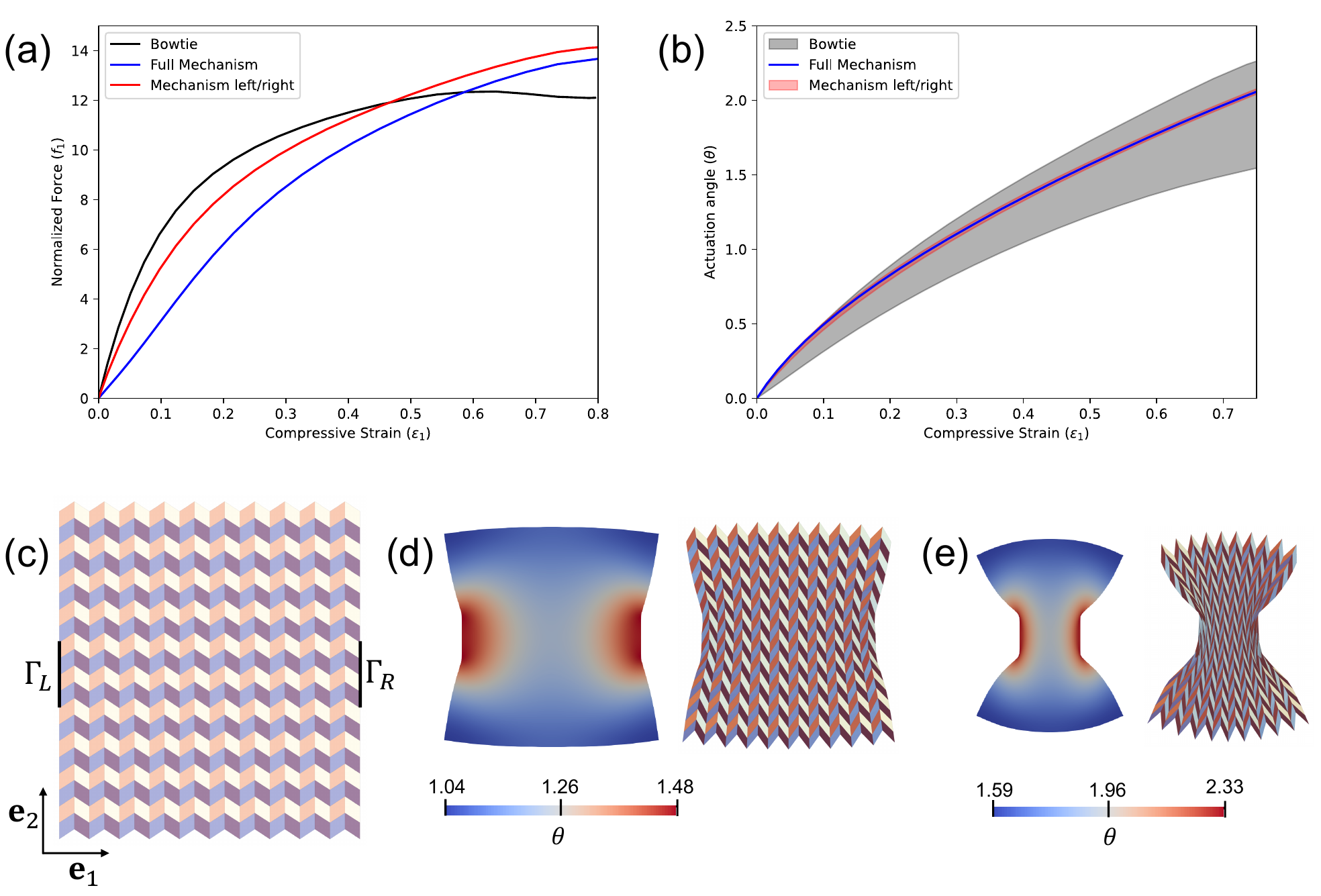} 
\caption{Pinching simulations for Miura origami. (a) The normalized force versus compressive strain for  the pinching simulations (black) compared to that of the pure mechanism (red) and the case where the full left and right boundaries are displaced. (b) The angle envelop (minimum to maximum) at each value of compressive strain. The angle is constant for the pure mechanism. (c) Reference configuration prior to pinching; (d) 40\% compressive strain; (e) 80\% compressive strain. } 
\label{Fig:Miura_Bowtie}
\end{figure}

As the boundaries are pinched, the folds must actuate in regions around $\Gamma_{\text{d}} = \Gamma_{L} \cup \Gamma_R$ to accommodate the underlying compression.  This actuation, being energetically costly, is then gradually relaxed in a manner regularized by the $\nabla \theta_h$ and $\nabla \nabla \mathbf{y}_h$ terms in the energy,  yielding the bowtie shapes shown. Evidently, all of this is done without incurring too much bulk energy: notably, the bulk energy modulus $c_1$ is $50$ times larger that the folding modulus $d_3$,  yet the bulk energy contribution is consistently one order of magnitude lower than  the dominant folding energy (see Table \ref{tab:Table1}).  Consequently, much like the pure mechanism, the folding modulus $d_3$ sets the  force for the pinching simulations. Intriguingly,  the total force to pinch the sample at a given engineering strain is comparable to  the pure mechanism,  and in fact larger for most strains.  This is likely  due to the basic nature of the  boundary forces in our generalized continuum. They consist of  two terms --- a traditional Piola-Kirchhoff stress $\mathbf{P}(\mathbf{y}_h, \theta_h)$ and a generalized stress proportional to $\nabla \mathcal{H}(\mathbf{y}_h, \theta_h)$ (see Appendix \ref{sec:forces}). The former is comparable for both cases; the stress is slightly higher for the pinching simulations, but covers only $20\%$ of the boundary. The latter, however, is zero for the pure mechanism since the deformation and actuation in this case is uniform. It is alternatively quite high in the pinching simulations since the higher-order gradients contributing to this term are magnified near the compressed boundaries.

\begin{figure}[t!]
\centering
\includegraphics[width=1\textwidth]{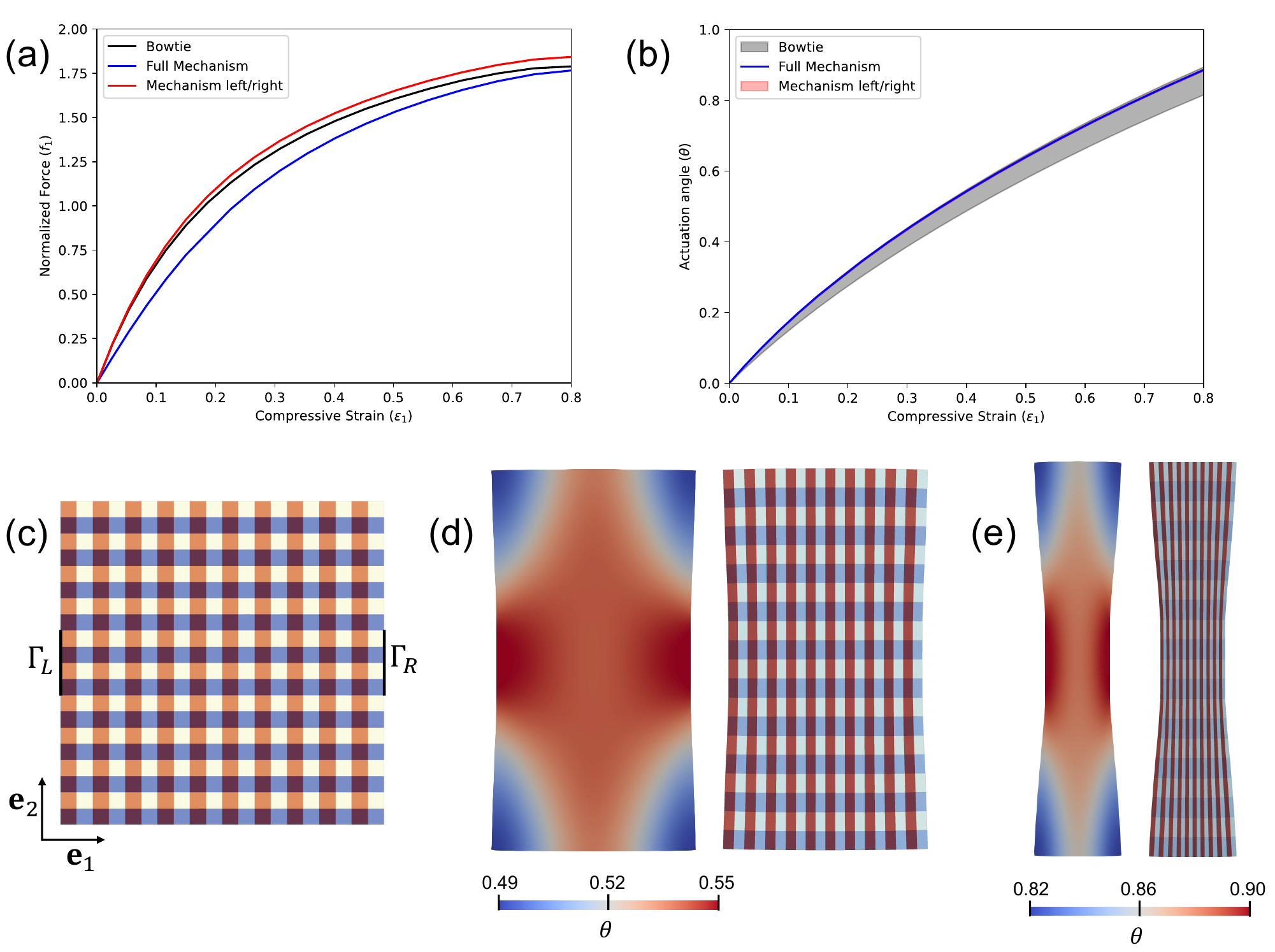} 
\caption{Pinching simulations for Eggbox origami. (a) The normalized force versus compressive strain for  the pinching simulations (black) compared to that of the pure mechanism (red) and the case where the full left and right boundaries are displaced. (b) The angle envelop (minimum to maximum) at each value of compressive strain. The angle is constant for the pure mechanism. (c) Reference configuration prior to pinching; (d) 40\% compressive strain; (e) 80\% compressive strain. } 
\label{Fig:Eggbox_Bowtie}
\end{figure}

Beyond these general observations, which hold similarly for the Miura and Eggbox cases, there are distinct differences in the two pinching simulations. Notice that the angle variation at each value of compressive strain is much higher for the Miura case (Fig.\;\ref{Fig:Miura_Bowtie}(b)) compared to the Eggbox (Fig.\;\ref{Fig:Eggbox_Bowtie}(b)). The contours illustrating this variation are also qualitatively quite different. For the Miura in Fig.\;\ref{Fig:Miura_Bowtie}(c-e), they form ellipses around the compressive loading  on $\Gamma_{L}$ and $\Gamma_{R}$. In Eggbox, there is a distinct diamond shape of nearly constant actuation in the bulk of the sample in Fig.\;\ref{Fig:Eggbox_Bowtie}(c-e).     Accordingly, the shape of the Miura is a  pronounced bowtie, while that of the Eggbox more closely  resembles the pure mechanism.    

 An explanation for these  differences lies in the auxeticity of the pattern.  Indeed, the simulated fields satisfy $\mathbf{y}_h \approx \mathbf{y}_{\text{g}}$ and $\theta_h \approx \theta_{\text{g}}$ for some 2D effective deformation and angle field geometrically constrained by\footnote{$\mathbf{II}(\mathbf{y}_{\text{g}})= \mathbf{0}$ because we are dealing with 2D loading conditions for which $\mathbf{y}_{\text{g}}\cdot \mathbf{e}_3 = 0$.}
 \begin{equation}
 \begin{aligned}
 \mathbf{I}(\mathbf{y}_{\text{g}}) = \mathbf{A}^T(\theta_{\text{g}}) \mathbf{A}(\theta_{\text{g}}) = \begin{pmatrix}  \tfrac{\lambda^2_1(\theta_{\text{g}})}{\lambda^2_1(0)} & 0 \\ 0 &  \tfrac{\lambda^2_2(\theta_{\text{g}})}{\lambda^2_2(0)}  \end{pmatrix} \quad \text{ and } \quad \mathbf{II}(\mathbf{y}_{\text{g}}) = \mathbf{0},
 \end{aligned}
 \end{equation}
since the bulk energy in each simulation is negligible compared to its modulus. 
 An application of Gauss's theorem egregium to these formulas then furnishes a PDE governing $\theta_\text{g}$ of the form
 \begin{equation}
 \begin{aligned}\label{eq:EllipHyperbolic}
 \partial_1 \Big( \frac{\lambda_1(0) \lambda_2'(\theta_{\text{g}})}{\lambda_2(0)\lambda_1(\theta_{\text{g}})} \partial_1 \theta_{\text{g}} \Big) +  \partial_2 \Big( \frac{\lambda_2(0) \lambda_1'(\theta_{\text{g}})}{\lambda_1(0)\lambda_2(\theta_{\text{g}})} \partial_2 \theta_{\text{g}}\Big)  = 0. 
 \end{aligned}
 \end{equation}
 This second-order PDE is elliptic if $\lambda_1'(\theta) \lambda_2'(\theta) > 0$ for all $\theta$ and hyperbolic if $\lambda_1'(\theta) \lambda_2'(\theta) < 0$ for all $\theta$. In other words, it is elliptic for patterns like Miura origami that are auxetic and hyperbolic for those like Eggbox that are not auxetic. 
 
 Ellipticity versus hyperbolicity is a standard classification in PDE theory, with clear and general ramifications. Elliptic PDEs describe a decay in the field variables away from loads, highlighted by the fact that the maximum and minimum of the field variables always occurs on the boundary and never on the interior unless they are constant.\footnote{This is called the strong maximum principle of Elliptic PDEs.} The Miura in Fig.\;\ref{Fig:Miura_Bowtie}(c-e) perfectly reflects these properties --- the angle field $\theta_h$ is largest at the loaded boundary, decays away along elliptic contours from this boundary,  and is at its minimum at the four corners.   Hyperbolic PDEs are the diametric opposite. They describe wave-like phenomena, where the field variable instead propagates from the loaded boundary to the interior of the domain along characteristic curves. Characteristics also cannot intersect, making the solutions to hyperbolic PDEs generally more rigid to Dirichlet boundary data. The Eggbox in \ref{Fig:Eggbox_Bowtie}(c-e) reflects these properties --- a large diamond region of nearly constant actuation on the interior of the domain  recedes to the corners  through ``fan-like" characteristics; the pattern also displays far less heterogeneity, thus less ``flexibility" in a sense, than its Miura counterpart.

An astute observer may notice that the Eggbox case does not completely fit the description of a hyperbolic PDE without a little embellishment --- the actuation is still largest at the loaded boundaries and smallest at the corners, like in the Miura case. To explain, we must remember that $\theta_h$ only approximates $\theta_{\text{g}}$. In fact, there is a universal elliptic part to our generalized continuum model. The governing equations  in (\ref{eq:governingEquations}) include the equilibrium condition $\nabla \cdot \mathbf{j}(\theta_h) = q(\mathbf{y}_h, \theta_h)$, which is elliptic in $\theta_h$ since  $\nabla \cdot \mathbf{j}(\theta_h) \sim \nabla^2 \theta_h$.  This complicates the qualitative behavior of the Eggbox. In particular,   $\theta_h$ must   approximate  an angle field $\theta_{\text{g}}$  associated to the hyperbolic equation in (\ref{eq:EllipHyperbolic})  while simultaneously solving  an elliptic equilibrium equation.

\subsection{Bending deformations under transverse loads }

For our final set of examples, we examine  the behavior of Miura and Eggbox origami under transverse loads. The basic setup is shown in Fig.\;\ref{Fig:TransverseBending}(a). We apply a uniform transverse load $\overline{\mathbf{b}} = b \mathbf{e}_3$ (for a constant $b$) to the effective domain $\Omega$ of each pattern and pin the boundaries via the Dirichlet conditions $\mathbf{y}_h = \mathbf{x}$ on $\Gamma_{L}$ and $\Gamma_R$, which are centered on the sample. We study this boundary value problem in the cases that $\Gamma_{L,R}$ take up $5\%$, $10\%$, and $20\%$ of the left and right boundaries, respectively. By partially pinning these boundaries, as opposed to say fully pinning them,  each  pattern can accommodate the load via  doubly-curved shapes with minimal geometric frustration.  The choice of parameters in these simulations is the same as the pinching case:  $c_1 = 5, c_2 =1, d_1 = d_2 = 10^{-2}$, $d_3 = 0.1$ and $L_{\Omega} = 1.7 $ for Miura and $= \sqrt{2}$ for Eggbox. 

Fig.\;\ref{Fig:TransverseBending}(b) plots the average out-of-plane displacement $\bar{w} = \frac{1}{|\Omega|} \int_{\Omega} \mathbf{y} \cdot \mathbf{e}_3 \dif x$  versus the load $b$ for each simulation of each pattern under each boundary condition.  Fig.\;\ref{Fig:TransverseBending}(c) shows three representative deformed configurations for Miura in the $10\%$ pinned case; Fig.\;\ref{Fig:TransverseBending}(d) shows analogous such shapes for the Eggbox, again in the $10\%$ case.

\begin{figure}[t!]
\centering
\includegraphics[width=.97\textwidth]{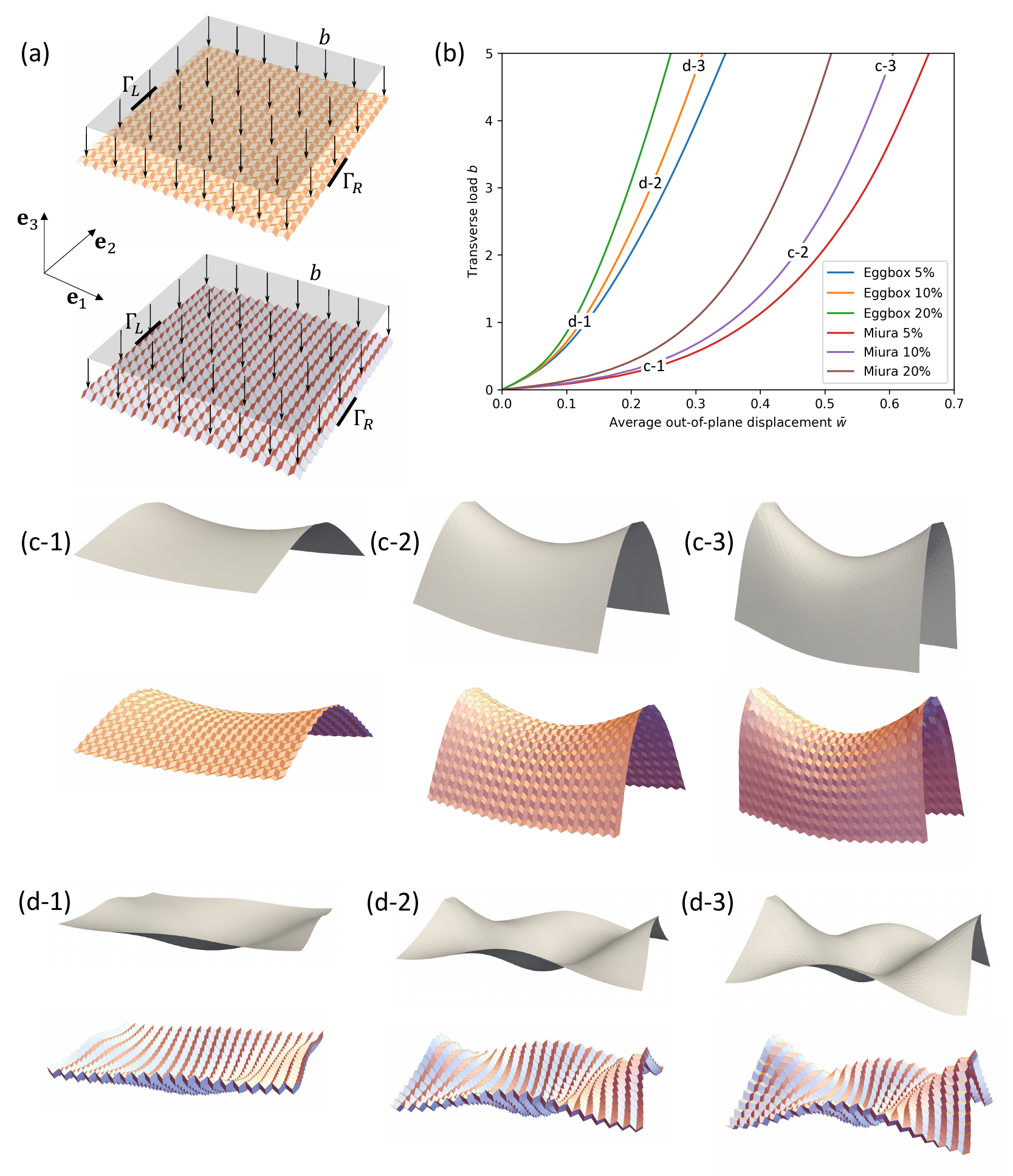} 
\caption{Bending simulations under transverse load. (a) The pattern is pinned at $\Gamma_L$ and $\Gamma_R$ and subject to the uniform load $b$. (b) Plot of the average displacement $\bar{w} = \frac{1}{|\Omega|} \int_{\Omega} \mathbf{y} \cdot \mathbf{e}_3 \dif x$ versus the load. (c) Deformation profiles for Miura and (d) for Eggbox at representative points on the loading curves, labeled in (b). Each of these profiles is for the case where  $10\%$ of the left and right boundaries is pinned.  } 
\label{Fig:TransverseBending}
\end{figure}

At the early stages of loading, both Miura and Eggbox deform quite softly, with the stored energy is  dominated by the bending term $\int_{\Omega} d_1 |\nabla \nabla \mathbf{y}|^2 \dif x$. However,   geometric frustration sets in as the loading progresses.  At the tail end of the simulations, the two bulk energy terms $\int_{\Omega} W_1(\theta, \mathbf{II}(\mathbf{y})) \dif x$ and $\int_{\Omega} W_2(\theta, \mathbf{II}(\mathbf{y})) \dif x$ become comparable to the bending term.  This transition --- from  negligible to prominent bulk stored energy --- coincides with the universal stiffening response shown in the plots, and is likely its cause. Another notable, if unsurprising, feature of the force plot is that the patterns are stiffer for the $20\%$ pinned boundary compared to the $10\%$ and to the $5\%$ cases. This is easily explained by noticing that  each pattern needs to flatten out in the $x_2$-direction toward the $\Gamma_{L,R}$ boundaries, and thus become approximately singly curved in  small  subregions of $\Omega$ that contain these boundaries. A significant bulk energy  contribution $W_2(\theta, \mathbf{II}(\mathbf{y}))$ necessarily arises in these regions because $W_2$ biases the patterns to be doubly-curved.  The last and most surprising observation from these plots is that the bending stiffness of Eggbox is much larger than that of  Miura. This  is true even though the in-plane stiffness of the Miura greatly exceeds the Eggbox for both the pure mechanism and the pinching simulations, illustrated previously. 

We now spend some time analyzing and discussing the differences in bending stiffness between the two examples. As a basic heuristic, we believe this difference  is essentially explained by the fact that the saddle-like shapes preferred by the Miura can do more work $\int_{\Omega} b \mathbf{y} \cdot \mathbf{e}_3 \dif x$ at each value of $b$ than the analogous shapes for the Eggbox.  Notice that the  the cap-like shapes common to Eggbox are not observed in the simulations. Instead, the pattern forms a cap in its center but then reverses its curvature towards the boundary to induce more vertical displacement and thus more work. This geometric frustration underlies the Eggbox's  much stiffer response. 

We can quantitatively investigate this heuristic, at least in a linearized regime $|b| \ll 1$ where the analysis is more tractable. The idea is to minimize the total potential energy at leading order  for bending-type deformations of the form
\begin{equation}
\begin{aligned}
\mathbf{y}_b(\mathbf{x}) = \begin{pmatrix} \mathbf{x} \\ b w(\mathbf{x})  \end{pmatrix} + O(b^2), \quad \theta_b(\mathbf{x}) = O(b^2), 
\end{aligned}
\end{equation}
with $b$ playing the role of the small parameter in the asymptotics.  Notice that the out-of-plane displacement  $b w$  is taken to be much larger than the planar components,  as one would expect under small transverse loading.\footnote{At first glance, it's not obvious that the actuation should scale as $b^2$ for small $b$. However,  after expanding $\theta_b$ in powers of $b$ via $\theta_b =  \delta \theta_0 + b \delta \theta_1 + b^2 \delta \theta_2 + \ldots$, one will find upon energy minimization that $\delta \theta_0 = \delta \theta_1 =0$.}
Substituting these deformations into the total energy leads to 
\begin{equation}
\begin{aligned}
E(\mathbf{y}_b, \theta_b) =  b^2 \Big( c_2 L_{\Omega}^2 \int_{\Omega} \Big( \frac{\lambda_1'(0)}{\lambda_1(0)}   \partial_2 \partial_2 w +  \frac{\lambda_2'(0)}{\lambda_2(0)}  \partial_1 \partial_1 w \Big)^2  \dif x +   d_1 L_{\Omega}^2  \int_{\Omega} |\nabla \nabla w|^2 \dif x  - \int_{\Omega} w \dif x \Big) + o(b^2). 
\end{aligned}
\end{equation}
As  the moduli satisfy $c_2 \gg d_1$ in the typical case, it is reasonable to impose the PDE constraint
\begin{equation}
\begin{aligned}\label{eq:PDEOutOfPlane}
 \frac{\lambda_1'(0)}{\lambda_1(0)}   \partial_2 \partial_2 w +  \frac{\lambda_2'(0)}{\lambda_2(0)}  \partial_1 \partial_1 w = 0
\end{aligned}
\end{equation}
upon minimization. This  requirement  results in a small correction to the overall minimum energy that can be neglected for our purposes. It follows that
\begin{equation}\label{eq:OutOfPlaneMin}
\begin{aligned}
\min_{\substack{\mathbf{y}_b \colon \Omega \rightarrow \mathbb{R}^3 \\ \theta_b \colon \Omega \rightarrow  (\theta^{-},\theta^+) \\ \mathbf{y}_b = (\mathbf{x},  0)  \text{ on } \Gamma_{L,R}}} E(\mathbf{y}_b, \theta_b) \approx  \min_{\substack{ w \colon \Omega \rightarrow \mathbb{R}^3 \\  w \text{ solves } (\ref{eq:PDEOutOfPlane}) \\  w = 0 \text{ on } \Gamma_{L,R} }}  b^2  \int_{\Omega}  \Big(   d_1 L_{\Omega}^2 |\nabla \nabla w|^2 - w \Big)   \dif x.
\end{aligned}
\end{equation}
for all sufficiently small $b$.    Note that $b$ is the load per unit area and $\frac{1}{|\Omega|}  \int_{\Omega} b w \dif x$ is the average displacement under this load. Thus, if $w^{\star}$ is a minimizer to the latter, then the normalized bending stiffness relevant to the plots in Fig.\;\ref{Fig:TransverseBending}(b) is  the ratio 
\begin{equation}
\begin{aligned}\label{eq:normalizedStiffness}
K_{\text{bend}} = \frac{1}{\frac{1}{|\Omega|} \int_{\Omega} w^{\star} \dif x }.
\end{aligned}
\end{equation}

Instead of trying to perform the PDE-constrained optimization in (\ref{eq:OutOfPlaneMin}) --- which is challenging and likely overkill for the task of explaining the  differences in stiffness between the two examples --- we pursue a minimization within an ansatz. First, to simplify the analysis, we replace the reference configuration in (\ref{eq:DomainForExamples}) with one centered at the origin $\Omega = (-\lambda_1(0)/2,\lambda_1(0)/2)\times (-\lambda_2(0)/2, \lambda_2(0)/2)$ so that the minimizer $w^{\star}$ to (\ref{eq:OutOfPlaneMin}) is even in  $x_1$ and $x_2$. The most general fourth order-polynomial that has this symmetry and solves the PDE in (\ref{eq:PDEOutOfPlane}) is 
\begin{equation}
\begin{aligned}
w_{\text{poly}}(x_1, x_2) =  w_0 + \kappa_0 \Big(   \frac{\lambda_1'(0)}{\lambda_1(0)}  x_1^2 - \frac{\lambda_2'(0)}{\lambda_2(0)} x_2^2  \Big) +  \tau_0 \Big( \frac{\lambda'_1(0) \lambda_2(0)}{6\lambda_2'(0) \lambda_1(0)} x_1^4 - x_1^2 x_2^2 +  \frac{\lambda'_2(0) \lambda_1(0)}{6\lambda_1'(0) \lambda_2(0)} x_2^4 \Big)
\end{aligned}
\end{equation}
for constants $w_0,\kappa_0$ and $\tau_0$. 
We supply the constant $w_0 = w_0(\kappa_0, \tau_0)$ in this formula by imposing the condition $w_{\text{poly}}(\lambda_1(0)/2,0) = 0$, which is a proxy for the boundary data $w = 0$ on $\Gamma_{L,R}$ in the minimization in (\ref{eq:OutOfPlaneMin}). Hence,  a ``back-of-the-envelope" prediction of the normalized bending stiffness  is 
\begin{equation}
\begin{aligned}
K_{\text{bend}} \approx \frac{1}{\frac{1}{|\Omega|} \int_{\Omega} w_{\text{poly}}^{\star} \dif x },
\end{aligned}
\end{equation}
where $w_{\text{poly}}^{\star}$  solves the simple minimization  problem
\begin{equation}
\begin{aligned}
w^{\star}_{\text{poly}} = \argmin_{\kappa_0, \tau_0}\int_{\Omega}  \Big(   d_1 L_{\Omega}^2 |\nabla \nabla w_{\text{poly}}|^2 - w_{\text{poly}} \Big)   \dif x.
\end{aligned}
\end{equation}

As is typical with these calculations, the formulas for the minimizing  coefficients and the approximate stiffness  are  lengthy expressions of $\lambda_{1,2}(0)$ and $\lambda_{1,2}'(0)$, which we do not report. After substituting  for their explicit values using (\ref{eq:lambdaMO}) and (\ref{eq:lambdaEO}), the stiffness is 
\begin{equation}
\begin{aligned}\label{eq:approxNStiffness}
\frac{1}{\frac{1}{|\Omega|} \int_{\Omega} w_{\text{poly}}^{\star} \dif x } = d_1 L_{\Omega}^2 \begin{cases}
31.9 &\text{ for Miura} \\ 
322.2 &\text{ for Eggbox}. 
\end{cases}
\end{aligned}
\end{equation}
Then, using the values of $d_1$ and $L_{\Omega}$  from the simulations, we calculate  from  (\ref{eq:approxNStiffness})  a normalized stiffness of $ 0.92$ for Miura and $6.44$ for Eggbox. The corresponding simulated values at the origin are $0.72$ and $4.44$ for Miura and Eggbox, respectively. 
So this   expression  gets the trends right, namely that Eggbox is stiffer than the  Miura under small transverse loads, but also provides reasonable quantitative agreement in spite of all the simplifications.  

One nice feature of capturing the trends in a simplified setting is that it enables  design optimization without taxing a supercomputer. As a quick example, we can replace the expressions for $\lambda_i(\theta)$ in (\ref{eq:lambdaMO}) and (\ref{eq:lambdaEO}) with $\lambda_i(\theta+ \psi)$ for an angle field $\psi$, which alters the reference folded state of the unit cells, and then optimize the stiffness as a function of $\psi$ using our simplified model. In the Miura case,  the stiffness is unsurprisingly monotonic --- it achieves its minimum at the fully flat state and  becomes infinite at the fully folded state. The Eggbox too  has a stiffness that approaches infinity at the fully folded state. However, it is  most compliant  at the nontrivial value $\psi \approx 0.21$. %Thus, sharply corrugated patterns have  high stiffness, while extremal compliance is a more delicate matter, especially in non-Euclidean cases like the Eggbox.

We end by  pointing out an intriguing connection between pinching and linearized bending in these metamaterials: Both situations are well characterized by underlying second-order PDEs whose ``PDE type", and thus its basic qualitative properties, is completely dictated by the pattern's auxeticity.  Specifically, the actuation field in the pinching simulations is well approximated by the PDE for $\theta_{\text{g}}$ in (\ref{eq:EllipHyperbolic}), while the out-of-plane displacement $pw$ under small transversed load is constrained by (\ref{eq:PDEOutOfPlane}).  Both PDEs are elliptic for Miura Origami (or really any auxetic parallelogram origami) and hyperbolic for Eggbox (any pattern that is not auxetic). More generally, they emerge as different analytical case-studies of the geometric constraints $\mathbf{I}(\mathbf{y}) = \mathbf{A}^T(\theta) \mathbf{A}(\theta)$ and $\big(\mathbf{u}(\theta) \cdot \mathbf{u}'(\theta)\big) \tilde{\mathbf{v}}_0 \cdot \mathbf{II}(\mathbf{y})\tilde{\mathbf{v}}_0 = -   \big(\mathbf{v}(\theta) \cdot \mathbf{v}'(\theta)\big) \tilde{\mathbf{u}}_0 \cdot \mathbf{II}(\mathbf{y})\tilde{\mathbf{u}}_0$ that underly our modeling framework. In fact, we anticipate that any such case-study of these  constraints will lead to PDE compatibility conditions that illuminate an ``elliptic $=$ auxetic" versus ``hyperbolic $=$ non-auxetic" dichotomy in the response, although we leave this point to be examined more carefully for another time. 

\section{Discussion}\label{sec:Conclusion}

This paper brought together  rigorous theory, modeling, numerical method development, simulations, and mechanical analysis to address the challenging problem  of understanding the effective elastic behavior of parallelogram origami. We first described a pair of coarse-graining rules that produced effective   geometric constraints for  origami soft modes and  a corresponding plate energy for their elasticity.  Then, we introduced a simplified elastic model  that drove the pattern’s effective deformation and actuation towards the geometric constraints of this theory, while accounting for higher-order sources of elasticity  --- the plate energy ---  through an appropriate regularization. 
 Next, we provided a finite element formulation of this model using the $C^0$ interior penalty method.  Finally, through numerical implementation in \texttt{Firedrake}, we demonstrated a model and numerical framework capable of simulating the effective behavior of parallelogram origami metamaterials across a variety of loading conditions
 
 \begin{figure}[t!]
\centering
\includegraphics[width=0.9\textwidth]{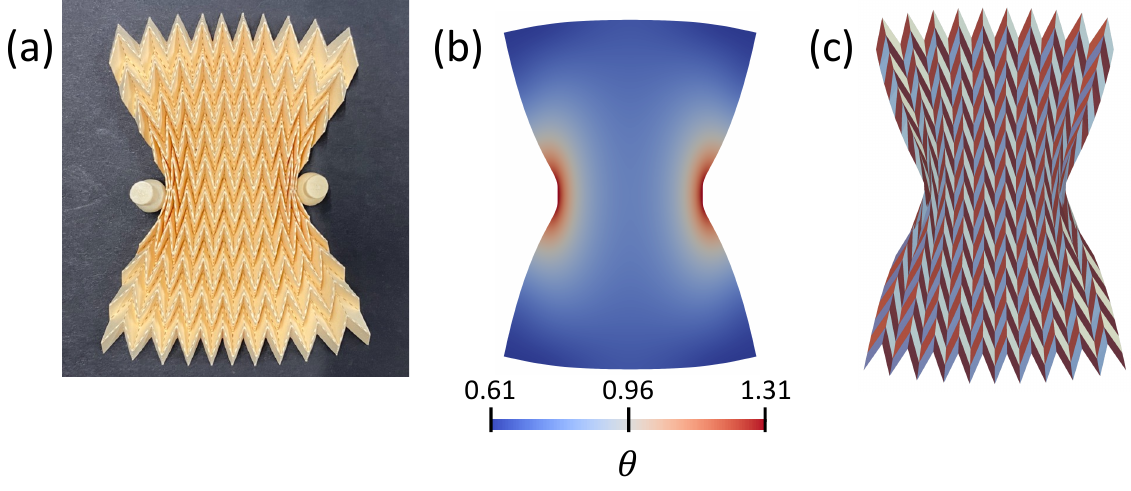} 
\caption{Comparing the model to an experiment. (a) A Miura origami made of perforated and folded paper is pinched at $\approx 63\%$ engineering strain. (b) Simulations of the effective model, with the parameters tuned to fit the experiment. (c) The corresponding origami deformation of the simulation.} 
\label{Fig:ExperimentalComparison}
\end{figure}

Challenges and opportunities remain. It would certainly be appealing to compare our model and numerics to experiments, both qualitatively and quantitatively. We anticipate that, with proper tuning of the parameters, our approach will capture the gross-shape change observed in experiments across designs and loading conditions. As a demonstration, Fig.\;\ref{Fig:ExperimentalComparison} highlights the model's ability to capture qualitative features of the pinched Miura origami sample  in Fig.\;\ref{Fig:IntroFig}(a). In brief, the experimental sample has panels of length $1$ cm and sector angle $\pi/3$  (see Fig.\;\ref{Fig:CellsRefSec5}(a)), and its natural partly folded reference configuration has a cell width $\lambda_2(0) \approx 1.25$ cm (see Fig.\;\ref{Fig:CellsRefSec5}(b)). Thus,  for the simulations, we set $\lambda_{1,2}(\theta) = \lambda_{1,2}^{\text{(MO)}}(0.35 \pi + \theta)$ from (\ref{eq:lambdaMO}), $\Omega = (0,\lambda_1(0)) \times (0, \lambda_2(0))$ for $\lambda_1(0) \approx 1.19$ and $\lambda_2(0) \approx 1.25 $, and $L_{\Omega} = 1.2$. The sample is pinched to $\approx 63\%$ engineering compressive strain by the pins, which are roughly $1$ unit cell in length. So we used the pinching boundary conditions in Section \ref{ssec:PinchingSimulations} for the simulations, with $\Gamma_{L,R}$ making up $10\%$ of the boundaries  and with $\varepsilon_1(\overline{\theta})  = 0.63$ (see (\ref{eq:normForce})).  The simulations are performed for  $c_1 = 5$ and $d_3 = 10^{-1}$ for simplicity\footnote{One really needs access to  experimental loading curves to choose these parameters more systematically.}, and for various  values of $d_1 = d_2$ to find one  that compares well with the experiment. Fig\,\ref{Fig:ExperimentalComparison}(b) shows our best match, which is for $d_1 = d_2 = 4 \times 10^{-3}$;  Fig.\;\ref{Fig:ExperimentalComparison}(c) shows the corresponding origami deformation. All told, the agreement is excellent.  

This success notwithstanding, matching  loading curves to experiments may be more challenging.   Potential improvements are readily available. We took a  purposefully simple approach to modeling the higher-order effects in the theory.  It is possible, for instance, to replace the $\int_{\Omega} ( d_1L_{\Omega}^2 |\nabla \nabla \mathbf{y}|^2 + d_2L_{\Omega}^2 |\nabla \theta|^2) \dif x$ term in the energy with the more complicated but asymptotically correct effective plate energy in  (\ref{eq:effectivePlateTheory}). A mathematical challenge  is to prove the existence of minimizers in this setting --- the proof  in Theorem \ref{ExistenceTheorem} makes key use of the fact that the $L^2$-norm of $\nabla \nabla \mathbf{y}$ is controlled by the elastic energy, and will need to be adapted.   Another natural modification is to replace the actuation energy $\int_{\Omega} d_3 |\theta|^2 \dif x$ with a weighted average $\int_{\Omega} d_3 \sum_{i =1,\ldots,4} \lambda_i |\gamma_i(\theta)|^2 \dif x$, where $\gamma_i \colon (\theta^{-}, \theta^+) \rightarrow \mathbb{R}$ parameterize the change in the four distinct dihedral angles of the unit cell and $\lambda_i$ are the weights.\footnote{Presumably, each $\lambda_i$ should be determined geometrically by the corresponding creases perimeter or some notion of  its influence area.}  Whether such modifications are ``worth the bang for your buck" will require synergistic modeling and experimental efforts going forward.

 It would also be appealing to use our framework to explore large swaths design space of parallelogram origami, beyond the canonical Miura and Eggbox origami. Importantly, the basic setup of our model allows for any parallelogram origami design. So all that needs to be done to begin this exploration is to feed the appropriate design dependent parameterizations of $\mathbf{u}(\theta)$ and $\mathbf{v}(\theta)$ into the model. Another intriguing research direction is to study ``locally periodic" origami patterns, which are obtained by slowly varying the design parameters of the origami cells and often show an uncanny ability to approximate a rich variety of surfaces on folding \cite{dang2022inverse,dudte2016programming,sardas2024continuum}. The lattice vectors in our model $\mathbf{u}_0 \equiv \mathbf{u}_0(\mathbf{d}),\ldots,  \mathbf{v}(\theta) \equiv \mathbf{v}(\theta, \mathbf{d})$ implicitly depend on design angles and lengths, labeled $\mathbf{d}$. In a locally periodic setting, $\mathbf{d} = \mathbf{d}(\mathbf{x})$ spatially varies in the reference domain $\Omega$, introducing a material anisotropy to the model. We also anticipate the emergence of an effective elasticity proportional to $\nabla \mathbf{d}$  on coarse-graining the mechanical response of these patterns, although pinning down the exact details  of this elasticity seems far from trivial. 

As a final point, flexible mechanical metamaterials are appealing in applications both for their large design space and their highly design dependent  nonlinear mechanical response. It is thus important to have   modeling tools that efficiently map the relationship between design and bulk mechanical  properties in these systems. Our results here, and those of our prior works \cite{xu2024derivation,zheng2022continuum,zheng2023modelling}, show that    homogenization is a predictive and useful tool in this setting, while also  pushing the boundaries of the theory of elasticity in new and exciting directions.  We hope our  work will serve as an exemplar for further exploration along these lines.

\subsection*{Data availability}
\noindent The  code for all numerical simulations in this paper is available on GitHub at:\\
 \url{https://github.com/HuUSC/ContinuumModelOri.git}

\subsection*{Acknowledgments}
\noindent  H.X. and P.P. acknowledge support from the Army Research Office (ARO-W911NF2310137). P.P. also acknowledges support from the National Science Foundation (CMMI-CAREER-2237243). \\
\noindent 
F.M. acknowledges support from the National Science Foundation (DMS-2409926).

\appendix

\section{Proof of the existence of minimizers for the constitutive model}
\label{sec:proof}

This section proves Theorem \ref{ExistenceTheorem} on the existence of minimizers to the  energy in Section \ref{sec:modelForSimulations}. We break the proof up into several steps. 
 
\noindent \textbf{Step 1. Extensions of the energy densities.}  We find it useful to  express the internal energy $E_{\text{int}}(\mathbf{y}, \theta)$ defined by (\ref{eq:Eint}-\ref{eq:VDef}) as 
\begin{equation}\label{eq:getWepsilon}
\begin{aligned}
E_{\text{int}}(\mathbf{y}, \theta) = \int_{\Omega} W^{\varepsilon}( \theta, \nabla \theta, \nabla \mathbf{y}, \nabla \nabla \mathbf{y}) \dif x,  \quad (\mathbf{y}, \theta) \in V^{\varepsilon}
\end{aligned}
\end{equation}
for an energy density $W^{\varepsilon} \colon \mathbb{R} \times \mathbb{R}^2 \times \mathbb{R}^{3\times2} \times \mathbb{R}^{3  \times 2 \times 2} \rightarrow \mathbb{R}$ that is  extended to all of space and  continuous in every  argument. In particular, we  construct a  $W^{\varepsilon}$ that coincides with the original energy density for $E_{\text{int}}(\mathbf{y}, \theta)$ whenever $(\mathbf{y}, \theta) \in V^{\varepsilon}$, but deviates from this density in a continuous way otherwise. This modification allows us to appeal to general results in the direct method of the calculus of variations to establish the existence of minimizers of the energy.  The construction makes use of the $\varepsilon$-dependence of the set $V^{\varepsilon}$.

To begin, recall from Section \ref{ssec:Design} that $\mathbf{u}(\theta), \mathbf{v}(\theta), \mathbf{A}(\theta), \mathbf{u}'(\theta), \mathbf{v}'(\theta)$ are well-defined and smooth on the interval $(\theta^{-}, \theta^+)$. However, $\mathbf{u}'(\theta)$ and/or $\mathbf{v}'(\theta)$ can blow up as $\theta \rightarrow \theta^{+}$ and/or $\theta^{-}$ for some parallelogram origami designs. As we are really focused on the behavior of this actuation  on the interval $[\theta^{-} + \varepsilon, \theta^+ - \varepsilon]$ per $V^{\varepsilon}$,  we introduce convenient $\varepsilon$-extensions of these functions $\mathbf{u}_{\varepsilon}(\theta)$, $\mathbf{v}_{\varepsilon}(\theta)$, $\mathbf{A}_{\varepsilon}(\theta)$, $[\mathbf{u}']_{\varepsilon}(\theta)$ and $[\mathbf{v}']_{\varepsilon}(\theta)$  as follows. Define $\mathbf{u}_{\varepsilon} \colon \mathbb{R} \rightarrow \mathbb{R}^3$ as 
\begin{equation}
\begin{aligned}\label{eq:uEpsilonDef}
\mathbf{u}_{\varepsilon}(\theta) := \begin{cases}
\mathbf{u}(\theta) & \text{ if } \theta \in [\theta^{-} + \varepsilon, \theta^+ - \varepsilon] \\
\mathbf{u}(\theta^{-} + \varepsilon) & \text{ if } \theta < \theta^{-} + \varepsilon \\
\mathbf{u}(\theta^+ - \varepsilon) & \text{ if } \theta > \theta^+ - \varepsilon,
\end{cases}
\end{aligned}
\end{equation}
and $\mathbf{v}_{\varepsilon} \colon \mathbb{R} \rightarrow \mathbb{R}^3$ analogously. Then,  define $\mathbf{A}_{\varepsilon} \colon \mathbb{R} \rightarrow \mathbb{R}^{3 \times2}$ as the unique linear transformation that satisfies 
\begin{equation}
\begin{aligned}\label{eq:Aeps}
\mathbf{A}_{\varepsilon} (\theta) \tilde{\mathbf{u}}_0 = \mathbf{u}_{\varepsilon}(\theta) \quad \text{ and } \quad \mathbf{A}_{\varepsilon} (\theta) \tilde{\mathbf{v}}_0 = \mathbf{v}_{\varepsilon}(\theta),
\end{aligned}
\end{equation}
and note that this definition implies that $\mathbf{A}_{\varepsilon}(\theta) = \mathbf{A}(\theta)$ for all $\theta \in [\theta^{-} + \varepsilon, \theta^+ - \varepsilon]$. Finally, define $[\mathbf{u}']_{\varepsilon}\colon \mathbb{R} \rightarrow \mathbb{R}$ similar to (\ref{eq:uEpsilonDef}) via
\begin{equation}
\begin{aligned}\label{eq:uEpsilonPrimeDef}
[\mathbf{u}']_{\varepsilon}(\theta) := \begin{cases}
\mathbf{u}'(\theta) & \text{ if } \theta \in [\theta^{-} + \varepsilon, \theta^+ - \varepsilon] \\
\mathbf{u}'(\theta^{-} + \varepsilon) & \text{ if } \theta < \theta^{-} + \varepsilon \\
\mathbf{u}'(\theta^+ - \varepsilon) & \text{ if } \theta > \theta^+ - \varepsilon,
\end{cases}
\end{aligned}
\end{equation}
and define $[\mathbf{v}']_{\varepsilon} \colon \mathbb{R} \rightarrow \mathbb{R}^3$ analogously. Note that these extensions  are not  chosen to be $\mathbf{u}_{\varepsilon}'(\theta), \mathbf{v}'_{\varepsilon}(\theta)$ because the latter  are discontinuous at $\theta = \theta^{\mp} \pm \varepsilon$. 

Next, we address the fact that the surface normal $\mathbf{n}(\mathbf{y})$ in (\ref{eq:secFundDef}) is not well-defined when $\partial_1 \mathbf{y} \times \partial_2 \mathbf{y} = \mathbf{0}$. Embracing again the $\varepsilon$-dependence in $V^{\varepsilon}$, we  regularize  the surface normal via the continuous vector field $\boldsymbol{\nu}_{\varepsilon} \colon \mathbb{R}^{3\times2} \rightarrow \mathbb{R}^3$ defined by 
\begin{equation}
\begin{aligned}
\boldsymbol{\nu}_{\varepsilon}(\mathbf{F}) := \begin{cases}
\frac{\mathbf{F} \mathbf{e}_1 \times \mathbf{F} \mathbf{e}_2}{|\mathbf{F} \mathbf{e}_1 \times \mathbf{F} \mathbf{e}_2|} & \text{ if } |\mathbf{F} \mathbf{e}_1 \times \mathbf{F} \mathbf{e}_2| \geq \varepsilon \\
\varepsilon^{-1} \mathbf{F} \mathbf{e}_1 \times \mathbf{F} \mathbf{e}_2 & \text{ otherwise},
\end{cases}
\end{aligned}
\end{equation}
where $\{ \mathbf{e}_1, \mathbf{e}_2\}$ denotes the standard basis on $\mathbb{R}^2$.  It follows that $\boldsymbol{\nu}_{\varepsilon}(\nabla \mathbf{y}) = \mathbf{n}(\mathbf{y})$ whenever $|\partial_1 \mathbf{y} \times \partial_2 \mathbf{y}| \geq \varepsilon$. Likewise, we address the potential blowup of  $1/\sqrt{\det \mathbf{I}(\mathbf{y})}$ in $W_1$ of (\ref{eq:W12Def}) as follows. Noting that $\det \mathbf{I}(\mathbf{y}) = |\partial_1 \mathbf{y} \times \partial_2 \mathbf{y}|^2$,  we define 
\begin{equation}
\begin{aligned}
    j_{\varepsilon}(\mathbf{F}) = \begin{cases}
|\mathbf{F} \mathbf{e}_1 \times \mathbf{F} \mathbf{e}_1| & \text{ if } | \mathbf{F} \mathbf{e}_1 \times \mathbf{F} \mathbf{e}_2| \geq \varepsilon  \\
\varepsilon & \text{ otherwise}.
    \end{cases} 
\end{aligned}
\end{equation}
It follows that $1/j_{\varepsilon}(\mathbf{F})$ is continuous and  $\sqrt{\det \mathbf{I}(\mathbf{y})} = j_{\varepsilon}(\nabla \mathbf{y})$ whenever $|\partial_1 \mathbf{y} \times \partial_2 \mathbf{y}|\geq \varepsilon.$

We now define $W^{\varepsilon}$. For this purpose and throughout the remainder of this section, the variables $\mathbf{g} \in \mathbb{R}^2 , \mathbf{F} \in \mathbb{R}^{3\times2} , \mathcal{F} \in  \mathbb{R}^{3\times2 \times2}, \mathcal{F}_{\alpha \beta} \in \mathbb{R}^3$ denote placeholders for  $\nabla \theta, \nabla \mathbf{y}, \nabla \nabla \mathbf{y}$, $\partial_{\alpha} \partial_{\beta} \mathbf{y}$, respectively. Hence, we take $W^{\varepsilon} \colon \mathbb{R} \times \mathbb{R}^2 \times \mathbb{R}^{3\times2} \times \mathbb{R}^{3  \times 2 \times 2} \rightarrow \mathbb{R}$  to satisfy
\begin{equation}
\begin{aligned}\label{eq:WEpsDef}
W^{\varepsilon}(\theta, \mathbf{g} ,\mathbf{F}, \mathcal{F}) := W_1^{\varepsilon}( \theta, \mathbf{F}) + W_{2}^{\varepsilon}( \theta, \mathbf{F}, \mathcal{F}) + d_1 |\mathcal{F}|^2+   d_2 | \mathbf{g} |^2 + d_3 |\theta|^2
\end{aligned}
\end{equation} 
for energy densities $W_{1,2}^{\varepsilon}$ that are  continuous extensions of $W_{1,2}$ in (\ref{eq:W12Def}) and defined by  
\begin{equation}
\begin{aligned}\label{eq:W12Eps}
 &W_1^{\varepsilon}( \theta, \mathbf{F}) := \frac{c_1}{j_{\varepsilon}(\mathbf{F})} \big| \mathbf{F}^T \mathbf{F} - \mathbf{A}_{\varepsilon}^T(\theta) \mathbf{A}_{\varepsilon}(\theta) \big|^2,  \\
 &W_2^{\varepsilon}( \theta,  \mathbf{F}, \mathcal{F}) := \frac{c_2 L_{\Omega}^2}{|\tilde{\mathbf{u}}_0|^4|\tilde{\mathbf{v}}_0|^4}\Big(\big[ \mathbf{v}_{\varepsilon}(\theta) \cdot [\mathbf{v}']_{\varepsilon}(\theta)\big]  \big[\tilde{\mathbf{u}}_0 \cdot  \mathbf{K}_{\varepsilon}( \mathbf{F} , \mathcal{F}) \tilde{\mathbf{u}}_0\big]  + \big[\mathbf{u}_{\varepsilon}(\theta) \cdot [\mathbf{u}']_{\varepsilon}(\theta)\big] \big[ \tilde{\mathbf{v}}_0 \cdot  \mathbf{K}_{\varepsilon}(\mathbf{F}, \mathcal{F})  \tilde{\mathbf{v}}_0 \big] \Big)^2  ,
\end{aligned}
\end{equation}
where $\mathbf{K}_{\varepsilon} \colon \mathbb{R}^{3\times2} \times \mathbb{R}^{3\times2\times2}  \rightarrow \mathbb{R}^{2\times2}_{\text{sym}}$ is the  tensor field
\begin{equation}\label{eq:Keps}
\begin{aligned}
\mathbf{K}_{\varepsilon}(\mathbf{F}, \mathcal{F}) := \begin{pmatrix}  \mathcal{F}_{11} \cdot  \boldsymbol{\nu}_{\varepsilon}(\mathbf{F}) &  \mathcal{F}_{12} \cdot  \boldsymbol{\nu}_{\varepsilon}(\mathbf{F}) \\  \mathcal{F}_{12} \cdot  \boldsymbol{\nu}_{\varepsilon}(\mathbf{F}) &  \mathcal{F}_{22} \cdot  \boldsymbol{\nu}_{\varepsilon}(\mathbf{F}) \end{pmatrix}.
\end{aligned}
\end{equation}
As with the surface normal,  $\mathbf{K}_{\varepsilon}(\nabla \mathbf{y}, \nabla \nabla \mathbf{y}) = \mathbf{II}(\mathbf{y})$ whenever $|\partial_1 \mathbf{y} \times \partial_2 \mathbf{y}| \geq \varepsilon$; it is also linear in $\mathcal{F}$. This completes our construction of the desired  extended and continuous energy density $W^{\varepsilon}$ in (\ref{eq:getWepsilon}).

\noindent \textbf{Step 2.\;Coercivity, growth,  and convexity.} We turn to establishing several pointwise properties of $W^{\varepsilon}$, including upper and lower bounds and convexity in certain arguments. These properties  are key to applying the direct method of the calculus of variations to prove the existence of minimizers.   

Our first result concerns estimates on the energy density $W_{1}^{\varepsilon}$ in (\ref{eq:W12Eps}).
\begin{lemma}\label{firstLemmaBound}
There are constants $0 < c_{1,\varepsilon} < C_{1,\varepsilon}$ such that 
\begin{equation}
\begin{aligned}
c_{1,\varepsilon} | \mathbf{F}|^2 - \frac{1}{c_{1,\varepsilon}} \leq W^{\varepsilon}_1(\theta, \mathbf{F}) \leq C_{1,\varepsilon}( | \mathbf{F}|^4 +1)  \quad \text{ for all } \quad  (\theta, \mathbf{F}) \in \mathbb{R} \times \mathbb{R}^{3\times2}.
\end{aligned}
\end{equation}
\end{lemma}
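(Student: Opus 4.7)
The plan is to prove the two inequalities separately, leveraging two structural properties of the extensions built in Step 1: (i) $\mathbf{A}_{\varepsilon}^T(\theta)\mathbf{A}_{\varepsilon}(\theta)$ is uniformly bounded in $\theta \in \mathbb{R}$, since $\mathbf{A}_{\varepsilon}$ is continuous and is constant outside the compact interval $[\theta^{-}+\varepsilon,\theta^{+}-\varepsilon]$; and (ii) the surface element satisfies $\varepsilon \leq j_{\varepsilon}(\mathbf{F}) \leq \varepsilon + \tfrac{1}{2}|\mathbf{F}|^2$, the upper estimate coming from $|\mathbf{F}\mathbf{e}_1 \times \mathbf{F}\mathbf{e}_2| \leq |\mathbf{F}\mathbf{e}_1||\mathbf{F}\mathbf{e}_2| \leq \tfrac{1}{2}|\mathbf{F}|^2$. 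Set $M_{\varepsilon} := \sup_{\theta\in\mathbb{R}} |\mathbf{A}_{\varepsilon}^T(\theta)\mathbf{A}_{\varepsilon}(\theta)|$, which is finite by (i).

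The upper bound is routine. Using $1/j_{\varepsilon}(\mathbf{F}) \leq 1/\varepsilon$, the triangle inequality, $|\mathbf{F}^T\mathbf{F}| \leq |\mathbf{F}|^2$, and $(a+b)^2 \leq 2(a^2+b^2)$, one obtains $W_1^{\varepsilon}(\theta,\mathbf{F}) \leq (2c_1/\varepsilon)(|\mathbf{F}|^4 + M_{\varepsilon}^2)$, from which the claim follows by choosing $C_{1,\varepsilon}$ large enough to absorb both terms.

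The lower bound is the main obstacle, and it is what dictates the exponent $2$ (rather than $4$) on the right-hand side. The difficulty is that $j_{\varepsilon}(\mathbf{F})$ itself grows quadratically in $|\mathbf{F}|$, so although the numerator $|\mathbf{F}^T\mathbf{F}-\mathbf{A}_{\varepsilon}^T\mathbf{A}_{\varepsilon}|^2$ behaves like $|\mathbf{F}|^4$ at infinity, the quotient only grows quadratically. I would proceed by a case split at a threshold $R$ chosen comfortably larger than both $M_{\varepsilon}$ and $\varepsilon$. In the regime $|\mathbf{F}|^2 \geq R$, combine the elementary estimate $|\mathbf{F}^T\mathbf{F}| \geq |\mathbf{F}|^2/\sqrt{2}$ (a consequence of $|\mathbf{F}^T\mathbf{F}|^2 \geq \tfrac{1}{2}(\operatorname{tr}\mathbf{F}^T\mathbf{F})^2$) with the reverse triangle inequality to deduce $|\mathbf{F}^T\mathbf{F}-\mathbf{A}_{\varepsilon}^T\mathbf{A}_{\varepsilon}|^2 \geq c_{\ast}|\mathbf{F}|^4$ for some $c_{\ast}>0$ depending only on $M_{\varepsilon}$; dividing by $j_{\varepsilon}(\mathbf{F}) \leq \tfrac{1}{2}|\mathbf{F}|^2$, which holds on this regime by the choice of $R$, yields $W_1^{\varepsilon}(\theta,\mathbf{F}) \geq \tilde c\,|\mathbf{F}|^2$ there.

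For the complementary bounded regime $|\mathbf{F}|^2 \leq R$, I would simply use $W_1^{\varepsilon} \geq 0$ and pick $c_{1,\varepsilon}$ small enough that the affine function $c_{1,\varepsilon}|\mathbf{F}|^2 - 1/c_{1,\varepsilon}$ is nonpositive throughout this set, which amounts to requiring $c_{1,\varepsilon}^2 \leq 1/R$. Taking $c_{1,\varepsilon}$ to be the minimum of $\tilde c$ and $1/\sqrt{R}$ then gives the desired lower bound on all of $\mathbb{R} \times \mathbb{R}^{3\times 2}$.
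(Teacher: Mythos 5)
Your proposal is correct, and for the lower bound it takes a genuinely different route from the paper. The paper avoids any case split: it discards the off-diagonal entries of $\mathbf{F}^T\mathbf{F}-\mathbf{A}_{\varepsilon}^T\mathbf{A}_{\varepsilon}$, applies Young's inequality with $\delta=\tfrac12$ to each diagonal term $\bigl||\mathbf{F}\mathbf{e}_i|^2-|\mathbf{A}_\varepsilon\mathbf{e}_i|^2\bigr|^2$ to get the single global estimate $|\mathbf{F}^T\mathbf{F}-\mathbf{A}_{\varepsilon}^T\mathbf{A}_{\varepsilon}|^2\geq\tfrac14|\mathbf{F}|^4-\tfrac12 M$, and only then divides by $j_\varepsilon(\mathbf{F})$. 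You instead keep the full Frobenius norm, use $|\mathbf{F}^T\mathbf{F}|\geq|\mathbf{F}|^2/\sqrt2$ together with the reverse triangle inequality, and split at a threshold $R$ so that the $\mathbf{A}_\varepsilon$-term is dominated in the large-$|\mathbf{F}|$ regime while the affine lower bound is trivially nonpositive in the bounded regime. Both arguments are valid and, amusingly, both land on the same leading constant $c_1/4$. What your version buys is robustness in the small-$|\mathbf{F}|$ regime: the paper's final chain passes through $j_\varepsilon(\mathbf{F})\leq|\mathbf{F}\mathbf{e}_1||\mathbf{F}\mathbf{e}_2|$, which fails pointwise when $|\mathbf{F}\mathbf{e}_1\times\mathbf{F}\mathbf{e}_2|<\varepsilon$ and $\mathbf{F}$ is small (the conclusion survives only because the bracketed quantity is then negative), whereas your explicit restriction to $|\mathbf{F}|^2\geq R\geq 2\varepsilon$ makes the inequality $j_\varepsilon(\mathbf{F})\leq\tfrac12|\mathbf{F}|^2$ literally true where you use it. What the paper's version buys is a cleaner, fully explicit constant $c_{1,\varepsilon}=\min\{c_1/4,\,2/(c_1\varepsilon^{-1}M)\}$ without introducing an auxiliary threshold. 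The upper bounds are essentially identical.
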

\begin{proof}
For the upper bound, observe  by standard estimates that 
\begin{equation}
\begin{aligned}
W^{\varepsilon}_1(\theta, \mathbf{F}) = \frac{c_1}{j_{\varepsilon}(\mathbf{F})}  |\mathbf{F}^T \mathbf{F} + \mathbf{A}_{\varepsilon}^T(\theta) \mathbf{A}_{\varepsilon}(\theta)|^2 \leq 2c_1 \varepsilon^{-1}\big( |\mathbf{F}^T\mathbf{F}|^2 + |\mathbf{A}_{\varepsilon}^T(\theta) \mathbf{A}_{\varepsilon}(\theta)|^2 \big) \leq  2 c_1 \varepsilon^{-1} \big( |\mathbf{F}|^4 + |\mathbf{A}_{\varepsilon}(\theta)|^4).
\end{aligned}
\end{equation}
Let $M := \sup_{\theta \in (\theta^{-}, \theta^+)} |\mathbf{A}(\theta)|^4$ and note that $\sup_{\theta \in \mathbb{R}} |\mathbf{A}_{\varepsilon}(\theta)|^4 \leq M$ by the definition of $\mathbf{A}_{\varepsilon}(\theta)$ in (\ref{eq:Aeps}).  Moreover,  $M$ is finite by the definition of $\mathbf{A}(\theta)$ in (\ref{eq:shapeTensor}), since $|\mathbf{u}(\theta)| \leq |\mathbf{t}_1(\theta)| + |\mathbf{t}_3(\theta)| = |\mathbf{t}_1^r| + |\mathbf{t}_3^r|$ and $|\mathbf{v}(\theta)| \leq |\mathbf{t}_2(\theta)| + |\mathbf{t}_4(\theta)| = |\mathbf{t}_2^r| + |\mathbf{t}_4^r|$.  Thus, the upper bound asserted is true  for $C_{1,\varepsilon} = 2 c_1\varepsilon^{-1} \max \{ 1, M\}$. 

For the lowerbound, set $\mathbf{f}_i := \mathbf{F} \mathbf{e}_i$ and $\mathbf{a}_{i,{\varepsilon}}(\theta) := \mathbf{A}_{\varepsilon}(\theta)\mathbf{e}_i$ for $i = 1,2$. We have 
\begin{equation}
\begin{aligned}
W^{\varepsilon}_1( \theta, \mathbf{F}) &= \frac{c_1}{j_{\varepsilon}(\mathbf{F})}  \Big| \begin{pmatrix} |\mathbf{f}_1|^2 & \mathbf{f}_1 \cdot \mathbf{f}_2 \\ \mathbf{f}_1 \cdot \mathbf{f}_2 & |\mathbf{f}_2|^2 \end{pmatrix} - \begin{pmatrix} | \mathbf{a}_{1,\varepsilon}(\theta) |^2 & \mathbf{a}_{1,\varepsilon}(\theta) \cdot \mathbf{a}_{2,\varepsilon}(\theta) \\ \mathbf{a}_{1,\varepsilon}(\theta) \cdot \mathbf{a}_{2,\varepsilon}(\theta) & |\mathbf{a}_{2,\varepsilon}(\theta)|^2  \end{pmatrix} \Big|^2 \geq \frac{c_1}{j_{\varepsilon}(\mathbf{F})} \sum_{i =1,2}  \big| |\mathbf{f}_i|^2 - |\mathbf{a}_{i,\varepsilon}(\theta) |^2\big|^2.
\end{aligned}
\end{equation}
Using Young's inequality, we have  for any $\delta > 0$ that
\begin{equation}
\begin{aligned}
\sum_{i =1,2}  \big| |\mathbf{f}_i|^2 - |\mathbf{a}_{i,\varepsilon}(\theta) |^2\big|^2 &=  \sum_{i =1,2} \left(|\mathbf{f}_i|^4 - 2 |\mathbf{f}_i|^2 |\mathbf{a}_{i,\varepsilon}(\theta) |^2 + |\mathbf{a}_{i,\varepsilon}(\theta)|^4 \right)\geq  \sum_{i =1,2} \left( (1 - \delta) |\mathbf{f}_i|^4  + \big(1 - \frac1{\delta}\big) |\mathbf{a}_{i,\varepsilon}(\theta)|^4 \right).
\end{aligned}
\end{equation}
Taking $\delta = \frac12$ and employing the definition of $M$ gives
\begin{equation}
    \begin{aligned}
 \sum_{i =1,2}  \big| |\mathbf{f}_i|^2 - |\mathbf{a}_{i,\varepsilon}(\theta) |^2\big|^2\ge  \sum_{i =1,2}  \big( \frac12 |\mathbf{f}_i|^4 -  \frac12|\mathbf{a}_{i,\varepsilon}(\theta) |^4 \big)  \ge \frac{1}{4} |\mathbf{F}|^4 - \frac{1}{2} M. 
\end{aligned}
\end{equation}
It follows that 
\begin{equation}
    \begin{aligned}
        W_1^{\varepsilon}(\theta, \mathbf{F}) \geq \frac{c_1}{j_{\varepsilon}(\mathbf{F})}\big(\frac{1}{4} |\mathbf{F}|^4 - \frac{1}{2}M\Big) \geq  \frac{c_1}{4} \frac{|\mathbf{F}|^4}{|\mathbf{F} \mathbf{e}_1||\mathbf{F}\mathbf{e}_2|} - \frac{c_1}{2} \varepsilon^{-1} M  \geq \frac{c_1}{4} |\mathbf{F}|^2 - \frac{1}{2} c_1 \varepsilon^{-1} M.
    \end{aligned}
\end{equation}
Therefore, the desired lower bound holds with $c_{1,\varepsilon} := \min\{ \frac{c_1}{4}, \tfrac{2}{c_1 \varepsilon^{-1} M} \}$. 
\end{proof}

We now develop similar results for $W_2^{\varepsilon}$  in (\ref{eq:W12Eps}).
\begin{lemma}\label{secondLemmaBound}
There is a constant $C_{2,\varepsilon}> 0$ such that 
\begin{equation}
\begin{aligned}\label{eq:firstW2EpsBound}
0 \leq W_2^{\varepsilon}(\theta, \mathbf{F}, \mathcal{F}) \leq C_{2,\varepsilon} | \mathcal{F}|^2  \quad \text{ for all } (\theta, \mathbf{F}, \mathcal{F}) \in \mathbb{R} \times \mathbb{R}^{3\times2} \times \mathbb{R}^{3\times2 \times2} .
\end{aligned}
\end{equation}
\end{lemma}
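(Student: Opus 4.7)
The plan is to proceed by direct estimation. The lower bound $W_2^{\varepsilon} \geq 0$ is immediate from the fact that $W_2^{\varepsilon}(\theta, \mathbf{F}, \mathcal{F})$ is a nonnegative constant times a square. So the only work is the upper bound.

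For the upper bound I would first collect three elementary ingredients. \textbf{(i)} The scalar coefficients $\mathbf{v}_{\varepsilon}(\theta)\cdot[\mathbf{v}']_{\varepsilon}(\theta)$ and $\mathbf{u}_{\varepsilon}(\theta)\cdot[\mathbf{u}']_{\varepsilon}(\theta)$ are uniformly bounded in $\theta\in\mathbb{R}$ by a constant depending only on the design and $\varepsilon$: by construction \eqref{eq:uEpsilonDef} and \eqref{eq:uEpsilonPrimeDef}, $\mathbf{u}_\varepsilon,\mathbf{v}_\varepsilon,[\mathbf{u}']_\varepsilon,[\mathbf{v}']_\varepsilon$ all agree with the smooth functions $\mathbf{u},\mathbf{v},\mathbf{u}',\mathbf{v}'$ on the compact interval $[\theta^{-}+\varepsilon,\theta^{+}-\varepsilon]$ and are constant extensions outside, hence each is bounded on $\mathbb{R}$ by its max on this compact set. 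Call these bounds $M_u$ and $M_v$. \textbf{(ii)} The surface-normal regularization satisfies $|\boldsymbol{\nu}_{\varepsilon}(\mathbf{F})|\leq 1$ for all $\mathbf{F}\in\mathbb{R}^{3\times 2}$: when $|\mathbf{F}\mathbf{e}_1\times\mathbf{F}\mathbf{e}_2|\geq\varepsilon$ it equals a unit vector, and when $|\mathbf{F}\mathbf{e}_1\times\mathbf{F}\mathbf{e}_2|<\varepsilon$ its magnitude is $\varepsilon^{-1}|\mathbf{F}\mathbf{e}_1\times\mathbf{F}\mathbf{e}_2|<1$. \textbf{(iii)} The map $\mathbf{K}_{\varepsilon}(\mathbf{F},\mathcal{F})$ is \emph{linear} in $\mathcal{F}$ with matrix entries $\mathcal{F}_{\alpha\beta}\cdot\boldsymbol{\nu}_\varepsilon(\mathbf{F})$, so by \textbf{(ii)} one has the pointwise bound $|\mathbf{K}_\varepsilon(\mathbf{F},\mathcal{F})|\leq C|\mathcal{F}|$ for a universal constant $C$.

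Combining (i)--(iii), each of the two terms inside the square in \eqref{eq:W12Eps} is bounded by (design constant)$\times$(bounded scalar)$\times|\mathcal{F}|$: explicitly, $\bigl|[\mathbf{v}_\varepsilon(\theta)\cdot[\mathbf{v}']_\varepsilon(\theta)]\,[\tilde{\mathbf{u}}_0\cdot\mathbf{K}_\varepsilon(\mathbf{F},\mathcal{F})\tilde{\mathbf{u}}_0]\bigr|\leq M_v|\tilde{\mathbf{u}}_0|^2\cdot C|\mathcal{F}|$, and similarly with $u$ and $v$ swapped. Applying $(a+b)^2\leq 2(a^2+b^2)$ to the squared sum and pulling the (finite, $\varepsilon$- and design-dependent) normalization $c_2 L_\Omega^2/(|\tilde{\mathbf{u}}_0|^4|\tilde{\mathbf{v}}_0|^4)$ back in yields $W_2^\varepsilon(\theta,\mathbf{F},\mathcal{F})\leq C_{2,\varepsilon}|\mathcal{F}|^2$ for an explicit constant
\[
C_{2,\varepsilon} \;:=\; \frac{2 c_2 L_\Omega^2 C^2}{|\tilde{\mathbf{u}}_0|^4|\tilde{\mathbf{v}}_0|^4}\bigl(M_v^2|\tilde{\mathbf{u}}_0|^4 + M_u^2|\tilde{\mathbf{v}}_0|^4\bigr).
\]

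There is no real obstacle here; it is a bookkeeping proof. The only point that has to be handled carefully, and the reason the regularizations in Step 1 were introduced, is item \textbf{(ii)}: without the cutoff at $|\mathbf{F}\mathbf{e}_1\times\mathbf{F}\mathbf{e}_2|<\varepsilon$, the surface normal $\mathbf{n}(\mathbf{y})$ would blow up on degenerate $\mathbf{F}$ and the prefactor of $|\mathcal{F}|^2$ in the upper bound would fail to exist. With $\boldsymbol{\nu}_\varepsilon$ in place, boundedness is automatic and the lemma follows from the simple observation that $W_2^\varepsilon$ is quadratic in $\mathcal{F}$ with coefficients that are continuous and bounded in the remaining arguments.
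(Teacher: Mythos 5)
Your proposal is correct and follows essentially the same route as the paper's proof: bound the $\theta$-dependent coefficients uniformly using the compact-interval extensions, note $|\boldsymbol{\nu}_\varepsilon(\mathbf{F})|\leq 1$ so that $|\mathbf{K}_\varepsilon(\mathbf{F},\mathcal{F})|\leq|\mathcal{F}|$, and apply $(a+b)^2\leq 2(a^2+b^2)$. The only cosmetic difference is that the paper absorbs the $|\tilde{\mathbf{u}}_0|^4|\tilde{\mathbf{v}}_0|^4$ normalization directly via $[\tilde{\mathbf{u}}_0\cdot\mathbf{K}_\varepsilon\tilde{\mathbf{u}}_0]^2\leq|\tilde{\mathbf{u}}_0|^4|\mathbf{K}_\varepsilon|^2$, whereas you carry it explicitly into your constant; both are fine.
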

\begin{proof}
We only need to prove the upperbound. By standard estimates,
\begin{equation}
\begin{aligned}
W_2^{\varepsilon}(\theta, \mathbf{F}, \mathcal{F})  \leq 2c_2 L_{\Omega}^2 ( \big[ \mathbf{v}_{\varepsilon}(\theta) \cdot [\mathbf{v}']_{\varepsilon}(\theta)\big]^2 +  \big[ \mathbf{u}_{\varepsilon}(\theta) \cdot [\mathbf{u}']_{\varepsilon}(\theta)\big]^2)  |\mathbf{K}_{\varepsilon}(\mathbf{F},\mathcal{F})|^2. 
\end{aligned}
\end{equation}
Let $M_{\varepsilon} := \sup_{\theta \in \mathbb{R}}   ( \big[ \mathbf{v}_{\varepsilon}(\theta) \cdot [\mathbf{v}']_{\varepsilon}(\theta)\big]^2 +  \big[ \mathbf{u}_{\varepsilon}(\theta) \cdot [\mathbf{u}']_{\varepsilon}(\theta)\big]^2)$. This quantity is clearly finite by the smoothness of $\mathbf{u}(\theta), \mathbf{v}(\theta)$ on $(\theta^{-}, \theta^+)$ and by the definitions of their extensions $\mathbf{u}_{\varepsilon}(\theta), \mathbf{v}_{\varepsilon}(\theta), [\mathbf{u}']_{\varepsilon}(\theta)$ and $[\mathbf{v}']_{\varepsilon}(\theta)$ in (\ref{eq:uEpsilonDef}) and (\ref{eq:uEpsilonPrimeDef}). Standard estimates on  $\mathbf{K}_{\varepsilon}(\mathbf{F}, \mathcal{F})$ in (\ref{eq:Keps}) then yield
\begin{equation}
\begin{aligned}
|\mathbf{K}_{\varepsilon}(\mathbf{F}, \mathcal{F})|^2 \leq (|\mathcal{F}_{11}|^2 + 2|\mathcal{F}_{12}|^2 + |\mathcal{F}_{22}|^2) |\boldsymbol{\nu}_{\varepsilon}(\mathbf{F})|^2  \leq |\mathcal{F}|^2
\end{aligned}
\end{equation}
where  the last inequality uses that $|\mathcal{F}_{11}|^2 + 2|\mathcal{F}_{12}|^2 + |\mathcal{F}_{22}|^2= |\mathcal{F}|^2$ and that  $|\boldsymbol{\nu}_{\varepsilon}(\mathbf{F})| = 1$ when $|\mathbf{F} \mathbf{e}_1 \times \mathbf{F} \mathbf{e}_2| \geq \varepsilon$ and $< 1$ otherwise. We conclude that $W_2^{\varepsilon}(\theta, \mathbf{F}, \mathcal{F})  \leq 2c_2 L_{\Omega}^2 M_{\varepsilon} |\mathcal{F}|^2$ and thus $C_{2,\varepsilon} := 2c_2 L_{\Omega}^2 M_{\varepsilon}$. 
\end{proof}

We state one final result highlighting the key properties of $W^{\varepsilon}$  that go into proving Theorem \ref{ExistenceTheorem}. 
\begin{proposition}\label{WepsProp}
There are positive constants $c_\varepsilon, C_{\varepsilon}> 0$ such that 
\begin{equation}
\begin{aligned}\label{eq:growthCoercivity}
c_{\varepsilon}\Big(  |\theta|^2 +   | \mathbf{g} |^2 +  | \mathbf{F}|^2 +  |\mathcal{F}|^2 \Big)  - \frac{1}{c_{\varepsilon}}    \leq  W^{\varepsilon}(\theta, \mathbf{g} ,\mathbf{F}, \mathcal{F})  \leq  C_{\varepsilon}   (  |\theta|^2  + |\mathbf{g}|^2 + |\mathbf{F}|^4 + |\mathcal{F}|^2 + 1) 
\end{aligned}
\end{equation}
for all $(\theta, \mathbf{g}, \mathbf{F}, \mathcal{F}) \in \mathbb{R} \times \mathbb{R}^2 \times \mathbb{R}^{3\times2}  \times \mathbb{R}^{3 \times 2 \times 2}$. In addition,  $W^{\varepsilon} (\theta, \mathbf{g} , \mathbf{F} , \mathcal{F})$ is continuous in all its arguments and convex in $(\mathbf{g}, \mathcal{F})$ for each $(\theta, \mathbf{F})$. 
\end{proposition}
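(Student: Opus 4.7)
The plan is to combine the two previous lemmas with direct inspection of the regularizing terms in \eqref{eq:WEpsDef}, then verify continuity and convexity piecewise.

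For the bounds in \eqref{eq:growthCoercivity}, I would add the pointwise inequalities from Lemmas \ref{firstLemmaBound} and \ref{secondLemmaBound} to the explicit quadratic terms $d_1|\mathcal{F}|^2+d_2|\mathbf{g}|^2+d_3|\theta|^2$. The lower bound follows from $W_1^{\varepsilon}\ge c_{1,\varepsilon}|\mathbf{F}|^2 - 1/c_{1,\varepsilon}$, $W_2^{\varepsilon}\ge 0$, and the nonnegative regularizers, so taking $c_{\varepsilon} := \min\{c_{1,\varepsilon}, d_1, d_2, d_3\}$ yields the coercivity estimate (the constant $-1/c_{\varepsilon}$ absorbs $-1/c_{1,\varepsilon}$ since $c_{\varepsilon}\le c_{1,\varepsilon}$). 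For the upper bound, combine $W_1^{\varepsilon}\le C_{1,\varepsilon}(|\mathbf{F}|^4+1)$ with $W_2^{\varepsilon}\le C_{2,\varepsilon}|\mathcal{F}|^2$ and absorb $d_1|\mathcal{F}|^2$ into the $W_2^{\varepsilon}$ bound; then $C_{\varepsilon}:=\max\{C_{1,\varepsilon}, C_{2,\varepsilon}+d_1, d_2, d_3\}$ works.

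For continuity, I would check each building block of $W^{\varepsilon}$ individually and then invoke that sums, products, and compositions of continuous functions are continuous. The extensions $\mathbf{u}_{\varepsilon}, \mathbf{v}_{\varepsilon}, [\mathbf{u}']_{\varepsilon}, [\mathbf{v}']_{\varepsilon}$ are continuous on $\mathbb{R}$ by construction in \eqref{eq:uEpsilonDef}--\eqref{eq:uEpsilonPrimeDef}, hence so is $\mathbf{A}_{\varepsilon}$ via \eqref{eq:Aeps}. The map $j_{\varepsilon}$ is continuous on $\mathbb{R}^{3\times 2}$ since both branches agree at the threshold $|\mathbf{F}\mathbf{e}_1\times\mathbf{F}\mathbf{e}_2|=\varepsilon$, and $j_{\varepsilon}\ge \varepsilon>0$ so $1/j_{\varepsilon}$ is continuous. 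A parallel check shows $\boldsymbol{\nu}_{\varepsilon}$ is continuous (both branches reduce to $\varepsilon^{-1}\mathbf{F}\mathbf{e}_1\times\mathbf{F}\mathbf{e}_2$ on the threshold), so $\mathbf{K}_{\varepsilon}(\mathbf{F},\mathcal{F})$ defined by \eqref{eq:Keps} is continuous in $(\mathbf{F},\mathcal{F})$. Assembling these through \eqref{eq:W12Eps} and \eqref{eq:WEpsDef} gives joint continuity of $W^{\varepsilon}$ in $(\theta,\mathbf{g},\mathbf{F},\mathcal{F})$.

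For convexity in $(\mathbf{g},\mathcal{F})$ with $(\theta,\mathbf{F})$ held fixed, I would argue term by term. The piece $W_1^{\varepsilon}(\theta,\mathbf{F})$ does not depend on $(\mathbf{g},\mathcal{F})$ and is trivially convex. The quadratic regularizers $d_1|\mathcal{F}|^2$ and $d_2|\mathbf{g}|^2$ are convex, and $d_3|\theta|^2$ is constant in $(\mathbf{g},\mathcal{F})$. The key observation is that with $\mathbf{F}$ fixed, $\mathbf{K}_{\varepsilon}(\mathbf{F},\mathcal{F})$ is linear in $\mathcal{F}$ by \eqref{eq:Keps}, so the scalar inside the parentheses of $W_2^{\varepsilon}$ in \eqref{eq:W12Eps} is an affine (in fact linear) function of $\mathcal{F}$; its square is therefore convex in $\mathcal{F}$, and independent of $\mathbf{g}$. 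Summing convex functions of $(\mathbf{g},\mathcal{F})$ preserves convexity, completing the proof.

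I do not anticipate any serious obstacle here; each claim reduces to either Lemmas \ref{firstLemmaBound}--\ref{secondLemmaBound} or elementary checks on the $\varepsilon$-extensions. The one spot demanding minor care is verifying that the two case-definitions of $j_{\varepsilon}$ and $\boldsymbol{\nu}_{\varepsilon}$ match continuously at the threshold $|\mathbf{F}\mathbf{e}_1\times\mathbf{F}\mathbf{e}_2|=\varepsilon$, which both do by design.
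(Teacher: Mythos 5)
Your proposal is correct and follows essentially the same route as the paper's proof: add the bounds from Lemmas \ref{firstLemmaBound} and \ref{secondLemmaBound} to the explicit quadratic regularizers for \eqref{eq:growthCoercivity}, and observe that $W_2^{\varepsilon}$ is the square of a quantity linear in $\mathcal{F}$ (for fixed $\theta,\mathbf{F}$) to get convexity in $(\mathbf{g},\mathcal{F})$. The only cosmetic differences are that the paper takes $C_{\varepsilon}$ as a sum rather than a maximum of the constituent constants and leaves the continuity check implicit in the Step 1 construction, whereas you spell out the threshold-matching of $j_{\varepsilon}$ and $\boldsymbol{\nu}_{\varepsilon}$ explicitly.
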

\begin{proof}
For the lowerbound, we have by (\ref{eq:WEpsDef}) and Lemma's \ref{firstLemmaBound}-\ref{secondLemmaBound} that 
\begin{equation}
\begin{aligned}
W^{\varepsilon}(\theta, \mathbf{g}, \mathbf{F}, \mathcal{F}) \geq c_{1,\varepsilon} | \mathbf{F}|^2 - \frac{1}{c_{1,\varepsilon}}  + d_1 |\mathcal{F}|^2  +  d_2 | \mathbf{g} |^2 + d_3 |\theta|^2.
\end{aligned} 
\end{equation}
The choice $c_\varepsilon := \min\{ c_{1,\varepsilon} , d_1 L_{\Omega}^2, d_2 L_{\Omega}^2, d_3\}$ gives the desired result.  For the upperbound, we again have by (\ref{eq:WEpsDef}) and Lemma's \ref{firstLemmaBound}-\ref{secondLemmaBound} that 
\begin{equation}
\begin{aligned}
W^{\varepsilon}(\theta, \mathbf{g}, \mathbf{F}, \mathcal{F})  &\leq C_{1,\varepsilon}( | \mathbf{F}|^4 +1) +   d_2 L_{\Omega}^2 | \mathbf{g} |^2 + (d_1 L_{\Omega}^2 + C_{2,\varepsilon})  |\mathcal{F}|^2 + d_3 |\theta|^2 \\
&\leq  (C_{1,\varepsilon}+ d_1L_{\Omega}^2  + d_2 L_{\Omega}^2 + d_3 + C_{2,\varepsilon}) (  |\theta|^2  + |\mathbf{g}|^2 + |\mathbf{F}|^4 + |\mathcal{F}|^2 + 1) .
\end{aligned}
\end{equation}
Thus, $C_{\varepsilon} := (C_{1,\varepsilon}+ d_1L_{\Omega}^2  + d_2 L_{\Omega}^2 + d_3 + C_{2,\varepsilon})$ gives  the desired upperbound. 

As for the convexity statement, it is clear that $d_1 L_{\Omega}^2|\mathcal{F}|^2 + d_2 L_{\Omega}^2 |\mathbf{g}|^2$ is convex and quadratic in $(\mathbf{g}, \mathcal{F})$, so the only questions concern $W_2^{\varepsilon}(\theta, \mathbf{F}, \mathcal{F})$ in (\ref{eq:W12Eps}). However,  $W_2^{\varepsilon}$ satisfies $W_2^{\varepsilon}(\theta, \mathbf{F}, \mathcal{F}) = c_2 L_{\Omega}^2 ( \boldsymbol{\nu}_{\varepsilon}(\mathbf{F}) \cdot \mathcal{F} \colon \mathbf{C}_{\varepsilon}(\theta))^2$ for all $(\theta, \mathbf{F}, \mathcal{F}) \in \mathbb{R} \times \mathbb{R}^{2\times2} \times  \mathbb{R}^{3 \times 2 \times 2}$, where $\mathbf{C}_{\varepsilon}(\theta) := \big( \mathbf{v}_{\varepsilon}(\theta) \cdot [\mathbf{v}']_{\varepsilon}(\theta)\big) \tilde{\mathbf{u}}_0 \otimes \tilde{\mathbf{u}}_0 +  \big( \mathbf{u}_{\varepsilon}(\theta) \cdot [\mathbf{u}']_{\varepsilon}(\theta)\big) \tilde{\mathbf{v}}_0 \otimes \tilde{\mathbf{v}}_0$. This function is clearly convex in $\mathcal{F}$.  This completes the proof. 
\end{proof}

\noindent \textbf{Step 3. Energy estimates.} We now go from pointwise estimates on the various energy densities that make up $E_{\text{int}}(\mathbf{y}, \theta)$ to upper and lower bounds on the overall energy $E(\mathbf{y}, \theta)$.  For the coercivity result, we follow a presentation by Ball (\cite{ball1976convexity}, Theorem 7.6) for mixed boundary conditions. 

\begin{proposition}\label{EnergyProp}
Assume that $\Gamma_{\emph{n}} \subset \Gamma_{\emph{d}}$ are measurable,  $\Gamma_{\emph{d}}$ has nonzero measure, and that $V^{\varepsilon}_{\Gamma}$ is non-empty. Let $\overline{\mathbf{b}} \in L^2(\Omega, \mathbb{R}^3)$, $\overline{\mathbf{t}} \in L^2(\Gamma_{\emph{t}}, \mathbb{R}^3)$, $\overline{\mathbf{m}} \in L^2(\Gamma_{\emph{m}}, \mathbb{R}^{3})$. Then, for all $(\mathbf{y}, \theta) \in V^{\varepsilon}_{\Gamma}$, the following two statments hold: (1) $E(\mathbf{y} ,\theta) < +\infty$ and (2) there is a constant $c_{\varepsilon}^{\star}>0$ depending only on $c_\varepsilon$ from (\ref{eq:growthCoercivity}) and $\Omega$  such that 
\begin{equation}
\begin{aligned}\label{eq:keyCoercivity}
E(\mathbf{y} ,\theta) \geq c_{\varepsilon}^{\star}\big( \| \theta \|^2_{H^1(\Omega)} + \| \mathbf{y} \|^2_{H^2(\Omega)} \big)  -  \frac{1}{c_{\varepsilon}^{\star}} \Big(  \| \overline{\mathbf{b}}\|^2_{L^{2}(\Omega)}  +  \| \overline{\mathbf{t}}\|^2_{L^{2}(\Gamma_{\text{t}})} + \| \overline{\mathbf{m}} \|^2_{L^2(\Gamma_{\text{m}})}  + \| \overline{\mathbf{y}} \|^2_{L^2(\Gamma_{\text{d}})} + 1 \Big).
\end{aligned}
\end{equation}
\end{proposition}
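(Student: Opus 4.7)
The plan is to combine the pointwise bounds of Proposition \ref{WepsProp} with two standard functional-analytic tools: (i) the 2D Sobolev embedding $H^1(\Omega)\hookrightarrow L^p(\Omega)$ for every $p<\infty$, which tames the quartic growth in $\mathbf{F}$, and (ii) a Poincar\'e inequality with partial boundary data, which converts the gradient-level control coming from the lower bound on $W^{\varepsilon}$ into full $H^2$-control of $\mathbf{y}$ using the Dirichlet condition $\mathbf{y}=\overline{\mathbf{y}}$ on $\Gamma_{\text{d}}$ (which has positive measure). External forces will be absorbed via trace theorems and a small-parameter Young inequality.

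For part (1), let $(\mathbf{y},\theta)\in V^{\varepsilon}_{\Gamma}$. Applying the upper bound in (\ref{eq:growthCoercivity}) pointwise and integrating gives
\begin{equation*}
E_{\text{int}}(\mathbf{y},\theta) \le C_{\varepsilon}\int_{\Omega}\bigl(|\theta|^2+|\nabla\theta|^2+|\nabla\mathbf{y}|^4+|\nabla\nabla\mathbf{y}|^2+1\bigr)\dif x.
\end{equation*}
Since $\theta\in H^1(\Omega)$ and $\mathbf{y}\in H^2(\Omega,\mathbb{R}^3)$, all terms except the quartic one are clearly finite; the quartic term is finite because $\nabla\mathbf{y}\in H^1(\Omega,\mathbb{R}^{3\times 2})$ embeds into $L^4(\Omega,\mathbb{R}^{3\times 2})$ in dimension two. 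For $E_{\text{ext}}$, Cauchy–Schwarz and the standard trace inequalities $\|\mathbf{y}\|_{L^2(\partial\Omega)}\lesssim \|\mathbf{y}\|_{H^1(\Omega)}$ and $\|(\nabla\mathbf{y})\mathbf{n}\|_{L^2(\partial\Omega)}\lesssim \|\mathbf{y}\|_{H^2(\Omega)}$ yield $|E_{\text{ext}}(\mathbf{y})|\le C(\overline{\mathbf{b}},\overline{\mathbf{t}},\overline{\mathbf{m}})\,\|\mathbf{y}\|_{H^2(\Omega)}<\infty$.

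For part (2), the lower bound in (\ref{eq:growthCoercivity}) integrates to
\begin{equation*}
E_{\text{int}}(\mathbf{y},\theta)\ge c_{\varepsilon}\bigl(\|\theta\|_{H^1(\Omega)}^2+\|\nabla\mathbf{y}\|_{L^2(\Omega)}^2+\|\nabla\nabla\mathbf{y}\|_{L^2(\Omega)}^2\bigr)-|\Omega|/c_{\varepsilon}.
\end{equation*}
Because $\Gamma_{\text{d}}$ has nonzero measure and $\mathbf{y}=\overline{\mathbf{y}}$ on $\Gamma_{\text{d}}$ in the trace sense, the Poincar\'e inequality for functions with partial boundary data gives $\|\mathbf{y}\|_{L^2(\Omega)}\le C\bigl(\|\nabla\mathbf{y}\|_{L^2(\Omega)}+\|\overline{\mathbf{y}}\|_{L^2(\Gamma_{\text{d}})}\bigr)$. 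Thus $\|\mathbf{y}\|_{H^2(\Omega)}^2\le C'\bigl(\|\nabla\mathbf{y}\|_{L^2(\Omega)}^2+\|\nabla\nabla\mathbf{y}\|_{L^2(\Omega)}^2+\|\overline{\mathbf{y}}\|_{L^2(\Gamma_{\text{d}})}^2\bigr)$, and the lower bound above dominates $c_{\varepsilon}/C'$ times $\|\mathbf{y}\|_{H^2(\Omega)}^2$ up to the boundary data contribution.

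It remains to absorb $E_{\text{ext}}$. Using traces as in part (1) and Young's inequality with a small parameter $\eta>0$,
\begin{equation*}
|E_{\text{ext}}(\mathbf{y})|\le \eta\,\|\mathbf{y}\|_{H^2(\Omega)}^2+\frac{C}{\eta}\bigl(\|\overline{\mathbf{b}}\|_{L^2(\Omega)}^2+\|\overline{\mathbf{t}}\|_{L^2(\Gamma_{\text{t}})}^2+\|\overline{\mathbf{m}}\|_{L^2(\Gamma_{\text{m}})}^2\bigr).
\end{equation*}
Choosing $\eta$ sufficiently small relative to $c_{\varepsilon}/C'$, the $\eta\|\mathbf{y}\|_{H^2}^2$ term is absorbed into the coercivity estimate for $E_{\text{int}}$, and collecting the remaining data terms (including $\|\overline{\mathbf{y}}\|_{L^2(\Gamma_{\text{d}})}^2$ and $|\Omega|/c_{\varepsilon}$) into the constant $1/c_{\varepsilon}^{\star}$ yields (\ref{eq:keyCoercivity}). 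The main delicate point is (ii): the coercivity from $W^{\varepsilon}$ only controls derivatives of $\mathbf{y}$, so upgrading this to an estimate on $\|\mathbf{y}\|_{H^2(\Omega)}$ relies crucially on $\Gamma_{\text{d}}$ having positive measure, and one needs to make sure the small-parameter absorption of $E_{\text{ext}}$ is compatible with the Poincar\'e constant — there is no circularity because both constants depend only on $\Omega$ and $\Gamma_{\text{d}}$, not on $(\mathbf{y},\theta)$.
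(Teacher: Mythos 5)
Your proposal is correct and follows essentially the same route as the paper's proof: the upper bound in Proposition \ref{WepsProp} plus the 2D Sobolev embedding $H^1\hookrightarrow L^4$ for finiteness, and for coercivity the lower bound in (\ref{eq:growthCoercivity}), a Poincar\'e inequality with partial boundary data on $\Gamma_{\text{d}}$, trace estimates, and a small-parameter Young inequality to absorb $E_{\text{ext}}$. The only difference is cosmetic — the paper splits the gradient term with an explicit parameter $\lambda\in(0,1)$ before absorbing, whereas you fold that into the choice of $\eta$ — and your closing remark about the non-circularity of the constants is exactly the point the paper's bookkeeping is designed to handle.
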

\begin{remark}\label{EnergyRemark}
Since $V^{\varepsilon}_{\Gamma}$ is assumed to be non-empty above, $\| \overline{\mathbf{y}} \|_{L^2(\Gamma_{\emph{d}})} < \infty$ by the trace theorem for Sobolev maps.  Thus, from the assumptions of Theorem \ref{ExistenceTheorem} (the same as those in Proposition \ref{EnergyProp}) and the inequality in (\ref{eq:keyCoercivity}), the infimum $E^{\star} := \inf_{(\mathbf{y},\theta) \in V^{\varepsilon}_{\Gamma}} E(\mathbf{y}, \theta)$ is not $-\infty$. 
\end{remark}
\begin{proof}
Fix any $(\mathbf{y}, \theta) \in V^{\varepsilon}_{\Gamma}$ (which is non-empty by assumption) and note that $\theta \in H^1(\Omega)$ and $\mathbf{y} \in H^2(\Omega)$. 

We first prove that $E(\mathbf{y}, \theta) < +\infty$. The Cauchy-Schwarz inequality and the trace theorem provide an upper bound on $E_{\text{ext}}(\mathbf{y})$ in (\ref{eq:Eext}) of the form 
\begin{equation}
\begin{aligned}\label{eq:bcEsts}
E_{\text{ext}}(\mathbf{y}) &\leq \| \mathbf{y} \|_{L^2(\Omega)} \| \overline{\mathbf{b}}\|_{L^{2}(\Omega)}  + \| T \mathbf{y}\|_{L^2(\Gamma_{\text{t}})}  \| \overline{\mathbf{t}} \|_{L^2(\Gamma_{\text{t}})}  +  \| T \nabla \mathbf{y}\|_{L^2(\Gamma_{\text{m}})}  \| \overline{\mathbf{m}} \|_{L^2(\Gamma_{\text{m}})}  \\
&\leq  C_0 \| \mathbf{y}\|_{H^2(\Omega)} \left( \| \overline{\mathbf{t}} \|_{L^2(\Gamma_{\text{t}})} + \| \overline{\mathbf{m}} \|_{L^2(\Gamma_{\text{m}})} + \| \overline{\mathbf{b}}\|_{L^{2}(\Omega)} \right) 
\end{aligned}
\end{equation}
for some $C_0>0$  that only depends on $\Omega$, where $T \colon H^1(\Omega) \rightarrow L^2(\partial \Omega)$ in the first inequality is the trace operator mapping an $H^1$ function on $\Omega$ to its ``trace", an $L^2$ function on the boundary  $\partial \Omega$ (see, for instance, \cite{evans2022partial}). We conclude that $E_{\text{ext}}(\mathbf{y}) < \infty$ from the assumptions on the boundary fields $\overline{\mathbf{b}}$, $\overline{\mathbf{t}}$, and $\overline{\mathbf{m}}$ in the proposition and since $\mathbf{y} \in H^2(\Omega)$. Next, observe that, since $V^{\varepsilon}_{\Gamma} \subset V^{\varepsilon}$,  
\begin{equation}
\begin{aligned}\label{eq:EintUpperBound}
E_{\text{int}}(\mathbf{y}, \theta) = \int_{\Omega} W^{\varepsilon}(\theta, \nabla \theta, \nabla \mathbf{y} , \nabla \nabla \mathbf{y} ) \dif x \leq C_{\varepsilon} \Big(  \| \theta \|_{H^1(\Omega)}^2  + \| \nabla \mathbf{y}\|_{L^4(\Omega)}^4 + \| \nabla \nabla \mathbf{y} \|_{L^2(\Omega)}^2  + |\Omega| \Big) 
\end{aligned}
\end{equation}
by the definition in (\ref{eq:getWepsilon}) and  the upperbound in (\ref{eq:growthCoercivity}) of Proposition \ref{WepsProp}.  Since $\Omega \subset \mathbb{R}^2$ and $\mathbf{y} \in H^2(\Omega)$, the Sobolev embedding theorem implies that $\nabla \mathbf{y} \in L^4(\Omega)$. So it follows from (\ref{eq:EintUpperBound}) that $E_{\text{int}}(\mathbf{y}, \theta) < + \infty$. Thus, $E(\mathbf{y}, \theta) := E_{\text{int}}(\mathbf{y}, \theta) + E_{\text{ext}}(\mathbf{y}, \theta) < + \infty$, as desired.

We now prove the lowerbound in (\ref{eq:keyCoercivity}).  Since $V^{\varepsilon}_{\Gamma} \subset V^{\varepsilon}$, we have by the definition of $W^{\varepsilon}$ in (\ref{eq:getWepsilon}) and the lowerbound in  (\ref{eq:growthCoercivity}) of Proposition \ref{WepsProp} that 
\begin{equation}
\begin{aligned}\label{eq:ineqDirect1}
E(\mathbf{y}, \theta)  \geq c_{\varepsilon} \int_{\Omega}\Big( |\theta|^2 + |\nabla \theta|^2 + |\nabla \mathbf{y}|^2 + |\nabla \nabla \mathbf{y}|^2 \Big) \dif x- c_\varepsilon^{-1} |\Omega| + E_{\text{ext}}(\mathbf{y}).
\end{aligned}
\end{equation} 
Focusing on $E_{\text{ext}}(\mathbf{y})$, applying a Cauchy--Schwarz inequality, a Young's inequality for $\delta > 0$, and a trace inequality leads to 
\begin{equation} 
\begin{aligned}
E_{\text{ext}}(\mathbf{y} ,\theta)  &\geq - \delta \| \mathbf{y} \|_{L^2(\Omega)} \delta^{-1} \| \overline{\mathbf{b}}\|_{L^{2}(\Omega)}  - \delta \| T \mathbf{y}\|_{L^2(\Gamma_{\text{t}})}  \delta^{-1} \| \overline{\mathbf{t}}\|_{L^{2}(\Gamma_{\text{t}})} -  \delta \| T \nabla \mathbf{y}\|_{L^2(\Gamma_{\text{m}})}  \delta^{-1} \| \overline{\mathbf{m}} \|_{L^2(\Gamma_{\text{m}})} \\
&\geq - \frac{\delta^2}{2}  \Big( \| \mathbf{y} \|_{L^2(\Omega)}^2 + \| T \mathbf{y} \|_{L^2(\Gamma_{\text{t}})}^2 + \| T \nabla \mathbf{y}\|^2_{L^2(\Gamma_{\text{t}})}\Big)  - \frac{1}{2\delta^2} \Big( \| \overline{\mathbf{b}}\|^2_{L^{2}(\Omega)}  +  \| \overline{\mathbf{t}}\|^2_{L^{2}(\Gamma_{\text{t}})} + \| \overline{\mathbf{m}} \|^2_{L^2(\Gamma_{\text{m}})} \Big) \\
&\geq - \delta^2 C_{1}  \| \mathbf{y} \|_{H^2(\Omega)}^2 -  \frac{1}{\delta^2} \Big( \| \overline{\mathbf{b}}\|^2_{L^{2}(\Omega)}  +  \| \overline{\mathbf{t}}\|^2_{L^{2}(\Gamma_{\text{t}})} + \| \overline{\mathbf{m}} \|^2_{L^2(\Gamma_{\text{m}})} \Big)
\end{aligned}
\end{equation}
for some constant $C_{1} >0$ that depends only on $\Omega$. 
Furthermore, since $\Gamma_{\text{d}}$ has positive measure, a Poincar\'{e} type estimate (from \cite{morrey2009multiple}, page 82) gives 
\begin{equation}
\begin{aligned}\label{eq:ineqDirect2}
\int_{\Omega} |\mathbf{y}|^2 \dif x\leq C_{2} \Big( \int_{\Omega} |\nabla \mathbf{y}|^2 + \int_{\Gamma_{\text{d}}}|\overline{\mathbf{y}}|^2 ds  \Big) 
\end{aligned}
\end{equation}
for some constant $C_{2} >0$ that depends only on $\Omega$, where $\|\overline{\mathbf{y}} \|_{L^2(\Gamma_{\text{d}})}^2 < \infty$ since $\overline{\mathbf{y}}  \in H^{1/2}(\Gamma_\text{d}, \mathbb{R}^3)$.
Combining the inequalities in (\ref{eq:ineqDirect1}-\ref{eq:ineqDirect2}) gives 
\begin{equation}
\begin{aligned}\label{eq:almostDoneIneq}
E(\mathbf{y} ,\theta)  &\geq  c_\varepsilon \int_{\Omega}\Big( |\mathbf{\theta}|^2 +  |\nabla \theta|^2 + \lambda |\nabla \mathbf{y}|^2 + |\nabla \nabla \mathbf{y}|^2 \Big) \dif x+ c_\varepsilon (1-\lambda) C_{2}^{-1}  \int_{\Omega} | \mathbf{y}|^2 \dif x   -  c_\varepsilon^{-1}  |\Omega| \\
& \qquad   - \delta^2 C_{1}  \| \mathbf{y} \|_{H^2(\Omega)}^2 -  \frac{1}{\delta^2} \Big( \| \overline{\mathbf{b}}\|^2_{L^{2}(\Omega)}  +  \| \overline{\mathbf{t}}\|^2_{L^{2}(\Gamma_{\text{t}})} + \| \overline{\mathbf{m}} \|^2_{L^2(\Gamma_{\text{m}})} \Big) -  c_\varepsilon (1-\lambda) \int_{\Gamma_{\text{d}}} |\overline{\mathbf{y}}|^2 \dif s 
\end{aligned}
\end{equation}
for any choice of  $\lambda \in (0,1)$ and $\delta >0$. By choosing such $\lambda$ and $\delta$ appropriately, we  deduce from (\ref{eq:almostDoneIneq}) that there is a constant $c_{\varepsilon}^{\star} >0$  that depends only on $c_{\varepsilon}, C_1, C_2,$ and $\Omega$ (so $c_{\varepsilon}$ and $\Omega$) such that (\ref{eq:keyCoercivity}) holds. 
\end{proof}

\noindent \textbf{Step 4. Proof of Theorem \ref{ExistenceTheorem}.} As the assumptions of the theorem are the same as that of Proposition \ref{EnergyProp}, this proposition and Remark \ref{EnergyRemark} establish that $E^{\star} := \inf_{(\mathbf{y}, \theta) \in V_{\Gamma}^{\varepsilon}}  E(\mathbf{y}, \theta)$ is finite. Hence, there exists a minimizing sequence $\{ (\mathbf{y}_j, \theta_j) \} \subset V^{\varepsilon}_{\Gamma}$ such that $E(\mathbf{y}_j, \theta_j) \leq E^{\star} + \frac{1}{j}$ for all positive integers $j$. Moreover, the lowerbound in (\ref{eq:keyCoercivity}) of Proposition \ref{EnergyProp} furnishes the estimates on this sequence 
\begin{equation}
\begin{aligned}
 \| \theta_j \|^2_{H^1(\Omega)} + \| \mathbf{y}_j \|^2_{H^2(\Omega)} \leq  \frac{1}{c_{\varepsilon}^{\star}} (E^{\star} + 1 ) +   \frac{1}{(c_{\varepsilon}^{\star})^2} \Big(  \| \overline{\mathbf{b}}\|^2_{L^{2}(\Omega)}  +  \| \overline{\mathbf{t}}\|^2_{L^{2}(\Gamma_{\text{t}})} + \| \overline{\mathbf{m}} \|^2_{L^2(\Gamma_{\text{m}})}  + \| \overline{\mathbf{y}} \|^2_{L^2(\Gamma_{\text{d}})} + 1 \Big)
\end{aligned}
\end{equation}
for a $c_{\varepsilon}^{\star} >0$ independent of $j$. As the bound on the right is finite (by the assumptions of the theorem) and independent of $j$,  the sequence  $\{ (\mathbf{y}_j, \theta_{j})\}$ is uniformly bounded in the $H^2 \times H^1$ norm. It follows that  there is a subsequence $\{ (\mathbf{y}_{j_k}, \theta_{j_k})\}$ that converges weakly in $H^2\times H^1$, i.e.,  
\begin{equation}
\begin{aligned}\label{eq:weakConvergence}
\mathbf{y}_{j_k} \rightharpoonup \mathbf{y} \text{ in } H^2(\Omega, \mathbb{R}^3) \quad \text{ and } \quad \theta_{j_k} \rightharpoonup \theta \text{ in } H^1(\Omega, \mathbb{R}).
\end{aligned}
\end{equation}

We now show that $(\mathbf{y}, \theta)$ belongs to $V_{\Gamma}^{\varepsilon}$, first by verifying that these fields satisfy the $\varepsilon$-inequalities contained in this set and then by verifying the boundary conditions. By Rellich's theorem, there is subsequence (not relabeled) that satisfies $\nabla \mathbf{y}_{j_k} \rightarrow \nabla \mathbf{y}$ in $L^2(\Omega)$ and $\theta_{j_k} \rightarrow \theta$ in $L^2(\Omega)$.  By this strong $L^2$-convergence, there is an even further subsequence (not relabeled) such that $\nabla \mathbf{y}_{j_k}(\mathbf{x}) \rightarrow \nabla \mathbf{y}(\mathbf{x})$ and $\theta_{j_k}(\mathbf{x}) \rightarrow \theta(\mathbf{x})$ for a.e.\;$\mathbf{x} \in \Omega$. Since $\theta_{j_k}(\mathbf{x}) \geq \theta^{-} + \varepsilon$ for a.e. $\mathbf{x} \in \Omega$ for all $j_k$, 
\begin{equation}
\begin{aligned}
\theta(\mathbf{x}) = \lim_{k \rightarrow \infty} \theta_{j_k}(\mathbf{x}) \geq \theta^{-} + \varepsilon 
\end{aligned}
\end{equation}
for a.e.\;$\mathbf{x} \in \Omega$. Arguing similarly, we conclude that $\theta(\mathbf{x}) \leq \theta^{+} - \varepsilon$ for a.e.\;$\mathbf{x} \in \Omega$. Finally, since the function $f(\mathbf{F}) := |\mathbf{F}\mathbf{e}_1 \times \mathbf{F} \mathbf{e}_2|$ is continuous  and since $f(\nabla \mathbf{y}_{j_k}(\mathbf{x}) ) = |\partial_1  \mathbf{y}_{j_k}(\mathbf{x}) \times \partial_2  \mathbf{y}_{j_k}(\mathbf{x})| \geq \varepsilon$ for a.e.\;$\mathbf{x} \in \Omega$ for all $j_k$, we conclude that 
\begin{equation}
\begin{aligned}
|\partial_1 \mathbf{y}(\mathbf{x}) \times \partial_2 \mathbf{y}(\mathbf{x})| = f\Big( \lim_{k \rightarrow \infty} \nabla \mathbf{y}_{j_k}(\mathbf{x})\Big) = \lim_{k \rightarrow \infty} f(\nabla \mathbf{y}_{j_k}(\mathbf{x}))  \geq \varepsilon
\end{aligned}
\end{equation}
for a.e.\;$\mathbf{x} \in \Omega$. We now verify the boundary conditions. Since the trace operator $T \colon H^1(\Omega) \rightarrow L^2(\partial \Omega)$ is bounded and linear, it is continuous and it is follows from (\ref{eq:weakConvergence}) that  
\begin{equation}
\begin{aligned}\label{eq:weakTrace}
T \mathbf{y}_{j_k} \rightharpoonup T\mathbf{y}  \text{ in $L^2(\partial \Omega)$} \quad  \text{ and } \quad T\nabla \mathbf{y}_{j_k} \rightharpoonup T \nabla \mathbf{y} \text{ in $L^2(\partial \Omega)$}.
\end{aligned}
\end{equation}
Since $T \mathbf{y}_{j_k} = \overline{\mathbf{y}}$ a.e.\;on $\Gamma_{\text{d}}$ and $T \nabla \mathbf{y}_{j_k} \mathbf{n} = \overline{\mathbf{s}}$ a.e.\;on $\Gamma_{\text{n}}$, we conclude from (\ref{eq:weakTrace}) that $T\mathbf{y} = \overline{\mathbf{y}}$ a.e.\;on $\Gamma_{\text{d}}$ and $T \nabla \mathbf{y}\mathbf{n} = \bar{\mathbf{s}}$ a.e.\;on $\Gamma_{\text{n}}$. Hence, $(\mathbf{y}, \theta) \in V_{\Gamma}^{\varepsilon}$. 

We are now ready to complete the proof. The properties  established by Proposition \ref{WepsProp} place us in a setting where we can directly apply a general result on weakly lower-semicontinuous functionals (Dacorogna \cite{dacorogna2007direct}, Theorem 3.23). In particular, we conclude from (\ref{eq:weakConvergence}) that 
\begin{equation}
\begin{aligned}\label{eq:liminf1}
\liminf_{k \rightarrow \infty} E_{\text{int}}(\mathbf{y}_{j_k}, \theta_{j_k}) &= \liminf_{k \rightarrow \infty} \int_{\Omega} W^{\varepsilon} (\theta_{j_k}, \nabla \theta_{j_k}, \nabla \mathbf{y}_{j_k}, \nabla \nabla \mathbf{y}_{j_k} ) \dif x\\
& \geq \int_{\Omega} W^{\varepsilon} (\theta,  \nabla \theta, \nabla \mathbf{y}, \nabla \nabla \mathbf{y} ) \dif x= E_{\text{int}}(\mathbf{y}, \theta),
\end{aligned}
\end{equation}
where the last equality follows since $(\mathbf{y}, \theta) \in V_{\Gamma}^{\varepsilon}$.
The basic reasoning behind this inequality  is that $W^{\varepsilon}$ is continuous in the lower order terms $(\theta, \nabla \mathbf{y})$ and convex in these terms gradients $ (\nabla \theta,\nabla \nabla \mathbf{y})$.
%Moreover, there is nothing to sweat about concerning the growth and coercivity conditions in (\ref{eq:growthCoercivity}).
Next, we note that  the potential $E_{\text{ext}}(\mathbf{y}_{j_k})$ associated to applied forces converges trivially to $E_{\text{ext}}(\mathbf{y})$ given (\ref{eq:weakConvergence}) and (\ref{eq:weakTrace})  since the terms  $\mathbf{y}_{j_k}$, $T \mathbf{y}_{j_k}$ and $T\nabla \mathbf{y}_{j_k}$ are paired with functions $\overline{\mathbf{b}}$, $\overline{\mathbf{t}}$ and $\overline{\mathbf{m}}$ in their dual space, i.e., 
\begin{equation}
\begin{aligned}\label{eq:liminf2}
\lim_{k \rightarrow \infty} E_{\text{ext}}(\mathbf{y}_{j_k}) = E_{\text{ext}}(\mathbf{y}).  
\end{aligned}
\end{equation}
Recalling that $\{ (\mathbf{y}_{j_k}, \theta_{j_k})\}$ is a minimizing sequence that satisfies  $E(\mathbf{y}_{j_k}, \theta_{j_k}) \leq E^{\star} + \frac{1}{j_k}$ for positive and increasing  $j_k$ that  $\rightarrow \infty$ as $k \rightarrow \infty$, we obtain from (\ref{eq:liminf1}) and (\ref{eq:liminf2}) that 
\begin{equation}
\begin{aligned}
E^{\star} \geq \liminf_{k \rightarrow \infty} E(\mathbf{y}_{j_k}, \theta_{j_k}) = \liminf_{k \rightarrow \infty} \big( E_{\text{int}}(\mathbf{y}_{j_k}, \theta_{j_k}) + E_{\text{ext}}(\mathbf{y}_{j_k})\big) \geq  E_{\text{int}}(\mathbf{y}, \theta)  + E_{\text{ext}}(\mathbf{y}) = E(\mathbf{y}, \theta).
\end{aligned}
\end{equation}
Since $E^{\star} = \inf_{V_{\Gamma}^{\varepsilon}}E$ and  $(\mathbf{y},\theta) \in  V_{\Gamma}^{\varepsilon}$, the inequalities above are actually equalities, i.e., $E^{\star} = E(\mathbf{y}, \theta)$, and thus  $(\mathbf{y}, \theta)$ is a minimizer. This completes the proof of the existence of minimizers.

\section{On the derivation of the governing equations}\label{sec:DeriveStrongForm}
Here, we derive a key identity used to obtain the strong form of the equilibrium equations and natural boundary conditions from weak form, namely, that 
\begin{equation}
\begin{aligned}\label{eq:importForStrongForm}
&\int_{\Omega} \Big\{  \mathbf{P}(\mathbf{y}, \theta) \colon  \nabla \mathbf{w}   +  \big \langle \mathcal{H}(\mathbf{y}, \theta) ,  \nabla \nabla \mathbf{w} \big \rangle  + q(\mathbf{y}, \theta)  \eta  + \mathbf{j}(\theta) \cdot \nabla \eta  \Big\} \dif x \\
&\quad = \int_{\Omega}   \big(\Div \big[ \Div \mathcal{H}(\mathbf{y}, \theta)  - \mathbf{P} (\mathbf{y}, \theta)   \big] \big)\cdot \mathbf{w}  \dif x +  \int_{\partial \Omega}  \big[\mathbf{P} (\mathbf{y}, \theta)  - \nabla  \mathcal{H}(\mathbf{y}, \theta) \colon (\mathbf{I} + \mathbf{n}^{\perp} \otimes \mathbf{n}^{\perp}) \big]  \mathbf{n} \cdot \mathbf{w} \dif s   \\
&\qquad \quad + \int_{\partial \Omega}  \big(\mathcal{H}(\mathbf{y}, \theta)  \colon  ( \mathbf{n} \otimes \mathbf{n}) \big) \cdot (\nabla \mathbf{w}) \mathbf{n} \dif s  + \int_{\Omega} \big(  q(\mathbf{y}, \theta)  - \nabla \cdot \mathbf{j}(\theta) \big) \eta  \dif x +  \int_{\partial \Omega} \big( \mathbf{j} (\theta)\cdot \mathbf{n} \big)  \eta \dif s.
\end{aligned}
\end{equation}
holds  for all sufficiently smooth $\mathbf{y}, \mathbf{w} \colon \Omega \rightarrow \mathbb{R}^3$ and $\theta, \eta \colon \Omega \rightarrow (\theta^{-}, \theta^+)$, where $\mathbf{n}$ is the outward normal to $\partial \Omega$ and $\mathbf{n}^{\perp}$ is the corresponding unit tangent vector. (See  also below (\ref{eq:manipWeakForm}) for any questions about the tensor notation.) 

To begin, use integration by parts and the divergence theorem to the term on the left in (\ref{eq:importForStrongForm}) to obtain  
\begin{equation}
\begin{aligned}\label{eq:importForStrongForm1}
&\int_{\Omega} \Big\{  \mathbf{P}(\mathbf{y}, \theta) \colon  \nabla \mathbf{w}   +  \big \langle \mathcal{H}(\mathbf{y}, \theta) ,  \nabla \nabla \mathbf{w} \big \rangle  + q(\mathbf{y}, \theta)  \eta  + \mathbf{j}(\theta) \cdot \nabla \eta  \Big\} \dif x \\
&\qquad = \int_{\Omega}   \big(\Div \big[ \Div \mathcal{H}(\mathbf{y}, \theta)  - \mathbf{P} (\mathbf{y}, \theta)   \big] \big)\cdot \mathbf{w}  \dif x +  \int_{\partial \Omega}  \big[\mathbf{P} (\mathbf{y}, \theta)  - \Div  \mathcal{H}(\mathbf{y}, \theta) \big]  \mathbf{n} \cdot \mathbf{w} \dif s   \\
&\qquad \quad + \int_{\partial \Omega}  \big(\mathcal{H}(\mathbf{y}, \theta)  \cdot \mathbf{n} \big) \colon \nabla \mathbf{w} \dif s  + \int_{\Omega} \big(  q(\mathbf{y}, \theta)  - \nabla \cdot \mathbf{j}(\theta) \big) \eta  \dif x +  \int_{\partial \Omega} \big( \mathbf{j} (\theta)\cdot \mathbf{n} \big)  \eta \dif s
\end{aligned}
\end{equation}
where $[\mathcal{H} \cdot \mathbf{n}]_{i \alpha} = [\mathcal{H}]_{i\alpha \beta} n_\beta$. Next, write $\nabla \mathbf{w}  = \partial_{\mathbf{n}} \mathbf{w} \otimes \mathbf{n} + \partial_{\mathbf{n}^{\perp}} \mathbf{w} \otimes \mathbf{n}^{\perp}$ and observe that 
\begin{equation}
\begin{aligned}\label{eq:NablaWTerm}
\int_{\partial \Omega}  \big(\mathcal{H}(\mathbf{y}, \theta)  \cdot \mathbf{n} \big) \colon \nabla \mathbf{w} \dif s  = \int_{\partial \Omega} \Big\{   \big(\mathcal{H}(\mathbf{y}, \theta) \colon (\mathbf{n} \otimes \mathbf{n}) \big) \cdot \partial_{\mathbf{n}} \mathbf{w}   + \big(\mathcal{H}(\mathbf{y}, \theta) \colon (\mathbf{n} \otimes \mathbf{n}^{\perp}) \big) \cdot \partial_{\mathbf{n}^{\perp}} \mathbf{w} \Big\} \dif s.
\end{aligned}
\end{equation}
Another round of integration by parts on the last term in (\ref{eq:NablaWTerm}) furnishes 
\begin{equation}
\begin{aligned}
\int_{\partial \Omega} \big(\mathcal{H}(\mathbf{y}, \theta) \colon (\mathbf{n} \otimes \mathbf{n}^{\perp}) \big) \cdot \partial_{\mathbf{n}^{\perp}} \mathbf{w} \dif s =  -  \int_{\partial \Omega} \partial_{\mathbf{n}^{\perp}}  \big(\mathcal{H}(\mathbf{y}, \theta) \colon (\mathbf{n} \otimes \mathbf{n}^{\perp}) \big) \cdot  \mathbf{w}  \dif s.
\end{aligned}
\end{equation}
In particular, $ \int_{\partial \Omega} \partial_{\mathbf{n}^{\perp}} \big[  \big(\mathcal{H}(\mathbf{y}, \theta) \colon (\mathbf{n} \otimes \mathbf{n}^{\perp}) \big) \cdot  \mathbf{w} \big] \dif s = 0$ because the tangential derivative of a field integrated on a closed boundary always vanishes.  To finish, note that  $\partial_{\mathbf{n}^{\perp}} \mathbf{n} = \kappa \mathbf{n}^{\perp}$, where $\kappa$ is  the curvature of $\partial \Omega$.  Likewise,  $\partial_{\mathbf{n}^{\perp}} \mathbf{n}^{\perp} = - \kappa \mathbf{n}$. Thus, 
\begin{equation}
\begin{aligned}\label{eq:getWeirdNaturalBC}
\partial_{\mathbf{n}^{\perp}} \big[ \mathcal{H}(\mathbf{y}, \theta) \colon (\mathbf{n} \otimes \mathbf{n}^{\perp} )\big]  &= \partial_{\mathbf{n}^{\perp}} \big[ \mathcal{H}(\mathbf{y}, \theta) \big] \colon (\mathbf{n} \otimes \mathbf{n}^{\perp} ) + \kappa  \mathcal{H}(\mathbf{y}, \theta) \colon  ( \mathbf{n}^{\perp} \otimes \mathbf{n}^{\perp} - \mathbf{n} \otimes \mathbf{n})  \\
&=\big[\nabla  \mathcal{H}(\mathbf{y}, \theta)   \colon ( \mathbf{n}^{\perp} \otimes \mathbf{n}^{\perp} ) \big] \mathbf{n},
\end{aligned}
\end{equation}
where the term proportional to $\kappa$ in the first line vanishes since $\mathcal{H}$ is symmetric ($[\mathcal{H}]_{i\alpha \beta} = [\mathcal{H}]_{i\beta \alpha}$) and $( \mathbf{n}^{\perp} \otimes \mathbf{n}^{\perp} - \mathbf{n} \otimes \mathbf{n})$ is skew symmetric. We complete the derivation by substituting the results of (\ref{eq:NablaWTerm}-\ref{eq:getWeirdNaturalBC}) into (\ref{eq:importForStrongForm1}) and using that $\Div \mathcal{H} = \nabla \mathcal{H} \colon \mathbf{I}$ and $\partial_{\mathbf{n}} \mathbf{w} = (\nabla \mathbf{w})\mathbf{n}$.

\section{Derivation of the consistency term}
\label{sec:appendix}
This appendix derives the consistency terms introduced in the interior penalty method in Section \ref{sec:WeakFormFEM}. The basic idea is that, if our numerical method happens upon an exact solution to the governing equations in (\ref{eq:governingEquations}), then it should identify these fields as solutions to the discrete numerical problem as well. This demand leads naturally to the form of $\mathscr{A}_{\text{con}}^h$ in  (\ref{eq:consistency}). We refer to Section \ref{sec:NotationFEM} for any questions about the notation in this derivation. 

 Let $(\mathbf{y}_h, \theta_h) \in V_{\Gamma_{\text{d}}}^h$ and $(\mathbf{w}_h, \eta_h) \in V_{0}^h$. 
Observe using integration by parts that $\mathscr{A}_0^h$ in (\ref{eq:A0h})  satisfies  
\begin{equation}
\begin{aligned}
 \mathscr{A}_0^h( (\mathbf{y}_h, \theta_h);(\mathbf{w}_h, \eta_h)) &= \sum_{T \in \mathcal{T}^h} \bigg\{ \int_T \Div\big[ 
 \Div \mathcal{H}(\mathbf{y}_h, \theta_h)  
 - \mathbf{P}(\mathbf{y}_h, \theta_h) \big] \cdot \mathbf{w}_h \dif x   \\
 &\qquad +  \int_{\partial T}  \big[  \big( \mathbf{P}(\mathbf{y}_h, \theta_h)   - \Div \mathcal{H}(\mathbf{y}_h ,\theta_h) \big) \mathbf{n} \cdot \mathbf{w}_h   + \big(\mathcal{H} (\mathbf{y}_h, \theta_h) \cdot \mathbf{n} \big)  \colon \nabla \mathbf{w}_h \big] \dif s \\
 &\qquad + \int_T \big( q(\mathbf{y}_h, \theta_h) - \nabla \cdot \mathbf{j}(\theta_h) \big) \eta_h  \dif x + \int_{\partial T} \big( \mathbf{j}(\theta_h) \cdot \mathbf{n}  \big) \eta_h \dif s \bigg\}   .  
\end{aligned}
\end{equation}
This integral can be reorganized by replacing the boundary integrals over each $\partial T$ of the triangulation  $\mathcal{T}^h$  with an identical set of integrals along the edges $e \in \mathcal{E}^h$, leading to 
\begin{equation}
\begin{aligned}\label{eq:grungyCalc}
 \mathscr{A}_0^h( (\mathbf{y}_h, \theta_h);(\mathbf{w}_h, \eta_h)) &=  \sum_{T \in \mathcal{T}^h} \int_T  \Big\{  \Div\big[ 
 \Div \mathcal{H}(\mathbf{y}_h, \theta_h)  
 - \mathbf{P}(\mathbf{y}_h, \theta_h) \big] \cdot \mathbf{w}_h + \big( q(\mathbf{y}_h, \theta_h) - \nabla \cdot \mathbf{j}(\theta_h) \big) \eta_h  \Big\} \dif x \\
 &\qquad + \sum_{e \in \mathcal{E}_{\text{int}}^h}  \int_e  \Big\{  \llbracket  \mathbf{P}(\mathbf{y}_h, \theta_h)   - \Div \mathcal{H}(\mathbf{y}_h ,\theta_h) \rrbracket \mathbf{n}_e  \cdot \mathbf{w}_h   + \big(\ldblbrace\mathcal{H} (\mathbf{y}_h, \theta_h)\rdblbrace \cdot \mathbf{n}_e \big)  \colon  \llbracket \nabla \mathbf{w}_h \rrbracket \Big\} \dif s \\
 &\qquad +  \int_{\Gamma_\text{t}} \Big\{ \big(   \mathbf{P}(\mathbf{y}_h, \theta_h)   - \Div \mathcal{H}(\mathbf{y}_h ,\theta_h) \big) \mathbf{n}  \cdot \mathbf{w}_h \Big\} \dif s    +  \int_{\partial \Omega}  \big( \mathcal{H}(\mathbf{y}_h, \theta_h) \cdot \mathbf{n} \big) \colon \nabla \mathbf{w}_h  \dif s \\
 &\qquad +  \sum_{e \in \mathcal{E}_{\text{int}}^h}  \int_e  \big( \llbracket \mathbf{j}(\theta_h) \rrbracket \cdot \mathbf{n}_e  \big) \eta_h  \dif s  + \int_{\partial \Omega} \big(\mathbf{j} (\theta_h) \cdot \mathbf{n} \big)\eta_h \dif s.   
\end{aligned}
\end{equation}
The jump conditions on the interior edges of the mesh are due to the lack of $H^2$-regularity in the deformation $\mathbf{y}_h$ and test field $\mathbf{w}_h$ in our discrete formulation.  However, an exact solution $(\mathbf{y}, \theta) \in V_\Gamma$ to the strong form of the governing equation in (\ref{eq:governingEquations}) satisfies 
\begin{equation}
\begin{aligned}
&\llbracket  \mathbf{P}(\mathbf{y}, \theta)   - \Div \mathcal{H}(\mathbf{y} ,\theta) \rrbracket \mathbf{n}_e = \mathbf{0}   \quad \text{ and } \quad  \llbracket \mathbf{j}(\theta) \rrbracket \cdot \mathbf{n}_e = 0 
\end{aligned}
\end{equation}
on each  $e \in \mathcal{E}_{\text{int}}^h$. Thus, 
replacing  $(\mathbf{y}_h, \theta_h)$ in (\ref{eq:grungyCalc}) with an exact solution and subtracting off the boundary terms $\mathscr{B}_0(\mathbf{w}_h)$ gives 
\begin{equation}
\begin{aligned}
 \mathscr{A}_0^h( (\mathbf{y}, \theta);(\mathbf{w}_h, \eta_h))  - \mathscr{B}_0(\mathbf{w}_h) =\sum_{e \in \mathcal{E}_{\text{int}}^h}  \int_e \big(\ldblbrace\mathcal{H} (\mathbf{y}, \theta)\rdblbrace \cdot \mathbf{n}_e \big)  \colon  \llbracket \nabla \mathbf{w}_h \rrbracket  \dif s + \int_{\Gamma_{\text{n}}}  \big( \mathcal{H}(\mathbf{y}, \theta) \cdot \mathbf{n} \big) \colon \nabla \mathbf{w}_h  \dif s .
\end{aligned}
\end{equation}
The right-hand side here is actually $-\mathscr{A}_{\text{con}}^h( (\mathbf{y}, \theta);(\mathbf{w}_h, \eta_h))$ for $\mathscr{A}_{\text{con}}^h$ defined  in (\ref{eq:consistency}). In adding this term to left side of our discrete formulation (along with $\mathscr{A}_{\text{sta}}^h$ and $\mathscr{A}_{\text{bnd}}^h$), it follows that 
\begin{equation}
\begin{aligned}
 \mathscr{A}^h( (\mathbf{y}, \theta);(\mathbf{w}_h, \eta_h))  =  \mathscr{B}_0(\mathbf{w}_h) \quad \text{ for all } (\mathbf{w}_h, \eta_h)  \in V^{h}_{0}
\end{aligned}
\end{equation}
for any $(\mathbf{y}, \theta) \in V_{\Gamma}$ that solves (\ref{eq:governingEquations}), since $\mathscr{A}_{\text{sta}}^h( (\mathbf{y}, \theta);(\mathbf{w}_h, \eta_h))$ and $\mathscr{A}_{\text{bnd}}^h((\mathbf{y},\theta), (\mathbf{w}_h, \eta_h))$ vanish trivially on an exact solution. This is the desired consistency.

\section{Initialization of the boundary conditions}
\label{sec:InitialGuess}
For investigating large deformations and angle fields in our numerical framework, the Newton solver is improved when we initialize it with a pair of fields $(\mathbf{y}^0_h, \theta^0_h)$ that \textit{a priori} satisfy the boundary conditions. We therefore perform two preliminary (and much simpler) numerical calculations to initialized such fields. 

We start by solving the minimization problem  $\min \{ \int_{\Omega}\frac{1}{2} |\nabla \nabla \mathbf{y} |^2 \dif x + E_{\text{ext}}(\mathbf{y})  \colon \mathbf{y} = \overline{\mathbf{y}} \text{ on } \Gamma_{\text{d}}\; (\nabla \mathbf{y}) \mathbf{n} = \overline{\mathbf{s}} \text{ on } \Gamma_{\text{n}} \}$ using $C^0$ finite elements and the interior penalty method.  Referencing again the notation in Section \ref{sec:NotationFEM} and mimicking the exposition in Section \ref{sec:WeakFormFEM} and   Appendix \ref{sec:appendix}, this problem is formulated in the weak form as: Find a $\mathbf{y}^0_h \in \widetilde{V}_{\Gamma_{\text{d}}}^h$ such that 
\begin{equation}
\begin{aligned}\label{eq:weakFormInitialize}
\mathscr{A}_{\text{pre}}^h(\mathbf{y}^0_h; \mathbf{w}_h) = \mathscr{B}_{\text{pre}}^h ( \mathbf{w}_h) \quad \text{ for all } \mathbf{w}_h \in \widetilde{V}^h_0,
\end{aligned}
\end{equation}
where $\widetilde{V}^h_{\Gamma_{\text{d}}}$ and $\widetilde{V}^h_0$ are the approximation spaces given by restricting $V_{\Gamma_{\text{d}}}^h$ and $V^h_0$, respectively, to the deformation variable (i.e., neglecting the angle variable). In this setting, $\mathscr{A}_{\text{pre}}^h$ is a bilinear form defined by 
\begin{equation}
\begin{aligned}
\mathscr{A}_{\text{pre}}^h(\mathbf{y}^0_h; \mathbf{w}_h) &:=    \sum_{T \in \mathcal{T}^h} \int_T \langle \nabla_h \nabla \mathbf{y}_h^0 , \nabla_h \nabla \mathbf{w}_h \rangle \dif x  - \sum_{e \in \mathcal{E}_{\text{int}}^h} \int_e 
 \big( \ldblbrace \nabla_h \nabla \mathbf{y}^0_h \rdblbrace \cdot  \mathbf{n}_e \big) \colon \llbracket \nabla \mathbf{w}_h \rrbracket 
 \dif s   \\
 &\qquad - \sum_{e \in \mathcal{E}_{\text{int}}^h} \int_e 
 \big( \ldblbrace \nabla_h \nabla \mathbf{w}_h \rdblbrace \cdot  \mathbf{n}_e \big) \colon \llbracket \nabla \mathbf{y}_h^0 \rrbracket 
 \dif s  + \sum_{e \in \mathcal{E}_{\text{int}}^h} \frac{\alpha}{|e|}  \int_{e}  \llbracket \nabla \mathbf{y}^0_h \rrbracket :\llbracket \nabla \mathbf{w}_h \rrbracket \dif s  \\
 &\qquad - \int_{\Gamma_{\text{n}}} \big(\nabla_h \nabla \mathbf{y}^0_h \colon ( \mathbf{n} \otimes \mathbf{n} ) \big) \cdot \nabla \mathbf{w}_h \mathbf{n} \dif s  - \int_{\Gamma_{\text{n}}} \big(\nabla_h \nabla \mathbf{w}_h \colon (\mathbf{n} \otimes \mathbf{n})\big)  \cdot (\nabla \mathbf{y}^0_h)\mathbf{n} \dif s\\
 &\qquad + \sum_{e \in \mathcal{E}^h_{\Gamma_{\text{n}}}}\frac{\alpha}{|e|} \int_{e}  (\nabla \mathbf{y}_h^0) \mathbf{n} \cdot (\nabla \mathbf{w}_h) \mathbf{n} \dif s    
\end{aligned}
\end{equation}
The linear form $\mathscr{B}_{\text{pre}}^h$ on the right side of (\ref{eq:weakFormInitialize}) is defined by 
\begin{equation}
    \begin{aligned}
      \mathscr{B}^h_{\text{pre}}( \mathbf{w}_h) :=   \mathscr{B}_0(\mathbf{w}_h) - \int_{\Gamma_{\text{n}}} \big(\nabla_h \nabla \mathbf{w}_h \colon (\mathbf{n} \otimes \mathbf{n})\big) \cdot \bar{\mathbf{s}} \dif s  + \sum_{e \in \mathcal{E}^h_{\Gamma_{\text{n}}}}\frac{\alpha}{|e|} \int_{e}  \overline{\mathbf{s}} \cdot (\nabla \mathbf{w}_h) \mathbf{n} \dif s.
      \end{aligned}
\end{equation}
In both $\mathscr{A}_{\text{pre}}^h$ and $\mathscr{B}^h_{\text{pre}}$, we include symmetry terms that mirror  $(\mathbf{y}^0_h; \mathbf{w}_h) \equiv (\mathbf{w}_h; \mathbf{y}^0_h)$   in the argument for $\mathscr{A}_{\text{pre}}^h$. These terms are standard (see \cite{di2011mathematical,engel2002continuous}) and easy to implement when using the interior penalty method to solve a linear  PDE  because the weak form of the internal energy is a symmetric bilinear form.  Their inclusion, while not necessary, can improve the effectiveness and efficiency of the  numerical solver. In the typical initialization step to obtain a $\mathbf{y}_h^0$, the value of the penalty parameter is chosen as $\alpha = 20$ in an effort to strongly impose the slope boundary conditions $\nabla \mathbf{y}_h^0 \mathbf{n} = \overline{\mathbf{s}}$; it is subsequently reduced to $\alpha = 0.1$ when solving (\ref{eq:discrete problem}).

Given a $\mathbf{y}_h^0$ solving the weak form above, we produce an initial guess $\theta^0_h$ for the actuation field by approximating the metric constraint. Specifically we compute $\theta^0_h$ as 
\begin{equation}
\begin{aligned}\label{eq:getTheta0h}
     \theta^0_h := \argmin_{\theta_h \in W^h} \Big\{  \int_{\Omega} \left\lvert (\nabla \mathbf{y}^0_h)^T \nabla \mathbf{y}^0_h - \mathbf{A}^T(\theta_h) \mathbf{A}(\theta_h) \right\rvert^2 \dif x + d_2 \int_{\Omega} \left\lvert \nabla \theta_h \right\rvert^2 \dif x \Big\}
\end{aligned}
\end{equation}
in the space  $W^h := \{ \theta_h \in C^0(\Omega, \mathbb{R}) \colon \theta_h\vert_{T} \in P^1(T)\;\;  \forall \;\; T \in \mathcal{T}^h \}$, where the second term with $d_2 > 0$ is the regularizing term in (\ref{eq:regTerm}) associated to the gradient of actuation.
 As (\ref{eq:getTheta0h}) has a standard finite element implementation, we do not detail it here for the sake of brevity.

\section{Computation of consistent reaction forces}
\label{sec:forces}
Here we explain how to compute reaction forces in the simulations. To start, consider a model problem where $(\mathbf{y}, \theta)$ solve the governing equations in (\ref{eq:governingEquations}) so that the weak form  $\mathscr{A}_0((\mathbf{y}, \theta); (\mathbf{w}, \eta)) = \mathscr{B}_0(\mathbf{w})$ holds for all $  (\mathbf{w}, \theta) \in  V_0$  (see the beginning of Section \ref{ssec:Weakform} for the definition of $V_0$  and Section \ref{sec:WeakFormFEM} for the definitions of $\mathscr{A}_0$ and $\mathscr{B}_0$). The reaction forces emerge when we replace the test functions $\mathbf{w}$ in the weak form with a generic test function $\mathbf{w}_{\text{f}} \in H^2(\Omega, \mathbb{R}^3)$ that does not satisfy the vanishing Dirichlet boundary conditions contained in $V_0$. Indeed, using the identity in (\ref{eq:importForStrongForm}) and that  $(\mathbf{y}, \theta)$ solve (\ref{eq:governingEquations}), this replacement yields
 \begin{equation}
 \begin{aligned}\label{eq:forceTrick1}
   \mathscr{A}_0((\mathbf{y}, \theta); (\mathbf{w}_\text{f}, 0)) &=  \int_{\Omega} \overline{\mathbf{b}} \cdot \mathbf{w}_{\text{f}}  \dif x + \int_{\Gamma_{\text{t}}} \overline{\mathbf{t}} \cdot \mathbf{w}_{\text{f}}  \dif s +   \int_{\Gamma_{\text{d}}}  \big[  \big( \mathbf{P}(\mathbf{y}, \theta)   - \nabla \mathcal{H}(\mathbf{y} ,\theta) \colon (\mathbf{I} + \mathbf{n}^{\perp} \otimes \mathbf{n}^{\perp}) \big) \mathbf{n} \big] \cdot \mathbf{w}_{\text{f}} \dif s   \\
   &\qquad  + \int_{\Gamma_{\text{m}}} \overline{\mathbf{m}} \cdot (\nabla \mathbf{w}_{\text{f}} ) \mathbf{n}   \dif s  +   \int_{\Gamma_{\text{n}}} \big(\mathcal{H} (\mathbf{y}, \theta) \colon (\mathbf{n} \otimes \mathbf{n}) \big)  \cdot (\nabla \mathbf{w}_\text{f}) \mathbf{n} \dif s   
 \end{aligned}
\end{equation}
The definition of $\mathscr{B}_0(\mathbf{w})$ then allows us to conclude that 
\begin{equation}
\begin{aligned}
   \mathscr{A}_0((\mathbf{y}, \theta); (\mathbf{w}_\text{f}, 0)) - \mathscr{B}_0(\mathbf{w}_{\text{f}}) &= \int_{\Gamma_{\text{d}}}  \big[  \big( \mathbf{P}(\mathbf{y}, \theta)   - \nabla \mathcal{H}(\mathbf{y} ,\theta) \colon (\mathbf{I} + \mathbf{n}^{\perp} \otimes \mathbf{n}^{\perp}) \big) \mathbf{n} \big] \cdot \mathbf{w}_{\text{f}} \dif s  \\
   &\qquad   +  \int_{\Gamma_{\text{n}}} \big(\mathcal{H} (\mathbf{y}, \theta) \colon (\mathbf{n} \otimes \mathbf{n}) \big) \cdot (\nabla \mathbf{w}_\text{f}) \mathbf{n} \dif s.
\end{aligned}
\end{equation}

Let us now specialize to the pinching boundary conditions of Section \ref{ssec:PinchingSimulations}, where  this method is employed. The Dirichlet boundary set is $\Gamma_{\text{d}} = \Gamma_{L} \cup \Gamma_{R}$ and $\Gamma_{\text{n}} = \emptyset$. Clearly, the force on $\Gamma_{L}$ in the $\mathbf{e}_1$-direction is balanced by that on $\Gamma_R$. So we compute the boundary force in the plots by choosing a test function such that $\mathbf{w}_{\text{f}}  = \mathbf{0}$ on $\Gamma_{L}$ and $\mathbf{w}_{\text{f}}=\mathbf{e}_1$ on $\Gamma_R$. It follows that 
\begin{equation}
\begin{aligned}
   \mathscr{A}_0((\mathbf{y}, \theta); (\mathbf{w}_\text{f}, 0)) - \mathscr{B}_0(\mathbf{w}_{\text{f}}) =  \int_{\Gamma_{\text{R}}}  \big[  \big( \mathbf{P}(\mathbf{y}, \theta)   - \nabla \mathcal{H}(\mathbf{y} ,\theta) \colon (\mathbf{I} + \mathbf{n}^{\perp} \otimes \mathbf{n}^{\perp}) \big) \mathbf{n} \big] \cdot \mathbf{e}_1 \dif s, 
\end{aligned}
\end{equation}
which is the desired total generalized force on the right boundary. 

The numerical formulation possess a modified weak formulation, but the same strategy applies. Once we have a solution $(\mathbf{y}_h, \theta_h)$ in the sense that $\mathscr{A}^h((\mathbf{y}_h, \theta_h); (\mathbf{w}_h, \eta_h)) = \mathscr{B}_0(\mathbf{w}_h)$ for all $(\mathbf{w}_h, \theta_h) \in V_0^h$ (see Section \ref{sec:WeakFormFEM}), we then test this solution with a  $(\mathbf{w}_{h,\text{f}}, 0)$ to reveal the reaction forces. For the pinching simulations,  we choose a test function such that $\mathbf{w}_{h,\text{f}}  = \mathbf{0}$ on $\Gamma_{L}$ and $\mathbf{w}_{h,\text{f}}=\mathbf{e}_1$ on $\Gamma_R$, analogous to the prior choice. The normalized compressive force plotted for these simulations is thus 
\begin{equation}
\begin{aligned}
f_1 = -\frac{1}{d_3} \Big( \mathscr{A}^h((\mathbf{y}_h, \theta_h); (\mathbf{w}_{h,\text{f}}, 0)) - \mathscr{B}_0(\mathbf{w}_{h,\text{f}}) \Big).
\end{aligned}
\end{equation}

\section{Constructing origami deformations from the effective fields}\label{sec:oriDefs}
In Section \ref{sec:Examples}, we plot several examples of (approximate) origami deformations. Here we  explain how to construct such deformations using only the effective fields $(\mathbf{y}, \theta)$ as input.  Along the way, we make liberal use of the notation in Section \ref{ssec:Design} for the design of a  parallelogram origami and its perfect mechanism.

The general idea is sketched in Fig.\;\ref{Fig:OriConstruct}.  Start by recognizing that a reference parallelogram origami unit cell of characteristic length $\sim \ell$ is defined by nine vertices:
\begin{equation}
\begin{aligned}
&\mathbf{x}_0^{\ell} = \mathbf{0},\\
&\mathbf{x}_k^{\ell} = \ell \mathbf{t}^r_k, \quad k = 1,\ldots, 4, \\
&\mathbf{x}_5^{\ell} =  \ell (\mathbf{t}^r_1 + \mathbf{t}_{2}^r), \quad  \mathbf{x}_6^{\ell} =  \ell (\mathbf{t}^r_2 + \mathbf{t}_{3}^r), \quad \mathbf{x}_7^{\ell} =  \ell (\mathbf{t}^r_3 + \mathbf{t}_{4}^r), \quad \mathbf{x}_8^{\ell} =  \ell (\mathbf{t}^r_4 + \mathbf{t}_{1}^r).
\end{aligned}
\end{equation}
Now fix a 2D domain $\Omega$. 
 An overall parallelogram origami pattern can be built systematically on top of $\Omega$ by first defining the index set on $\Omega$
\begin{equation}
\begin{aligned}
      I^{\ell} =\big\{ (m,n) \in \mathbb{Z}^2 \colon m \tilde{\mathbf{u}}_0 + n \tilde{\mathbf{v}}_0 \in \Omega \big\}    
    \end{aligned}
\end{equation}
and then he collection of vertices 
\begin{equation}
\begin{aligned}
      \Omega^{\ell} := \{ \mathbf{x}_k^{\ell} + m\ell \mathbf{u}_0 + n \ell \mathbf{v}_0 \colon k =0,\ldots, 9  \text{ and } (m,n) \in I^{\ell} \} \subset \mathbb{R}^3,    
      \end{aligned}
\end{equation}
which is the  desired parallelogram origami pattern. A stylized version of this construction  is illustrated in Fig\;\ref{Fig:OriConstruct}(a). Unlike the figure, $\Omega^{\ell}$ is a 3D collection of points overlaid on a 2D domain $\Omega$.

\begin{figure}[t!]
\centering
\includegraphics[width=0.8\textwidth]{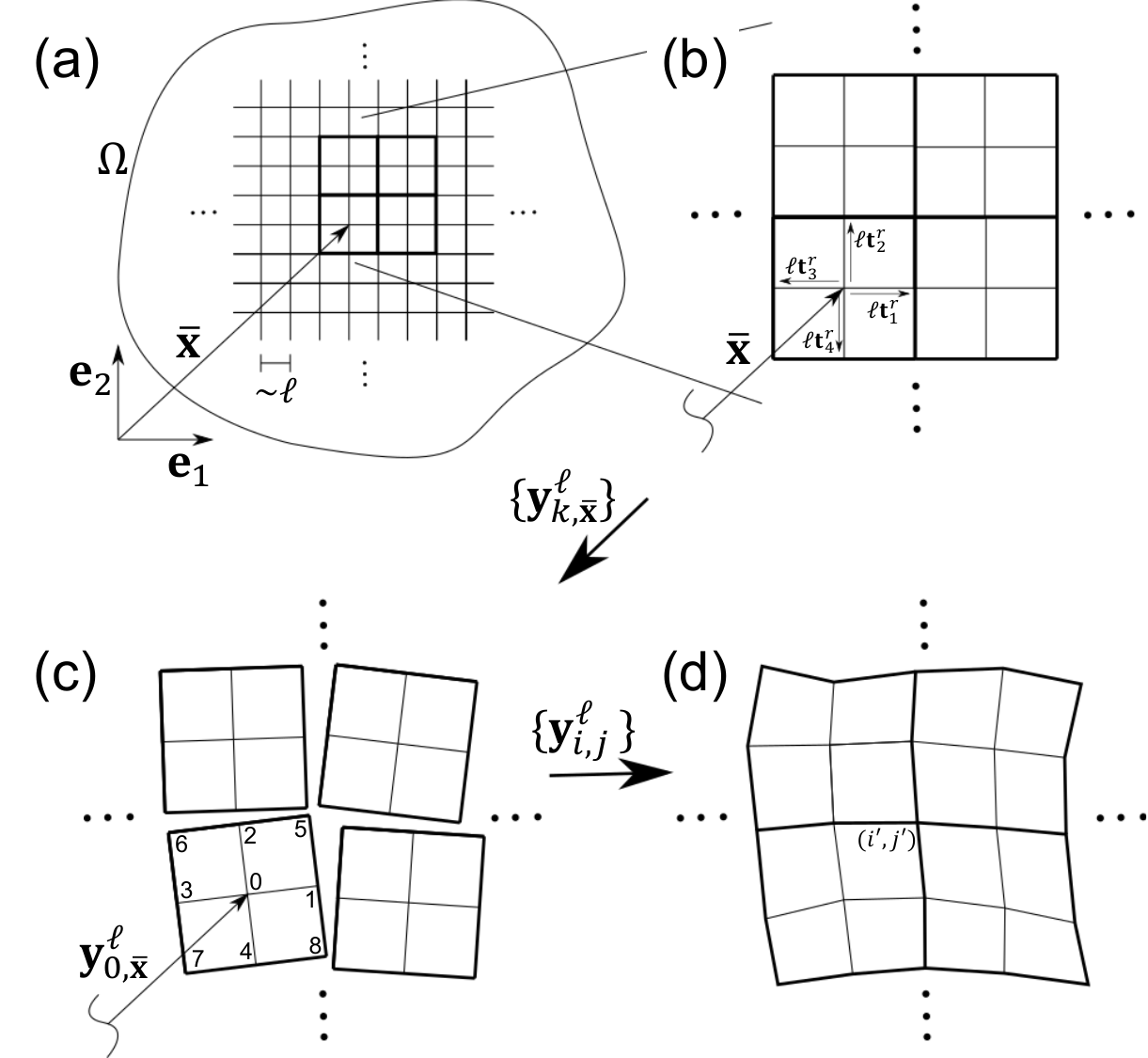} 
\caption{Stylized sketch and notation for constructing origami deformations. (a) A sketch of an origami pattern overlaid onto a domain $\Omega$. Each square of characteristic length $\sim \ell$ represents a panel; four squares represent a cell; $\bar{\mathbf{x}}$ labels the unit cells.  (b) Sketch illustrating $\bar{\mathbf{x}}$-cell and its neighbors to the right and above.  (c) Cell-based anastz is constructed for each $\bar{\mathbf{x}}$-cell. It generically produces gaps between neighboring cells. (d) The gaps are closed by averaging.   }
\label{Fig:OriConstruct}
\end{figure}

Our goal is to deform the vertices in $\Omega^{\ell}$ as approximate origami, characterized by the effective fields $(\mathbf{y}, \theta)$. For this purpose, we assume that $(\mathbf{y}, \theta) \colon \Omega \rightarrow \mathbb{R}^3 \times (\theta^{-}, \theta^+)$ either  solve the equilibrium equations in (\ref{eq:governingEquations})  (analytically or numerically) or satisfy the purely geometric constraints in (\ref{eq:purelyGeom}). We also assume that $|\partial_1 \mathbf{y} \times \partial_2 \mathbf{y}| >0$ to avoid singular surfaces.  Our construction then employs the following  rotation field built from $\mathbf{y}$. Observe that $\mathbf{F} := (\nabla \mathbf{y} , \mathbf{n}(\mathbf{y}))$ for $\mathbf{n}(\mathbf{y})$ in (\ref{eq:secFundDef})  is  a field on $\mathbb{R}^{3\times3}$ with positive determinant. Thus, the Polar decomposition theorem furnishes the stretch tensor field
$\mathbf{U} := \sqrt{\mathbf{F}^T \mathbf{F}}$.  The desired rotation field  on $\Omega$  is then $\mathbf{R} := \mathbf{F} \mathbf{U}^{-1}$.

We build the origami deformation from the fields $\mathbf{y}, \theta, \mathbf{R}$ on $\Omega$ in two steps. The  first step is a cell-based construction, where each unit cell deforms as an origami mechanism. Fix  $\bar{\mathbf{x}} = \bar{m} \tilde{\mathbf{u}}_0 + \bar{n} \tilde{\mathbf{v}}_0$ for $(\bar{m}, \bar{n}) \in I^{\ell}$ as illustrated in Fig.\;\ref{Fig:OriConstruct}(a). The $\bar{\mathbf{x}}$-cell contains nine vertices
\begin{equation}
    \begin{aligned}
      \mathbf{x}_{k,\bar{\mathbf{x}}}^{\ell} = \begin{pmatrix}  \bar{\mathbf{x}} \\ 0 \end{pmatrix} + \mathbf{x}_k^{\ell}, \quad i = 0 ,\ldots, 9.
    \end{aligned}
\end{equation}
We deform the cell  by taking each $\mathbf{x}_{k,\bar{\mathbf{x}}}^{\ell}$ to a $\mathbf{y}_{k,\bar{\mathbf{x}}}^{\ell}$ given by 
\begin{equation}
    \begin{aligned}\label{eq:cellBaased}
    &\mathbf{y}_{0,\bar{\mathbf{x}}}^{\ell} = \mathbf{y}(\bar{\mathbf{x}})  \\
    &\mathbf{y}_{k,\bar{\mathbf{x}}}^{\ell} = \mathbf{y}(\bar{\mathbf{x}}) + \ell \mathbf{R}(\bar{\mathbf{x}}) \mathbf{t}_k(\theta(\bar{\mathbf{x}})), \quad k = 1,\ldots,4,\\
    &\mathbf{y}_{5,\bar{\mathbf{x}}}^{\ell} = \mathbf{y}(\bar{\mathbf{x}}) + \ell\mathbf{R}(\bar{\mathbf{x}})\big[ \mathbf{t}_1(\theta(\bar{\mathbf{x}})) + \mathbf{t}_2(\theta(\bar{\mathbf{x}}))\big], \\
    &\mathbf{y}_{6,\bar{\mathbf{x}}}^{\ell} = \mathbf{y}(\bar{\mathbf{x}}) + \ell\mathbf{R}(\bar{\mathbf{x}})\big[ \mathbf{t}_2(\theta(\bar{\mathbf{x}})) + \mathbf{t}_3(\theta(\bar{\mathbf{x}}))\big], \\
    &\mathbf{y}_{7,\bar{\mathbf{x}}}^{\ell} = \mathbf{y}(\bar{\mathbf{x}}) + \ell\mathbf{R}(\bar{\mathbf{x}})\big[ \mathbf{t}_3(\theta(\bar{\mathbf{x}})) + \mathbf{t}_4(\theta(\bar{\mathbf{x}}))\big],\\
    &\mathbf{y}_{8,\bar{\mathbf{x}}}^{\ell} = \mathbf{y}(\bar{\mathbf{x}}) + \ell\mathbf{R}(\bar{\mathbf{x}})\big[ \mathbf{t}_4(\theta(\bar{\mathbf{x}})) + \mathbf{t}_1(\theta(\bar{\mathbf{x}}))\big].
    \end{aligned}
\end{equation}
Applying these formulas  to all the $\bar{\mathbf{x}}$-cells defined by $\Omega^{\ell}$ produces a cell-based ansatz $\{ \mathbf{y}^{\ell}_{k,\bar{\mathbf{x}}}\}$.  Each cell deforms as a perfect mechanism informed by the fields $\mathbf{y}, \theta$ and $\mathbf{R}$. However, intercell gaps generically arise since we do not enforce any compatibility conditions at the cell boundaries. Fig.\;\ref{Fig:OriConstruct}(b) and (c) qualitatively sketches this ansatz and its gaps.   

The final step of the construction is to average the gaps, and thus replace the cell-based ansatz $\{ \mathbf{y}_{k,\bar{\mathbf{x}}}^{\ell}\}$ by a vertex based one $\{\mathbf{y}_{i,j}^{\ell}\}$. This is basically an accounting problem, involving a number of distinct cases. We detail one of them to explain the main idea. Let $(i',j')  \in \mathbb{Z}^2$ index the one vertex that belongs to all four cells  in Fig.\;\ref{Fig:OriConstruct}(b), as shown in Fig.\;\ref{Fig:OriConstruct}(d). The cell based construction in Fig.\;\ref{Fig:OriConstruct}(c) maps this point to (potentially) four distinct points by via the cell formulas in (\ref{eq:cellBaased}).  We now simply average those points to close the gaps 
\begin{equation}
    \begin{aligned}
        \mathbf{y}_{i',j'}^{\ell} = \frac{1}{4} \big(\mathbf{y}_{5,\bar{\mathbf{x}}}^{\ell} + \mathbf{y}_{6,\bar{\mathbf{x}}+ \ell \tilde{\mathbf{u}}_0}^{\ell} + \mathbf{y}_{8,\bar{\mathbf{x}}+ \ell \tilde{\mathbf{v}}_0}^{\ell} + \mathbf{y}_{7,\bar{\mathbf{x}}+ \ell (\tilde{\mathbf{u}}_0 +   \tilde{\mathbf{v}}_0)}^{\ell} \big) 
    \end{aligned}
\end{equation}
A similar strategy is employed to close all the gaps in the pattern, completing the construction. 

Note that we have added panel strain by this averaging step. However, the strain should be quite small. Basically, it should be imperceptible to the naked eye, so long as the cell size  $\ell$ is $\ll$   than the characteristic length of the domain $L_{\Omega}$ and the  fields   $(\mathbf{y},\theta)$  satisfy or approximately satisfy (\ref{eq:purelyGeom}). For those interested,  a rigorous justification of this statement is one of the main topics of  our prior work  \cite{xu2024derivation}. 

\bibliographystyle{plain}
\bibliography{bib}

\end{document}